\theoremstyle{plain}
\newtheorem{theorem}{Theorem}[section]
\newtheorem{lemma}[theorem]{Lemma}
\newtheorem{corollary}[theorem]{Corollary}
\newtheorem{proposition}[theorem]{Proposition}
\newtheorem{theoremm}{Theorem}
\theoremstyle{definition}
\newtheorem{definition}[theorem]{Definition}
\newtheorem{example}[theorem]{Example}
\newtheorem{myexample}{Example}
\newtheorem{myexampleA}{Example}
\newtheorem{myexampleB}{Example}
\newtheorem{myexampleC}{Example}
\newtheorem{myexampleD}{Example}
\newtheorem{myexampleE}{Example}
\newcommand{~}{\;}
\newcommand{\NatNum}{\mathbb{N}}
\newcommand{\NatType}{\mathbf{N}}
\newcommand{\UnitType}{\mathbf{1}}
\newcommand{\TypeOne}{\tau}
\newcommand{\TypeTwo}{\rho}
\newcommand{\tis}[2]{#1[{\in #2}]}
\newcommand{\Rise}[2]{{{#1}^{\uparrow} #2}}
\newcommand{\prebas}{\preccurlyeq}
\newcommand{\modbel}{\curlyeqprec}
\newcommand{\LogNeg}{\mathcal{V}}
\newcommand{\LogPos}{\mathcal{V}^+}
\newcommand{\PureLogNeg}{\mathcal{F}}
\newcommand{\PureLogPos}{\mathcal{F}^+}
\newcommand{\denote}[1]{\llbracket #1 \rrbracket}
\newcommand{\Denote}[1]{\left\llbracket #1 \right\rrbracket}
\newcommand{\PrePos}{\sqsubseteq_{\LogPos}}
\newcommand{\PreGen}{\sqsubseteq_{\LogNeg}}
\newcommand{\Obser}{\mathcal{O}}
\newcommand{\Dtrees}{TT\UnitType}
\newcommand{\Treef}{\textit{CF}_{\LogPos}\,(\UnitType)}
\newcommand{\simil}{\sqsubseteq_{s}}
\newcommand{\bisim}{\sqsubseteq_{b}}
\newcommand{\Chi}{{\chi}}
\begin{document}

\title{Behavioural Equivalence via Modalities for Algebraic Effects
}

\author{
	Alex Simpson\footnote{Email:\! Alex.Simpson@fmf.uni-lj.si. Supported by the Slovenian Research Agency, research core funding No.\ P1--0294.}
	\quad \quad \quad Niels Voorneveld\footnote{Email:\! Niels.Voorneveld@fmf.uni-lj.si. This material is based upon work supported by the Air Force Office of Scientific Research under award number FA9550-17-1-0326. This project has received funding from the European Union’s Horizon 2020 research and innovation programme under the Marie Skłodowska-Curie grant agreement No 731143.}\\~\\
	University of Ljubljana\\
	Ljubljana, Slovenia.
}
\date{}

\maketitle
\begin{abstract}
	The paper investigates behavioural equivalence between programs in a call-by-value functional language extended with a signature of (algebraic) effect-triggering operations. Two programs are considered as being behaviourally equivalent if they enjoy the same behavioural properties. To formulate this, we define a logic whose formulas specify behavioural properties. A crucial ingredient is a collection of \emph{modalities} expressing effect-specific aspects of behaviour. We give a general theory of such modalities. If two conditions, \emph{openness} and \emph{decomposability}, are satisfied by the modalities then the logically specified behavioural equivalence coincides with a modality-defined notion of applicative bisimilarity, which can be proven to be a congruence by a generalisation of Howe's method. We show that the openness and decomposability conditions hold for several examples of algebraic effects: nondeterminism, probabilistic choice, global store and input/output. 
\end{abstract}

\section{Introduction}\label{section:introduction}

Many tasks in software development and analysis rely on abstracting away from 
program syntax to an appropriate notion of program \emph{behaviour}. For example, the goal of  \emph{specification} is to specify (constraints on) the behaviour of a program. Similarly, \emph{verification} concerns validating that a program indeed exhibits  the behaviour specified.
Closely associated with the general concept of \emph{behaviour} is the related 
concept of \emph{behavioural equivalence}, under which two programs are deemed equivalent if they exhibit the same behaviour.  

Studying the interrelated concepts of \emph{behaviour} and  \emph{equivalence}  is important from both theoretical and practical perspectives. On the theoretical side, these are fundamental notions, whose understanding sheds light not just on  any particular programming language under consideration, but more generally on the question of how to mathematically model the process of computation. On the practical side, the engineering tasks of program specification, verification and synthesis all depend on having a precise mathematical model of program behaviour; and notions of program equivalence play a key role in applications, such as compiler optimisation, that involve program transformation. 

For  applications such as those described above to be possible, it is crucial that the mathematical notion of behaviour is appropriately chosen. For example, in the case of compiler transformations, it is essential that aspects of computational behaviour, such as execution time, that one specifically does not want preserved, are ignored. Whereas, for other applications, for example ones in which resources need to be quantified, it may be important to have a notion of behaviour in which execution time is taken into account. In general, therefore, there is no single all-encompassing approach to  defining behaviour and equivalence. 
Nonetheless, as general desirability criteria,  one would like to have definitions that are, 
on the one hand, mathematically natural and convenient to work with and, on the other,  suitable for practical applications. 

In the present paper, we explore and relate two complementary methodologies for  \emph{defining} notions of behaviour and equivalence, within a particular programming context: \emph{call-by-value functional programming with effects}. The methodologies themselves make sense, however, within a more-or-less arbitrary programming context, so we introduce and motivate them at this greater generality.

\paragraph{Behavioural logic} The first methodology is to specify \emph{program behaviour} via a formal logic, which we call a \emph{behavioural logic}, 
whose formulas $\phi$ are constructed using operators that express primitive properties of program behaviour. 
Mathematically, one defines a \emph{satisfaction relation} $M \models \phi$, expressing that program $M$ satisfies behavioural property $\phi$. The idea is that the logic should be designed in such a way that its formulas $\phi$ 
are capable of expressing 
all properties of programs that are \emph{bona fide} properties of program behaviour (as opposed to, for example,  properties of program syntax).

Given such a program logic, one derives a corresponding notion of behavioural equivalence.
Two programs $M,N$ are said be \emph{logically equivalent}
	if, for all formulas $\phi$, it holds that $M \models \phi$ iff $N \models \phi$; that is if they exhibit the same behaviour.
	
\paragraph{Bisimilarity} It is a remarkably general phenomenon that, in numerous computation contexts, program 
behaviour can be  modelled as an interactive process, leading to a natural coinductive definition of program equivalence as 
\emph{bisimilarity}: roughly speaking,  the largest (equivalence) relation that relates
interaction points only if their local behaviour is indistinguishable modulo the relation.

Having given such a mathematical definition of bisimilarity, one can derive an associated  notion of {behavioural property}. Namely, a property of programs is behavioural if it respects bisimilarity; that is, whenever a program $M$ satisfies the property, then so does any bisimilar program $M'$.

\medskip

The above complementary approaches to defining behaviour and equivalence have been particularly prominent in concurrency theory. Indeed, the idea of bisimilarity as a notion of behavioural equivalence between systems was first introduced in that context, in the work of Milner and Park~\cite{Milner82,Park81}. The logical approach to defining behaviour
emerged around the same time, with the characterisation, by Hennessy and Milner, of bisimilarity as the behavioural equivalence induced by  an infinitary propositional modal logic, now known as 
\emph{Hennessy-Milner logic}~\cite{Henessy85}.

In the case of bisimilarity, Abramsky realised that a similar style of definition generalises to other programming contexts. In particular, he developed the notion of \emph{applicative bisimilarity} for functional languages \cite{Abramsky90}. Subsequently, numerous variant notions of bisimilarity have been given across a plethora of computational contexts (for example, \cite{Sangiorgi_book, Sangiorgi:2011, Bisim_object, Lassen99}). 
To highlight one recent example, which is  important for the present paper, Dal Lago, Gavazzo and Levy have provided a uniform generalisation of {applicative bisimilarity} to a functional programming language with effects~\cite{Relational}.

A major goal of the present paper is to show that the 
logical approach to defining program behaviour  can also be adapted very naturally to the context of  functional programming languages with effects. In doing so, we establish that the corresponding behavioural equivalence coincides with effectful applicative bisimilarity in the style of Dal Lago \emph{et al.}~\cite{Relational}.

More precisely, we consider a typed call-by-value functional programming language with \emph{algebraic effects} in the sense of Plotkin and Power \cite{effect}. Broadly speaking, effects are those aspects of computation that involve a program interacting with its `environment'; for example: nondeterminism, probabilistic choice (in both cases, the choice is deferred to the environment); input/output; mutable store (the machine state is modified); control operations such as exceptions, jumps and handlers  (which interact with the continuation in the evaluation process); etc. Such general effects collectively enjoy common properties identified in the work of Moggi on monads \cite{monad}. Among them, algebraic effects play a special role. They can be  included in a programming language by adding  effect-triggering operations, whose `algebraic' nature means that effects act independently of the continuation. From the aforementioned examples of effects, only jumps and handlers are non-algebraic. Thus the notion of algebraic effect covers a broad range of effectful computational behaviour.  Call-by-value functional languages provide a natural context for exploring effectful programming. From a theoretical viewpoint, other programming paradigms are subsumed; for example, 
imperative programs can be recast as effectful functional ones. From a practical viewpoint, the combination of effects with call-by-value
leads to the natural programming style supported by impure functional languages such as OCaml.

In order to focus on the main contributions of the paper (the behavioural logic and its  induced behavioural equivalence), we 
instantiate ``call-by-value functional language with algebraic effects'' using a very simple language. 
Our language is 
a simply-typed $\lambda$-calculus with a base type 
of natural numbers, general recursion, call-by-value function evaluation, and algebraic effects.
That is, it is a call-by-value version of PCF \cite{PCF}, extended with effects. A very similar language is used by Plotkin and Power~\cite{effect};
although, for technical convenience, we adopt an alternative (but equivalent) formulation using
\emph{fine-grained} call-by-value \cite{CBV}. 
The language is defined precisely in Section~\ref{section:language}, using an operational semantics that evaluates programs to \emph{effect trees}~\cite{effect,op_meta}.

Section~\ref{section:logic} introduces the behavioural logic. In our impure functional setting, the evaluation of a program of type $\tau$
results in a computational process that may or may not invoke effects, and which may or may not terminate with a return \emph{value} of type $\tau$. 
The key ingredient in our logic is an effect-specific family $\mathcal{O}$ of \emph{modalities}, where each modality $o \in \mathcal{O}$ converts a property $\phi$ of values of some given type $\tau$ to a property $o\,\phi$ of general programs (called \emph{computations}) of type $\tau$.
The idea is that such modalities capture all relevant effect-specific behavioural properties of the effects under consideration.
For example, in the context of probabilistic computation, we have a modality $\mathsf{P}_{\!>q}$ for every rational $q \in [0,1)$, where the formula
$\mathsf{P}_{\!>q}\,\phi$ is satisfied by a computation 
if the evaluation of the computation has a probability greater than $q$ of terminating with a return value 
satisfying $\phi$. 

A main contribution of the paper is to give a general framework for defining such effect modalities, applicable across a wide range of algebraic effects.
The technical setting for this is that we have a signature $\Sigma$ of effect operations, which determines the programming language, and a  collection $\mathcal{O}$ of modalities, which determines the behavioural logic. In order to specify the semantics of the logic, we require
each modality to be assigned a set of unit-type effect trees, which defines the meaning of the modality. 
For example, 
the modality $\mathsf{P}_{\!>q}$ is specified by the set of effect trees that have probability greater than $q$ of terminating if considered as a Markov chain. 
Several further examples and a detailed general explanation are given in Section~\ref{section:logic}.

In Section~\ref{section:preorder}, we consider the relation of \emph{behavioural equivalence} between programs
determined by the logic. 
This equivalence directly relates to the notion of behaviour that is made explicit by the modalities. 
A fundamental well-behavedness property that any reasonable program equivalence should enjoy is that it
should be a \emph{congruence} with respect to the syntactic constructs of the programming language.
(If an equivalence is not a congruence then there are aspects of program behaviour that are not distinguished by the equivalence but which can be separated within the programming language itself.) 
The major result of the paper  about the logically induced behavioural equivalence is that it is indeed a congruence, as long as 
two conditions, \emph{openness} and \emph{decomposability},  hold of
the collection $\mathcal{O}$ of modalities (Theorem~\ref{Main_Theorem}). These two conditions do indeed hold for the natural sets  $\mathcal{O}$ of modalities associated with the principal examples of algebraic effects.

As a second indication of the reasonableness  of the logically defined behavioural equivalence, we  establish that it has an alternative characterisation  as an effect-sensitive version of Abramsky's notion of 
\emph{applicative bisimilarity}~\cite{Abramsky90}. This is achieved, in Section~\ref{section:similarity}, by using the modalities $\mathcal{O}$ in a logic-free context to define a relation of \emph{applicative $\mathcal{O}$-bisimilarity}. The resulting equivalence relation is closely related to the effect-sensitive version
of bisimilarity 
developed by Dal Lago \emph{et al.}~\cite{Relational} for an untyped  language with general algebraic effects. Our use of modalities as a uniform method for  generating the bisimilarity relation is, however, novel. Theorem~\ref{log_is_sim} shows that applicative $\mathcal{O}$-bisimilarity coincides with the logically defined relation of behavioural equivalence.

In addition to its conceptual value, establishing the coincidence of the logically defined equivalence with applicative $\mathcal{O}$-bisimilarity also serves an important technical purpose: this result is used crucially  in the proof of Theorem~\ref{Main_Theorem}.
We adapt the well-known proof method of Howe~\cite{How89,How} to show that applicative $\mathcal{O}$-bisimilarity is a congruence. By the coincidence theorem, this means that the logically defined equivalence is indeed a congruence; establishing Theorem~\ref{Main_Theorem}.
Our adaptation of Howe's method is presented in Section~\ref{section:Howe}. 
Although the argument is technically involved, a similar adaptation of Howe's method to  a language with algebraic effects has been previously given by Dal Lago \emph{et al.}~\cite{Relational}. 
Accordingly, this proof is not a main contribution of the present paper, 
and we give only an outline argument in the main body of the paper, with details deferred to Appendix~\ref{section:proofs}.

In the discussion thus far we have ignored one nuance within the development of Sections \ref{section:logic}--\ref{section:Howe}.
In addition to working with the full behavioural logic, we also identify a \emph{positive fragment} of the logic, which omits negation. Whereas the full logic defines an equivalence relation that coincides with applicative bisimilarity, the positive logic defines a \emph{preorder} between programs that coincides with the coarser relation of \emph{applicative similarity}, from which it follows  that the equivalence relation determined by the positive logic coincides with
\emph{mutual applicative similarity}. In Section~\ref{section:contextual} we compare the above equivalence relations 
and preorders, and we relate them to  \emph{contextual preorder} and \emph{equivalence}, which can also be naturally defined in terms of  the set $\mathcal{O}$ of modalities. In general, applicative bisimilarity is contained in mutual applicative similarity which is contained in contextual equivalence. 
For some effects, however, these inclusions are strict. For example, as has been shown by
Lassen~\cite{Lassen} and Ong~\cite{Ong}, there are examples involving nondeterminism that separate the equivalences.
For the sake of completeness, we recall such separating examples in Section~\ref{section:contextual}, with the novelty that we use the logical characterisations of bisimilarity and mutual similarity  as a tool for proving  and disproving the 
equivalences in question.

In cases in which the equivalences differ, there is a question of which (if any) of the equivalences should be taken as being the most fundamental. In the literature, {contextual equivalence} is often taken as the equivalence of choice
for  applicative languages. From this viewpoint, bisimilarity and mutual similarity are considered important for  providing sound  (but incomplete) proof methods for reasoning about contextual equivalence, which is the relation of ultimate interest. We would like to argue instead that the logical viewpoint makes a strong case for considering bisimilarity as being the primary equivalence. 
Every formula $\phi$  in the  logic in this paper states a meaningful property of program behaviour. 
It is a truism that, when two programs are not bisimilar, there is some  property of behaviour, expressible in the logic, that distinguishes them. If one takes the logic seriously, then the desirable property that 
$M \models \phi$ and $M \equiv N$ imply $N \models \phi$ holds only in the case that the equivalence relation is bisimilarity, because of its characterisation as the logically induced equivalence.

The above (almost trivial) argument of course relies on accepting arbitrary formulas $\phi$ as expressing behaviourally meaningful properties. Readers can read Section~\ref{section:logic} and make up their own minds about this. Alternatively, one might take a  more pragmatic viewpoint. Another reason for accepting a logic for expressing properties of programs is if it provides the expressivity needed to formulate proof principles for reasoning about programs. In Section~\ref{section:reasoning}, we show that our logic does indeed support such principles. Furthermore, they arise in the form of  \emph{compositional} proof rules that allow properties of a compound program to be proved by establishing appropriate properties of its constituent subprograms. In order to achieve this, we provide, in Section~\ref{section:pure}, a reformulation of our behavioural logic, which is of interest in its own right. This reformulation enjoys the property that the syntax of the logic is independent of the syntax of the programming language, which is not true of the logic of Section~\ref{section:logic}. This property is 
appealing, as one would like to be able to specify behaviour without knowing the syntax in which programs are written. More practically, the reformulation is used to formulate, in Section~\ref{section:reasoning}, the compositional proof principles referred to above. There is one significant qualification to add to this perspective.
The infinitary propositional logics considered in the present paper are by no means suitable for serving as practical logics for
specification and verification. Nevertheless, we view these logics as relevant to the development of more practical
non-propositional but finitary logics, as they can potentially act as low-level `target' logics into which high-level practical logics can be `compiled'. We elaborate further on this point in  Section~\ref{section:reasoning}. However, the development of practical logics is left as a topic for future work.

Finally, in Section~\ref{section:conclusion} we discuss other related and further work.

This paper is an extended and revised
version of a conference paper \cite{ESOP}, presented at the \emph{27th European Symposium on Programming (ESOP)}. One  principal difference from the conference version is that full proofs are included. Other extensions include: normal form results for the logic (Propositions~\ref{proposition:logA} and~\ref{proposition:logB});
a significantly  expanded presentation of the crucial \emph{decomposability} properties in Section~\ref{section:preorder};
explicit descriptions of the $\mathcal{O}$-relators determined by our running examples in Section~\ref{section:similarity}; 
the comparison between bisimilarity, similarity and contextual equivalence in Section~\ref{section:contextual};
and the discussion of compositional reasoning principles in Section~\ref{section:reasoning}.

\section{A simple programming language}\label{section:language}

As motivated in the introduction, our chosen base language is a simply-typed call-by-value functional language with general recursion and a ground type of natural numbers, to which we add (algebraic) effect-triggering operations. This means that our language is a call-by-value variant of PCF~\cite{PCF}, extended with algebraic effects, resulting in a language similar to the one considered in Plotkin and Power~\cite{effect}. In order to simplify the technical treatment of the language, we present it in the style of \emph{fine-grained call-by-value}~\cite{CBV}. This means that we make a syntactic distinction between \emph{values} and \emph{computations}, separating
the static and dynamic aspects of the language respectively. Furthermore, all \emph{sequencing} of computations is performed using
a single language construct, the $\textbf{let}$ construct. The resulting  language is straightforwardly intertranslatable with the more traditional call-by-value formulation. 
But the encapsulation of all sequencing within a single construct has the benefit of avoiding redundancy in proofs. 

Our types are just the simple types obtained by iterating the function type construction over two base types:
$\NatType$  of natural numbers, and also a unit type $\bf{1}$.

\vspace{1mm}
\noindent
\textbf{Types}: $\TypeOne,\TypeTwo ::= ~ \bf{1} \mid \NatType \mid \TypeTwo \rightarrow \TypeOne$

\vspace{1mm}
\noindent
\textbf{Contexts}: $\Gamma ::= ~ \emptyset \mid \Gamma, \,x : \TypeOne$

\vspace{1mm}
\noindent
As usual, term variables $x$ are taken from a countably-infinite stock of such variables, and the context $\Gamma, \, x : \TypeOne$ can only be formed if the variable $x$ does not already appear in $\Gamma$.

As discussed above, program terms are separated into two mutually defined but disjoint categories: \emph{values} and \emph{computations}.

\vspace{1mm}
\noindent
\textbf{Values}: $V,W ::=  ~ * \mid Z \mid S(V) \mid \lambda x : \TypeOne.M \mid x$

\vspace{1mm}
\noindent
\textbf{Computations}: $M,N ::= VW \mid \textbf{return } V \mid \textbf{let } M \Rightarrow x \textbf{ in } N \mid  \textbf{fix}(V)$ 

$\hspace{30mm} \mid \textbf{case } V \textbf{ in } \{Z \Rightarrow M, S(x) \Rightarrow N\}$

\vspace{1mm}
\noindent
Here, $*$ is the unique value of the unit type. The values of the type of natural numbers are the \emph{numerals} represented using zero $Z$ and successor $S$. The values of function types are the $\lambda$-abstractions. And a variable $x$ can be considered a value, because, under the call-by-value evaluation strategy of the language, it can only be instantiated with a value. 

The computations are: function application  $VW$; the computation that does nothing but  return a value $V$;
a $\textbf{let}$ construct for sequencing; a $\textbf{fix}$ construct for recursive definition\footnote{By defining $\textbf{fix}\, V$ to be a computation which reduces to a lambda term, it holds that the only values of function types are lambda terms.}; and a $\textbf{case}$ construct that branches according to whether its natural-number argument is zero or positive. The  
computation  `$\textbf{let } M \Rightarrow x \textbf{ in } N$' 
implements sequencing in the following sense. First the computation $M$ is evaluated. Only in  the case that the evaluation of $M$ terminates, with return value $V$, does the thread of execution continue to $N$. In this case, the computation $N[V/x]$ is evaluated, and its return value (if any) is then returned as the result of the $\textbf{let }$ computation.

To the {pure} functional language described above, we add \emph{effect operations}. The collection of effect operations is specified by a set $\Sigma$ (the \emph{signature}) of such operations, together with, for each $\sigma \in \Sigma$ an associated \emph{arity} which takes one of the four forms below\footnote{These four forms of arity suffice for the examples we consider. Similar choices were made in \cite{op_meta, Plotkin:2009}. Going beyond such \emph{discrete} arities is an interesting direction for future research; see Section~\ref{section:conclusion}.}
\[\alpha^n \rightarrow \alpha ~~~~~ \NatType \times \alpha^n \rightarrow \alpha
  ~~~~~ \alpha^{\NatType} \rightarrow \alpha ~~~~~ \NatType \times \alpha^{\NatType} \rightarrow \alpha 
  \enspace .\]
The notation here is chosen to be suggestive of the way in which such arities are used in the typing rules of Fig.~\ref{fig:typ}, viewing $\alpha$ as a type variable. Each of the forms of arity has an associated term constructor, for building additional computation terms, with which we extend the above grammar for computation terms. 

\vspace{1mm}
\noindent
\textbf{Computations}: ~$M,N ::=  \dots \mid \sigma(M_0,M_1,\dots,M_{n-1}) \mid \sigma(V;M_0,M_1,\dots,M_{n-1}) \mid \sigma(V) \mid \sigma(W;V)$
\vspace{1mm}

\noindent 
Motivating examples of effect operations and their computation terms can be found in Examples~\ref{example:pure}--\ref{example:io} below.

The typing rules for the language are given in Figure \ref{fig:typ}. Note that the choice of typing rule for an effect operation $\sigma \in \Sigma$ depends on its declared arity.

\begin{figure}
\[
\frac{}{\Gamma, x:\TypeOne \vdash x:\TypeOne} \quad \quad
\frac{}{\Gamma \vdash *:\bf{1}}  \quad \quad
\frac{}{\Gamma \vdash Z : \NatType} \quad \quad
\frac{\Gamma \vdash V:\NatType}{\Gamma \vdash S(V):\NatType}
\]
\[
\frac{\Gamma \vdash V : \TypeOne}{\Gamma \vdash \textbf{return}(V) : \TypeOne} \quad \quad 
\frac{\Gamma, x:\TypeOne \vdash M:\TypeTwo}{\Gamma \vdash (\lambda x:\TypeOne.M):\TypeOne \rightarrow \TypeTwo}
\]
\[
\frac{\Gamma \vdash V:\TypeOne \rightarrow \TypeTwo \quad \quad \Gamma \vdash W:\TypeOne}{\Gamma \vdash V W:\TypeTwo} \quad \quad
\frac{\Gamma \vdash V: (\TypeOne \rightarrow \TypeTwo) \rightarrow (\TypeOne \rightarrow \TypeTwo)}{\Gamma \vdash \text{\textbf{fix}}(V): \TypeOne \rightarrow \TypeTwo}
\]
\[
\frac{\Gamma \vdash V:\NatType \quad \quad \Gamma \vdash M:\TypeOne \quad \quad \Gamma, x:\NatType \vdash N:\TypeOne}{\Gamma \vdash \text{ \textbf{case} } V \text{ \textbf{of} } \{Z \Rightarrow M; S(x) \Rightarrow N\} : \TypeOne} \quad
\frac{\Gamma \vdash M:\TypeOne  \quad \quad \Gamma, x:\TypeOne \vdash N:\TypeTwo}{\Gamma \vdash \textbf{let } M \Rightarrow x \textbf{ in } N : \TypeTwo}	
\]
\[
\frac{\sigma : \alpha^n \rightarrow \alpha \quad \quad \Gamma \vdash M_i:\TypeOne}{\Gamma \vdash \sigma(M_0,M_1,\dots,M_{n-1}):\TypeOne}  \quad \quad 
\frac{\sigma : \alpha^{\NatType} \rightarrow \alpha  \quad \quad \Gamma \vdash V: \NatType \rightarrow \TypeOne}{\Gamma \vdash \sigma(V):\TypeOne}
\]
\[
\frac{\sigma : \NatType \times \alpha^n \rightarrow \alpha \quad \quad \Gamma \vdash V : \NatType \quad \quad \Gamma \vdash M_i:\TypeOne}{\Gamma \vdash \sigma(V;M_0,M_1,\dots,M_{n-1}):\TypeOne}
\]
\[
\frac{\sigma : \NatType \times \alpha^{\NatType} \rightarrow \alpha \quad \quad \Gamma \vdash V : \NatType \quad \quad \Gamma \vdash W:\NatType \rightarrow \TypeOne}{\Gamma \vdash \sigma(V;W):\TypeOne}
\]
\caption{Typing rules}
\label{fig:typ}
\end{figure}

The terms of type $\TypeOne$ are the values and computations generated by the constructors above. Every term has a unique \emph{aspect} as either a  value or computation. We write $\textit{Val}(\TypeOne)$ and $\textit{Com}(\TypeOne)$ respectively for closed values and computations. So the closed terms of $\TypeOne$ are $\textit{Term}(\TypeOne) := \textit{Val}(\TypeOne) \cup \textit{Com}(\TypeOne)$. For $n \in \NatNum$ a natural number, we write $\overline{n}$ for the numeral $S^n(Z)$. Thus $\textit{Val}(\NatType) = \{\overline{n} \,|\, n \in \NatNum\}$. 

We now consider some illustrative signatures of computationally interesting effect operations, which will be used as running examples throughout the paper. (We use the same examples as in Johann \emph{et al.}~\cite{op_meta}.)

\setcounter{myexample}{-1}
\begin{myexample}[Pure functional computation]
\label{example:pure}
This is the trivial case (from an effect point of view) in which the signature $\Sigma$ of effect operations is empty. The resulting language is a fine-grained call-by-value variant
of PCF~\cite{PCF}.
\end{myexample}

\begin{myexample}[Error]
\label{example:error}
	We take a set of error labels $E$. For each $e \in E$ there is an effect operation $\textit{raise}_e: \alpha^0 \rightarrow \alpha$ which,
	when invoked by the computation   $\textit{raise}_e()$, aborts evaluation and outputs `$e$' as an error message. 
\end{myexample}
\begin{myexample}[Nondeterminism]
	\label{example:non}
	There is a binary choice operation $\textit{or}: \alpha^2 \rightarrow \alpha$ which gives two options for continuing the computation.
	The choice of continuation is under the control of some external agent, which one may wish to model as being cooperative (\emph{angelic}), antagonistic (\emph{demonic}), or \emph{neutral}.
\end{myexample}
\begin{myexample}[Probabilistic choice]
	\label{example:prob}
	Again there is a single binary choice operation $\textit{p-or}: \alpha^2 \rightarrow \alpha$ which gives two options for continuing the computation.
	In this case, the choice of continuation is probabilistic, with a $\frac{1}{2}$ probability of either option being chosen. Other weighted probabilistic choices can be programmed in terms of this fair choice operation.
\end{myexample}
\begin{myexample}[Global store]
	\label{example:gs}
	We take a set of locations $L$ for storing natural numbers. For each $l \in L$ we have $\textit{lookup}_l: \alpha^{\NatType} \rightarrow \alpha$ and $\textit{update}_l: \NatType \times \alpha \rightarrow \alpha$. The computation $\textit{lookup}_l(V)$ looks up the number at location $l$ and passes it as an argument to the function $V$, and $\textit{update}_l(\overline{n};M)$ stores $n$ at $l$ and then continues with the computation $M$.
\end{myexample}
\begin{myexample}[Input/output]
\label{example:io}
	Here we have two operations, $\textit{read}: \alpha^{\NatType} \to \alpha $ which reads a number from an input channel and passes it as the argument to a function, and $\textit{write}: \NatType \times \alpha \rightarrow \alpha$ which outputs a number (the first argument) and then continues with the computation given as the second argument.
\end{myexample}

We next present an operational semantics for our language, under  which a computation term evaluates to an \emph{effect tree}: essentially, a coinductively generated term using operations from $\Sigma$, and with values
and $\bot$ (nontermination) as the generators. This idea appears in Plotkin and Power~\cite{effect}, and our technical treatment follows the approach of Johann \emph{et al.}~\cite{op_meta}, adapted to (fine-grained) call-by-value.

We define a single-step reduction relation $\rightarrowtail$ between configurations $(S,M)$ consisting of a stack $S$ and a computation $M$. The computation $M$ is the term under current evaluation. The stack $S$ represents a continuation computation awaiting the termination of $M$. First, we define a stack-independent reduction relation on computation terms that do not involve $\textbf{let}$ at the top level.
\begin{align*}
& (\lambda x : \TypeOne.M) \, V ~ \rightsquigarrow ~  M[V/x]
\\
& \text{ \textbf{case} } Z \text{ \textbf{of} } \{Z \Rightarrow M_1; S(x) \Rightarrow M_2\} ~ \rightsquigarrow ~ M_1
\\
& \text{ \textbf{case} } S(V) \text{ \textbf{of} } \{Z \Rightarrow M_1; S(x) \Rightarrow M_2\} \quad \rightsquigarrow \quad M_2[V/x] 
\\
& \text{ \textbf{fix}}(F) \quad \rightsquigarrow \quad \textbf{return } \lambda x : \TypeOne. \textbf{let } F\,(\lambda y:\TypeOne.\textbf{let fix } F \Rightarrow z \textbf{ in } zy) \Rightarrow w \textbf{ in } wx
\end{align*}

\noindent
The behaviour  of $ \textbf{let}$ is implemented using a system of stacks where:

\vspace{1mm}
\noindent
\textbf{Stacks}: $S ::= ~ \textit{id} \mid S \circ (\textbf{let } (-) \Rightarrow x \textbf{ in } M)$
\vspace{1mm}

\noindent
We write $S\{N\}$ for the computation term obtained by `applying' the stack $S$ to $N$, defined by:
\begin{align*}
 & \textit{id}\, \{N\}  ~ =  ~ N \\
 & (S \circ (\textbf{let } (-) \Rightarrow x \textbf{ in } M))\,  \{N\} ~ =  ~ S\{\textbf{let } N \Rightarrow x \textbf{ in } M\}
\end{align*}
We write $\textit{Stack}(\TypeOne,\TypeTwo)$ for the set of stacks $S$ such that for any $N \in \textit{Com}(\TypeOne)$, it holds that $S\{N\}$ is a well-typed expression of type $\TypeTwo$. We define a reduction relation on pairs $\textit{Stack}(\TypeOne,\TypeTwo) \times \textit{Com}(\TypeOne)$ (denoted $(S_1,M_1) \rightarrowtail (S_2,M_2)$) by:
\begin{align*}
& (S,\textbf{let } N \Rightarrow x \textbf{ in } M) ~ \rightarrowtail ~ (S \circ (\textbf{let } (-) \Rightarrow x \textbf{ in } M),N)
\\
& (S,R) ~  \rightarrowtail ~  (S,R') \quad && \text{if $R \rightsquigarrow R'$}
\\
& (S \circ (\textbf{let } (-) \Rightarrow x \textbf{ in } M),\textbf{return } V) ~ \rightarrowtail ~(S,M[V/x])
\end{align*}

We define the notion of  \emph{effect tree} for an arbitrary set $X$, where $X$ is thought of as a set of abstract `values'.
\begin{definition}\label{definition:tree}
	An \emph{effect tree} (henceforth \emph{tree}), over a set $X$, determined by a signature $\Sigma$ of effect operations,  is a labelled and possibly infinite tree whose nodes have the possible forms:
	\begin{enumerate}
		\item A leaf node labelled with $\bot$ (the symbol for nontermination).
		\item A leaf node labelled with $x$ where $x \in X$.
		\item A node labelled $\sigma$ with children $t_0,\dots, t_{n-1}$, when $\sigma\in \Sigma$ has arity $\alpha^n \rightarrow \alpha$.
		\item A node labelled $\sigma$ with children $(t_i)_{i \in \NatNum}$, when $\sigma\in \Sigma$ has arity $\alpha^{\NatType} \rightarrow \alpha$.		
		\item A node labelled $\sigma_m$ where $m \in \NatNum$ with children $t_0,\dots, t_{n-1}$, when $\sigma\in \Sigma$ has arity $\NatType \times \alpha^n \rightarrow \alpha$.
		\item A node labelled $\sigma_m$ where $m \in \NatNum$ with children $(t_i)_{i \in \NatNum}$, when $\sigma\in \Sigma$ has arity $\NatType \times \alpha^{\NatType} \rightarrow \alpha$.
	\end{enumerate}
	\end{definition}

\noindent See Examples \ref{example:nonA} and \ref{example:gsA} later on in this section for examples of effect trees.

We write $TX$ for the set of trees over $X$.
We define a partial ordering on $TX$ where $t_1 \leq t_2$, if $t_1$ can be obtained by pruning $t_2$ by removing a possibly infinite number of subtrees of $t_2$ and putting  leaf nodes labelled $\bot$ in their place. This forms an \emph{$\omega$-complete} partial order, meaning that every ascending sequence $t_1 \leq t_2 \leq \dots$ has a least upper bound $\bigsqcup_n t_n$. Let $\textit{Tree}(\TypeOne) := T(\textit{Val}(\TypeOne))$, we will define a reduction relation from computations to such trees of values.

Given $f: X \to Y$ and a tree $t \in TX$, we write $t[x \mapsto f(x)] \in TY$ or $Tf(t) \in TY$ for the tree whose leaves $x \in X$ are renamed to $f(x)$. We have a function $\mu: TTX \to TX$, which takes a tree $r$ of trees and flattens it to a tree
$\mu r \in TX$, by taking the tree labelling each non-$\bot$ leaf of $r$ to be the subtree rooted at the corresponding node in $\mu r$.
The function $\mu$ is the multiplication associated with the monad structure of the $T$ operation. The unit of the monad is the map $\eta: X \rightarrow TX$ which takes an element $x \in X$ and returns the leaf labelled $x$ qua tree.

The operational mapping from a computation $M \in \textit{Com}(\TypeOne)$ to an effect tree is defined intuitively as follows. Start evaluating the $M$ in the empty stack $\textit{id}$, until the evaluation process  (which is deterministic) terminates.
If termination never happens the tree is $\bot$.
If the evaluation process terminates at a configuration of the form $(\textit{id}, \textbf{return } V)$ then the tree is the leaf $V$. Otherwise the evaluation process can only terminate at a configuration of the form 
$(S, \sigma(\dots))$ for some effect operation $\sigma \in \Sigma$. In this case, create an internal node in the tree of the appropriate kind (depending on  $\sigma$) and continue generating each child tree of this node by repeating the above process by evaluating an appropriate continuation computation, starting from a configuration with the current stack $S$. 

The following (somewhat technical) definition formalises the idea outlined above in a mathematically concise way.
We define a family of maps $|-,-|_{(-)} : \textit{Stack}(\TypeOne,\TypeTwo) \times \textit{Com}(\TypeOne) \times \NatNum \rightarrow \textit{Tree}(\TypeTwo)$ indexed over $\TypeOne$ and $\TypeTwo$ by:
{\small
\begin{align*}
|S,M|_0   & =   \bot \\
|S,M|_{n+1}   & = \! 
\begin{cases}
V & \text{if } S = \textit{id} \wedge M = \textbf{return } V \\
|S',M'|_n & \text{if } (S,M) \rightarrowtail (S',M') \\
\sigma(|S,M_0|_n,\dots,|S,M_{m-1}|_n) & \sigma \!: \! \alpha^m\! \rightarrow\! \alpha, M \!= \!\sigma(M_0,\dots,M_{m-1}) \\
\sigma(|S,V \overline{0}|_n,|S,V \overline{1}|_n,\dots) & \sigma \!:\! \alpha^{\NatType} \!\rightarrow \!\alpha, M \!= \!\sigma(V)\\
\sigma_k(|S,M_0|_n,\dots,|S,M_{m-1}|_n) &  \sigma \!:\! \NatType \!\times\! \alpha^m \!\rightarrow\! \alpha, M \!= \!\sigma(\overline{k}; M_0,\dots, M_{m-1}) \\
\sigma_k(|S,V \overline{0}|_n,|S,V \overline{1}|_n,\dots) & \sigma\!: \!\NatType \!\times \!\alpha^{\NatType}\! \rightarrow \!\alpha, M \!= \!\sigma(\overline{k}; V)\\
\bot & \text{otherwise}
\end{cases}
\end{align*}}
$\!$It follows that $|S,M|_n \leq |S,M|_{n+1}$ in the given ordering on trees. We write 
$$|-|_{(-)}: \textit{Com}(\TypeOne) \times \NatNum \rightarrow \textit{Tree}(\TypeOne)$$
 for the function defined by $|M|_n = |\textit{id},M|_n$. Using this we can give the operational interpretation of computation terms as effect trees
by defining $|-|: \textit{Com}(\TypeOne) \rightarrow \textit{Tree}(\TypeOne)$ by
$$|M| := \bigsqcup_n |M|_n \enspace . $$

We illustrate the above definitions with a couple of examples of effect computations and their corresponding effect trees.

\setcounter{myexampleA}{1}
\begin{myexampleA}[Nondeterminism] 
	\label{example:nonA}
	Nondeterministically generate a natural number:
	\begin{center}
		$?N := \textbf{let } \textbf{fix} (\lambda x:\mathbf{1} \to \mathbf{N}.\, \textit{or}(\lambda y:\mathbf{1}.\,Z \, , \, \lambda y:\mathbf{1}. \,\textbf{let } xy \Rightarrow z \textbf{ in } S(z))) \Rightarrow w \textbf{ in } w *$
		\\[1ex]
		$\xymatrix{ 
			& & \textit{or} \ar@{-}[ld] \ar@{-}[rd] & & & \\
			& \overline{0} & & \textit{or} \ar@{-}[ld] \ar@{-}[rd] & & \\
			|?N| = & & \overline{1} & & \textit{or} \ar@{-}[ld] \ar@{.}[rd] & \\
			& & & \overline{2} & &  \\
		}$
	\end{center}
\end{myexampleA}

\setcounter{myexampleA}{3}
\begin{myexampleA}[Global store] 
	\label{example:gsA}
	Save and load a value, returning its successor:
	\begin{center}
		$V := \lambda y : \NatType.\textit{update}_l(y;\textit{lookup}_l(\lambda x : \NatType.\, \textbf{return}(S(x)))) \quad : \quad \NatType$ $\rightarrow \NatType$
		\\[1ex]
		$\xymatrix{ 
			& & & \textit{update}_l(1) &  & \\
			|V \overline{1}| = &  &  & \textit{lookup}_l \ar@{-}[u]	&  & \\	
			& \overline{1} \ar@{-}[rru]|-{0} & \quad \overline{2} \ar@{-}[ru]|-{1} & \overline{3} \ar@{-}[u]|-{2} & \dots & \overline{n+1} \ar@{-}[llu]|-{n}\\
		}$
	\end{center}
\end{myexampleA}

In the second example above, we see that the resulting tree exhibits redundancies with respect to the expected model of computation with global store. Since the $\textit{update}_l$ operation sets the value of location $l$ to $1$, the ensuing $\textit{lookup}_l $ operation will retrieve the value $1$, and so execution will proceed down the branch labelled $1$ resulting in the return value $2$. The other infinitely many leaves  of the tree are redundant. The issue here is that the operational semantics of the language has been defined independently of any implementation model for  the effect operations. An effect tree provides a normal form that records in its nodes all effect operations that may potentially be performed during execution, and the dependencies between them. But nothing is 
stated about which effect operations will actually be performed in practice, and what effect they have if invoked. 
It is precisely this lack of specificity that allows the operational semantics to be defined in a uniform way depending only on the signature of effect operations and their arities.

In order to be able to reason about programs with effects (for example, to establish properties of or equivalences between them), it is necessary to supply the missing information about how effect operations  {behave} when executed. As motivated in the introduction, we now proceed to do this by introducing a \emph{behavioural logic} for expressing behavioural properties of our language.

\section{Behavioural logic and modalities}
\label{section:logic}

The goal of this section is to motivate and formulate a logic for expressing \emph{behavioural properties} of programs. In our language, program means (well-typed) term, and we shall be interested both in properties of \emph{computations} and in properties of \emph{values}.
Accordingly, we define a logic that contains both \emph{value formulas} and \emph{computation formulas}. We shall use lower case Greek letters $\phi,\psi,\dots$ for the former, and upper case Greek letters $\Phi,\Psi,\dots$ for the latter. Our logic will thus have two satisfaction relations
\[
V \models \phi \qquad \qquad \qquad M \models \Phi
\]
which respectively assert that ``value $V$ enjoys the value property expressed by $\phi$'' and ``computation $M$ enjoys the computation property expressed by $\Phi$''.

In order to motivate the detailed formulation  of the logic, it is useful to identify criteria that will guide the design.

\begin{description}

\item[(C1)] \label{item:meaningful} The logic should express only `behaviourally meaningful' properties of programs. This guides us to build the logic upon primitive notions that have a direct behavioural interpretation according to a natural understanding of program behaviour. 

\item[(C2)] The logic should be as expressive as possible within the constraints imposed by criterion~(C1).
\end{description}

For every type $\TypeOne$, we define a collection $\textit{VF}\,(\TypeOne)$ of \emph{value formulas}, and a collection $\textit{CF}\,(\TypeOne)$
of \emph{computation formulas}, as motivated above.

Since boolean logical connectives say nothing themselves about computational behaviour, it is a reasonable general principle that `behavioural properties' should be closed under such connectives. Thus, in keeping with  criterion (C2), which asks for maximal expressivity, we  close
each set $\textit{CF}\,(\TypeOne)$ and $\textit{VF}\,(\TypeOne)$, of computation and value formulas, under infinitary propositional logic. 
 
In addition to closure under infinitary propositional logic, each set $\textit{VF}\,(\TypeOne)$ contains a collection of \emph{basic} value formulas, from which compound formulas are constructed using (infinitary) propositional connectives.\footnote{We call such formulas \emph{basic} rather than \emph{atomic} because they include formulas such as $(V \mapsto \Phi)$, discussed below, which are built from other formulas.}
The choice of basic formulas depends on the type $\TypeOne$.
 
In the case of the natural numbers type, we include a basic value formula $\{n\} \in \textit{VF}\,(\NatType)$, for every $n \in \NatNum$. The semantics of this formula are given by:
\[V \models \{n\} ~~~ \Leftrightarrow ~~~ V = \overline{n} \enspace .\]
By the closure of $ \textit{VF}\,(\NatType)$ under infinitary disjunctions, 
every subset of $\NatNum$ can be represented by some value formula. Moreover, since a general value formula in $ \textit{VF}\,(\NatType)$ 
is an infinitary
boolean combination of basic formulas of the form $\{n\}$, every value formula corresponds to a  subset of  $\NatNum$.

For the unit type, we do not require any basic value formulas. The unit type has only one value, $*$. The 
two subsets of this singleton set of values are defined by the formulas $\bot$ (`falsum', given as an empty disjunction), and $\top$ (the truth constant, given as an empty conjunction).

For a function type $\TypeOne \to \TypeTwo$, we want each basic formula to express a fundamental behavioural constraint on values (i.e., $\lambda$-abstractions) $W$
of type $\TypeOne \to \TypeTwo$. In keeping with the applicative nature of functional programming, the only way in which a $\lambda$-abstraction can be used to 
generate behaviour is to apply it to an argument of type $\TypeOne$, which, because we are in a call-by-value setting, must be a value $V$.
The application of $W$ to $V$ results in a computation $WV$ of type $\TypeTwo$, whose properties can be probed using computation formulas in
$\textit{CF}\,(\TypeTwo)$. Based on this, for every value $V \in \textit{Val}(\TypeOne)$ and computation formula $\Phi \in \textit{CF}\,(\TypeTwo)$, we include
a basic value formula $(V \mapsto \Phi) \in \textit{VF}\,(\TypeOne \to \TypeTwo)$ with the  semantics:
\[W \models (V \mapsto \Phi) ~~~ \Leftrightarrow ~~~ {WV} \models  \Phi \enspace  .\]
Using this simple construct, based on application to a single argument $V$, other natural mechanisms for expressing properties of 
$\lambda$-abstractions are definable, using infinitary propositional logic. For example,  given $\phi \in \textit{VF}\,(\TypeOne)$ and
$\Psi \in \textit{CF}\,(\TypeTwo)$, the definition
\begin{equation}
\label{equation:define-formula}
(\phi \mapsto \Psi) ~ := ~ \bigwedge \{(V \mapsto \Psi) \mid V \in \textit{Val}(\TypeOne),\, V \models \phi\} \enspace 
\end{equation}
defines a formula whose derived semantics is
\begin{equation}
\label{equation:derived-semantics}
W \models (\phi \mapsto \Psi) ~~~ \Leftrightarrow ~~~ \forall {V \!\in\! \textit{Val}(\TypeOne)},~ V \models \phi ~\text{implies}~ {WV} \models  \Psi \enspace  .
\end{equation}
In Section~\ref{section:pure}, we shall consider the possibility of changing the basic value formulas in $\textit{VF}\,(\TypeOne \to \TypeTwo)$
to formulas $(\phi \mapsto \Psi)$. 

It remains to explain how the basic computation formulas in $\textit{CF}\,(\TypeOne)$ are formed. For this we require a given set $\Obser$ of
\emph{modalities}, which depends on the algebraic effects  contained in the language. The basic computation formulas in $\textit{CF}\,(\TypeOne)$ then have the form $o\,\phi$, where $o \in \Obser$ is one of the available modalities, and $\phi$ is a value formula in $\textit{VF}\,(\TypeOne)$.
Thus a modality lifts properties of values of type $\TypeOne$ to properties of computations of type $\TypeOne$.

In order to give semantics to computation formulas $o\, \phi$, we need a general theory of the kind of modality under consideration.
This is one of the main contributions of the paper. Before presenting the general theory, we first consider motivating examples, using our running examples of algebraic effects.

\setcounter{myexampleB}{-1}
\begin{myexampleB}[Pure functional computation]
Define $\Obser = \{{\downarrow}\}$. Here the single modality $\downarrow$ is the \emph{termination modality}: ${\downarrow\! \phi}$ asserts that a computation terminates with a return value $V$ satisfying $\phi$. This is formalised using effect trees:
\[
M \models {\downarrow \! \phi} ~~~ \Leftrightarrow ~~~ 
\text{$|M|$ is a  leaf \,$V$ and  $V \models \phi$} \enspace .
\]
Note that, in the case of pure functional computation, all trees are leaves: either value leaves $V$, or nontermination leaves $\bot$.
\end{myexampleB}

\begin{myexampleB}[Error] Define $\Obser = \{{\downarrow}\} \cup \{\mathsf{E}_e \mid e \in E\}$.
The semantics of the termination modality $\downarrow$ is defined as above. The \emph{error modality} $\mathsf{E}_e$ flags error $e$:
\[
M \models {\mathsf{E}_e \phi} ~~~ \Leftrightarrow ~~~ 
\text{$|M|$ is a node labelled with $\textit{raise}_e$} \enspace .
\]
(Because $\textit{raise}_e$ is an operation of arity $0$, a $\textit{raise}_e$ node in a tree has $0$ children.)
Note that the semantics of $\mathsf{E}_e \phi$ makes no reference to $\phi$. Indeed it would be natural to consider $\mathsf{E}_e$ as a basic computation formula in its own right, which could be done by introducing a notion of $0$-argument modality, and considering $\mathsf{E}_e$ as such. 
In this paper, however, we keep the treatment uniform by always considering modalities as unary operations, with natural $0$-argument modalities subsumed as unary modalities with a redundant argument.
\end{myexampleB}

\begin{myexampleB}[Nondeterminism] Define $\Obser = \{\Diamond, \, \Box\}$ with:
\begin{align*}
M \models \Diamond \phi ~~~ & \Leftrightarrow ~~~
  \text{$|M|$ has some leaf $V$ such that $V \models \phi$}
\\
M \models \Box \phi ~~~ & \Leftrightarrow ~~~
  \text{$|M|$ has finite height and every leaf is a value $V$ s.t.\ $V \models \phi$ \, .}
\end{align*}
Including both modalities amounts to a neutral view of nondeterminism. In the case of angelic nondeterminism, one would  include just the $\Diamond$ modality; in that of demonic nondeterminism, just the $\Box$ modality. Because of the way the semantic definitions interact with termination, the modalities $\Box$ and $\Diamond$ are not De Morgan duals. Indeed, each of the  
three possibilities $\{\Diamond, \, \Box\}, \{\Diamond\}, \{\Box\}$ for $\Obser$ leads to a logic with a different expressivity.
\end{myexampleB}

\begin{myexampleB}[Probabilistic choice] Define $\Obser = \{ \mathsf{P}_{>q} \mid q \in \mathbb{Q},\, 0 \leq q < 1\}$ with:
\begin{align*}
M \models  \mathsf{P}_{>q} \,\phi ~~~ & \Leftrightarrow ~~~ 
\mathbf{P}(\text{$|M|$ terminates with a value in $\{V \mid V \models \phi\}$}) > q\, ,
\end{align*}
where the probability on the right is the probability that a run through the tree $|M|$, starting at the root, and making an independent fair probabilistic choice at each branching node, terminates at a value node with a value $V$ in the set $\{V \mid V \models \phi\}$. We observe that the restriction to 
rational thresholds $q$ is immaterial, as, for any real $r$ with $0 \leq r < 1$, we can define:
\[
\mathsf{P}_{>r} \,\phi ~ := ~ \bigvee \{\mathsf{P}_{>q} \,\phi \mid q \in \mathbb{Q},\, r < q < 1\} \enspace .
\]
Similarly, we can define non-strict threshold modalities, for $0 < r \leq 1$, by:
\[
\mathsf{P}_{\geq r} \,\phi ~ := ~ \bigwedge \{\mathsf{P}_{>q} \,\phi \mid q \in \mathbb{Q},\, 0 \leq q < r \} \enspace .
\]
Also, we can exploit negation to define modalities expressing strict and non-strict upper bounds on probabilities.
Notwithstanding the definability of non-strict and upper-bound thresholds, we shall see later that it is important that we include
only strict lower-bound modalities
in our set $\Obser$ of  primitive modalities.
\end{myexampleB}
\begin{myexampleB}[Global store]
	\label{exampleB:gs} 
	Given the set of locations $L$, we define the set of states by $\textit{State} := \NatNum^L$. The modalities are 
$\Obser = \{(s \rightarrowtail s') \mid s,s' \in \textit{State}\}$, where  informally:
\begin{align*}
M \models  (s \rightarrowtail s')\, \phi ~~~  \Leftrightarrow ~~~  & 
\text{the execution of $M$, starting in state $s$, terminates in} \\
&  ~~~ \text{final state $s'$ with return value $V$ such that $V \models \phi$ \enspace .}
\end{align*}
We make the above definition precise using the effect tree of $M$. Define
\[\textit{exec}: TX \times\textit{State} \rightarrow X \times \textit{State} \, , \] 
for any set $X$,
to be the least partial function satisfying:
{\small
\begin{align*}
\textit{exec}(t,s) ~ =  \begin{cases}
(x,s) & \text{if $t$ is a leaf labelled with $x \in X$}
\\
\textit{exec}(t_{s(l)}, s) & \text{if $t = \textit{lookup}_l(t_0,t_1,\cdots)$ and $\textit{exec}(t_{s(l)}, s)$ is defined}
\\
\textit{exec}(t', s[l:=n]) & \text{if $t = \textit{update}_{l,n}(t')$ and $\textit{exec}(t', s[l:=n])$ is defined\, ,}
\end{cases}
\end{align*}}
$\!\!$where $s[l:=n]$ is the evident modification of state $s$. Intuitively, $\textit{exec}(t,s)$ defines the result of ``executing'' the tree of commands in 
effect tree $t$ starting in state $s$, whenever this execution terminates. In terms of operational semantics, it can be viewed as defining a `big-step' semantics for effect trees (in the signature of global store).
We can now define the semantics of the 
$ (s \rightarrowtail s')$ modality formally:
\begin{align*}
M \models  (s \rightarrowtail s')\, \phi ~~~ & \Leftrightarrow ~~~ \textit{exec}(|M|,s) = (V,s')~\text{where}~V \models \phi \enspace.
\end{align*}

\noindent
In Section~\ref{section:reasoning}, we show an example of how to encode Hoare Logic 
in the above logic.
\end{myexampleB}

\begin{myexampleB}[Input/output]
Define an \emph{i/o-trace} to be a word $w$ over the alphabet
\[
\{?n \mid n \in \NatNum\} \cup \{!n \mid n \in \NatNum\} \, .
\]
The idea is that such a word represents an input/output sequence, where $?n$ means the number $n$ is given in response to an input prompt, and 
$!n$ means that the program outputs $n$. Define the set of modalities 
\[\Obser = \{{\langle w \rangle\!\!\downarrow},\,
\langle w\rangle_{\!\dots} \mid \text{$w$ an i/o-trace}\}\, . \]
The intuitive semantics of these modalities is as follows.
\begin{align*}
M \models {\langle w \rangle\!\!\downarrow}\, \phi ~~~ \Leftrightarrow ~~~ &
\text{$w$ is a complete i/o-trace for the execution of $M$ }
\\ 
& ~~~\text{ resulting in  termination with $V$ s.t.\ $V \models \phi$ \,.}
\\[1ex]
M \models {\langle w\rangle_{\!\dots}} \, \phi ~~~ \Leftrightarrow ~~~ &
\text{$w$ is an initial i/o-trace for the execution of $M$ \,.}
\end{align*}
In order to define the semantics of formulas precisely, we first define relations $t \models  {\langle w \rangle\!\!\downarrow}\, P$  and
$t \models  {\langle w\rangle_{\!\dots}}$, between $t \in TX$ and $P \subseteq X$, by induction on words. 
(Note that we are overloading the $\models$ symbol.)
In the following, we write $\varepsilon$ for the empty word, and we use textual juxtaposition for concatenation of words.
\begin{align*}
t \models {\langle \varepsilon \rangle\!\!\downarrow}\, P ~~~ \Leftrightarrow ~~~ &
\text{$t$ is a leaf $x$ and $x \in P$}
\\ 
t \models {\langle (?n)\,w \rangle\!\!\downarrow}\, P ~~~ \Leftrightarrow ~~~ &
\text{$t = \textit{read}(t_0,t_1, \dots)$ and $t_n \models  {\langle w \rangle\!\!\downarrow}\, P$}
\\
t \models {\langle (!n)\,w \rangle\!\!\downarrow}\, P ~~~ \Leftrightarrow ~~~ &
\text{$t = \textit{write}_n(t')$ and $t' \models  {\langle w \rangle\!\!\downarrow}\, P$}
\\[1ex]
t \models {\langle \varepsilon \rangle_{\!\dots}}  ~~~ \Leftrightarrow ~~~ &
\text{true} 
\\
t \models {\langle (?n)\,w \rangle_{\!\dots}} ~~~ \Leftrightarrow ~~~ &
\text{$t = \textit{read}(t_0,t_1, \dots)$ and $t_n \models   {\langle w \rangle_{\!\dots}}$}
\\
t \models {\langle (!n)\,w \rangle_{\!\dots}} ~~~ \Leftrightarrow ~~~ &
\text{$t = \textit{write}_n(t')$ and $t' \models   {\langle w \rangle_{\!\dots}}$}
\end{align*}
The formal semantics of modalities is now easily defined by:
\begin{align*}
M \models {\langle w \rangle\!\!\downarrow}\, \phi ~~~ \Leftrightarrow ~~~ &
|M| \models {\langle w \rangle\!\!\downarrow}\, \{V \mid V \models \phi\}
\\[1ex]
M \models {\langle w\rangle_{\!\dots}} \, \phi ~~~ \Leftrightarrow ~~~ &
|M| \models {\langle w\rangle_{\!\dots}}\, .
\end{align*}
Note that, as in Example~\ref{example:error}, the formula argument of the ${\langle w\rangle_{\!\dots}}$ modality is redundant. 
Also, note that our modalities for input/output could naturally be formed by  combining the termination modality $\downarrow$, which lifts value formulas to computation formulas, with sequences of
atomic modalities $\langle?n\rangle$ and $\langle!n\rangle$ acting directly on computation formulas. In this paper, we  do not include such modalities, acting on computation formulas, in our general theory. But this is a natural avenue for future consideration.
\end{myexampleB}

We now give a formal treatment of the logic and its semantics, in full generality. We assume a signature $\Sigma$ of effect operations, as in Section~\ref{section:language}. We also assume a given set $\Obser$, whose elements we call \emph{modalities}.

We call our main behavioural logic $\LogNeg$, where the letter $\mathcal{V}$ is chosen as a reference to the fact that the basic formula at function type specifies function behaviour on individual value arguments $V$.
\begin{definition}[The logic $\LogNeg$] \label{logic}
The classes $\textit{VF}\,(\TypeOne)$ and $\textit{CF}\,(\TypeOne)$ of \emph{value} and \emph{computation formulas}, for each type $\tau$, are mutually inductively defined by the rules in Fig. \ref{figure:logneg}.
\begin{figure}[t]
\begin{gather*}
	\frac{n \in \NatNum}{\{n\} \in \textit{VF}\,(\NatType)}(1) \quad \quad
	\frac{V : \TypeOne \quad \quad \Phi \in \textit{CF}\,(\TypeTwo)}{(V \mapsto \Phi) \in \textit{VF}\,(\TypeOne \to \TypeTwo)}(2) \quad \quad
	\frac{\phi \in \textit{VF}\,(\TypeOne) \quad \quad o \in \Obser}{o \, \phi \in \textit{CF}\,(\TypeOne)}(3)
\\[1ex]
	\frac{\phi : I \rightarrow \textit{VF}\,(\TypeOne)}{\bigvee_I \phi \in \textit{VF}\,(\TypeOne)}(4) \quad \quad 
	\frac{\phi : I \rightarrow \textit{VF}\,(\TypeOne)}{\bigwedge_I \phi \in \textit{VF}\,(\TypeOne)}(5) \quad \quad
	\frac{\phi \in \textit{VF}\,(\TypeOne)}{\neg \phi \in \textit{VF}\,(\TypeOne)}(6)
\\[1ex]
	\frac{\Phi : I \rightarrow \textit{CF}\,(\TypeOne)}{\bigvee_I \Phi \in \textit{CF}\,(\TypeOne)}(7) \quad \quad 
	\frac{\Phi : I \rightarrow \textit{CF}\,(\TypeOne)}{\bigwedge_I \Phi \in \textit{CF}\,(\TypeOne)}(8) \quad \quad
	\frac{\Phi \in \textit{CF}\,(\TypeOne)}{\neg \Phi \in \textit{CF}\,(\TypeOne)}(9)
\end{gather*}
\caption{The logic $\LogNeg$}
\label{figure:logneg}
\end{figure}
In this, $I$ can be instantiated to any set, allowing for arbitrary conjunctions and disjunctions. When $I$ is $\emptyset$, we get the special formulas $\top = \bigwedge_{\emptyset}$ and $\bot = \bigvee_{\emptyset}$.
The use of arbitrary index sets means that  
formulas, as defined, form a proper class. However, we shall see below that countable index sets suffice.
\end{definition}

In order to specify the semantics of modal formulas, we require a connection between modalities and effect trees, which is given by
an interpretation function
\[
\denote{-} : \Obser \to \mathcal{P}(T \bf{1}) \, .
\]
That is, every modality $o \in \Obser$ is mapped to a subset $\denote{o} \subseteq  T\bf{1}$  of unit-type effect trees.
Given a subset $P \subseteq X$ (e.g. given by a formula) and a tree $t \in TX$ we can define a unit-type tree $\tis{t}{P} \in T\UnitType$ as the tree created by replacing the leaves of $t$ that belong to $P$ by $*$ and the others by $\bot$. 
In the case that $P$ is the subset $\{V \mid V \models \phi\}$
 specified by a formula $\phi \in \textit{VF}\,(\TypeOne)$, we also write $t[\,\models \phi]$ for $\tis{t}{P}$.

We now define the two satisfaction relations ${\models} \subseteq \textit{Val}(\TypeOne) \times \textit{VF}\,(\TypeOne)$ and
${\models} \subseteq \textit{Com}(\TypeOne) \times \textit{CF}\,(\TypeOne)$, mutually inductively, where for the basic formulas we have:
\begin{align*}
W \models \{n\} ~~~ & \Leftrightarrow ~~~ W = \overline{n} \\
W \models (V \mapsto \Phi) ~~~ &  \Leftrightarrow  ~~~ WV \models \Phi \\
M \models o\, \phi ~~~ & \Leftrightarrow ~~~ |M|\,[\,\models \phi] \in \denote{o} \,,
\end{align*}
and for the other formulas we have:
\begin{align*}
W \models \bigvee_I  \phi ~~~ & \Leftrightarrow ~~~ \exists i \in I, W \models \phi_i\\
W \models \bigwedge_I  \phi ~~~ & \Leftrightarrow ~~~ \forall i \in I, W \models \phi_i\\
W \models \neg \phi ~~~ &  \Leftrightarrow  ~~~ \neg(W \models \phi) \,.
\end{align*}
We remark that all conjunctions and disjunctions are semantically equivalent to countable ones, because value and computation formulas are interpreted 
over sets  of terms, $\textit{Val}(\TypeOne)$ and $\textit{Com}(\TypeOne)$, which are countable.

The lemma below is standard. It states that every formula in infinitary propositional logic can be written in infinitary disjunctive normal form. (It can also be written in infinitary conjunctive normal form.)
\begin{lemma}\label{lemma:normal}
	Each formula $\phi \in \LogNeg$ (value or computation) is equivalent to a formula of the form $\bigvee_I \bigwedge_J \psi$ where for each $i \in I$ and $j \in J_i$ the formula $\psi_{i,j}$ is either a basic formula 
	or the negation of a basic formula.
\end{lemma}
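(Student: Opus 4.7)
The plan is to prove the lemma by structural induction on the formation rules of $\LogNeg$ given in Figure~\ref{figure:logneg}, treating the three kinds of basic formulas, namely $\{n\}$, $(V \mapsto \Phi)$, and $o\,\phi$, as atomic. Since value and computation formulas have parallel closure rules under $\bigvee$, $\bigwedge$, and $\neg$, the induction is uniform across both categories and does not need to recurse into the computation formula $\Phi$ inside a basic value formula $(V \mapsto \Phi)$ or the value formula $\phi$ inside a modal formula $o\,\phi$. The target normal form I will produce is a single outer infinitary disjunction of infinitary conjunctions of literals, where a \emph{literal} is a basic formula or its negation.

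In the base case, any basic formula $\psi$ is already equivalent to the trivial DNF $\bigvee_{\{*\}} \bigwedge_{\{*\}} \psi$. For a disjunction $\bigvee_{k \in K} \phi_k$, the inductive hypothesis gives each $\phi_k \equiv \bigvee_{i \in I_k} \bigwedge_{j \in J_{k,i}} \psi_{k,i,j}$, and reindexing by the disjoint union $\{(k,i) \mid k \in K, \, i \in I_k\}$ immediately yields a DNF. For a conjunction $\bigwedge_{k \in K} \phi_k$, after applying the inductive hypothesis I invoke the infinitary distributive law
\[
\bigwedge_{k \in K}\,\bigvee_{i \in I_k} \chi_{k,i} ~~\equiv~~ \bigvee_{f \in \prod_k I_k}\,\bigwedge_{k \in K} \chi_{k,f(k)} \, ,
\]
which holds in classical infinitary propositional logic (and which only uses the axiom of choice to form the product index set). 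Applying this with $\chi_{k,i} := \bigwedge_{j \in J_{k,i}} \psi_{k,i,j}$ and then flattening nested conjunctions produces a DNF of literals.

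The negation case is where the only real manipulation happens. By the inductive hypothesis, $\phi \equiv \bigvee_{i \in I} \bigwedge_{j \in J_i} \psi_{i,j}$ with each $\psi_{i,j}$ a literal. Applying the infinitary De Morgan laws gives
\[
\neg \phi ~~\equiv~~ \bigwedge_{i \in I} \bigvee_{j \in J_i} \neg \psi_{i,j} \, ,
\]
and each $\neg \psi_{i,j}$ is either already a literal or a double negation $\neg \neg \beta$ of a basic formula $\beta$, which simplifies to $\beta$. I then apply the same infinitary distributive law used above to swap the outer $\bigwedge$ past the inner $\bigvee$, obtaining the DNF $\bigvee_{f \in \prod_i J_i} \bigwedge_{i \in I} \neg \psi_{i,f(i)}$.

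The only non-trivial ingredient is the infinitary distributive law, which is the main step to justify (it follows from the classical, set-theoretic interpretation of $\models$, together with the fact that satisfaction of $\bigwedge_k \bigvee_i \chi_{k,i}$ by a term corresponds exactly to selecting, for each $k$, some witness $i_k$ such that the term satisfies $\chi_{k,i_k}$, i.e., a choice function $f \in \prod_k I_k$). No further obstacle arises: the countability remark following Definition~\ref{logic} is not needed for the proof itself, only for concluding that the resulting DNF can, if desired, be taken with countable index sets, since the satisfaction relations live on countable carriers $\textit{Val}(\tau)$ and $\textit{Com}(\tau)$.
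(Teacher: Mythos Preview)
Your proposal is correct and is precisely the routine, standard argument the paper alludes to; the paper itself omits the proof entirely, stating only that it is ``both routine and standard.'' Your structural induction treating basic formulas as atoms, together with the infinitary distributive and De~Morgan laws, is exactly what is intended.
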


\noindent
We omit the proof, which is both routine and standard.

We end this section by revisiting our running examples, and observing, in each case, that the example modalities presented above are all specified by  suitable interpretation functions $\denote{-} : \Obser \to \mathcal{P}(T \bf{1})$.

\setcounter{myexampleC}{-1}
\begin{myexampleC}[Pure functional computation]
We have  $\Obser = \{{\downarrow}\}$. Define:
\[
\denote{{\downarrow}} ~ = ~ \{\, * \, \}~~~\text{(where $*$ is the tree with single node $*$)} \enspace .
\]
\end{myexampleC}

\begin{myexampleC}[Error] We have  $\Obser = \{{\downarrow}\} \cup \{\mathsf{E}_e \mid e \in E\}$. Define:
\[
\denote{{\mathsf{E}_e}} ~ = ~ \{\, \textit{raise}_e \, \}\enspace .
\]
\end{myexampleC}

\begin{myexampleC}[Nondeterminism] We have  $\Obser = \{\Diamond, \, \Box\}$. Define:
\begin{align*}
\denote{{\Diamond}} ~ & = ~ \{ t \mid \text{$t$ has some $*$ leaf} \} 
\\
\denote{{\Box}} ~ & = ~ \{ t \mid \text{$t$ is finite and every leaf is a  $*$} \} \enspace .
\end{align*}
\end{myexampleC}

\begin{myexampleC}[Probabilistic choice] $\Obser = \{ \mathsf{P}_{>q} \mid q \in \mathbb{Q},\, 0 \leq q < 1\}$. Define:
\[
\denote{\mathsf{P}_{>q}} ~ = ~ \{ t \mid 
\mathbf{P}(\,\text{$t$ terminates with a $*$ leaf}\,) > q\} \enspace .
\]
\end{myexampleC}
\begin{myexampleC}[Global store] 
$\Obser = \{(s \rightarrowtail s') \mid s,s' \in \textit{State}\}$.
Define:
\begin{align*}
\denote{(s \rightarrowtail s')}  ~& = ~ \{ t \mid  \textit{exec}(t,s) = (*,s')\} \enspace .
\end{align*}
\end{myexampleC}

\begin{myexampleC}[Input/output]
$\Obser = \{{\langle w \rangle\!\!\downarrow},\,
\langle w\rangle_{\!\dots} \mid \text{$w$ an i/o-trace}\}$. Define:
\begin{align*}
\denote{\langle w \rangle\!\!\downarrow \,}
~ & =  ~ \{ t \mid  t \models {\langle w \rangle\!\!\downarrow}\, \{*\}\, \}
\\
\denote{\langle w\rangle_{\!\dots} \,}
~ & =  ~ \{ t \mid  t \models   \langle w\rangle_{\!\dots}\, \} \enspace .
\end{align*}
\end{myexampleC}

In this section we have defined our logic expressing behavioural properties. We next proceed to derive the induced notion of \emph{behavioural equivalence} between programs, as motivated in Section~\ref{section:introduction}.

\section{Behavioural equivalence}
\label{section:preorder}

The goal of this section is to precisely formulate our main theorem: under suitable conditions, the \emph{behavioural equivalence} determined by the logic $\LogNeg$ of Section~\ref{section:logic} is a congruence. In addition, we shall obtain a similar result for a coarser \emph{behavioural preorder} determined by a natural 
\emph{positive fragment} of $\LogNeg$,  which we call $\LogPos$. In addition to being natural in its own right, the
preorder induced by the positive fragment  $\LogPos$   turns out to be an indispensable technical tool for establishing properties of the behavioural equivalence induced by the full logic $\LogNeg$.

\begin{definition}[The logic $\LogPos$] 
The logic $\LogPos$ is the fragment of $\LogNeg$ consisting of those formulas in 
$\textit{VF}\,(\TypeOne)$ and $\textit{CF}\,(\TypeOne)$ that do not contain negation. It is inductively defined using rules 1-5, 7 and 8 from Fig.~\ref{figure:logneg}.
\end{definition}

\noindent
Whenever we have a logic $\mathcal{L}$
whose value and computation formulas are given as subcollections 
$\textit{VF}_\mathcal{L}(\TypeOne) \subseteq \textit{VF}\,(\TypeOne)$ and $\textit{CF}_\mathcal{L}(\TypeOne) \subseteq \textit{CF}\,(\TypeOne)$, 
then $\mathcal{L}$ determines a {preorder} (and hence also an {equivalence relation}) between terms of the same type and aspect.

\begin{definition}[Logical preorder and equivalence]\label{definition:preorder}
Given a fragment $\mathcal{L}$ of $\LogNeg$, we define the \emph{logical preorder} $\sqsubseteq_{\mathcal{L}}$, 
between well-typed terms of the same type and aspect, by:
\begin{align*}
V \sqsubseteq_{\mathcal{L}} W ~~~  &  \Leftrightarrow ~~~ \forall \phi \in \textit{VF}_{\mathcal{L}}(\TypeOne), ~V \models \phi~ \Rightarrow ~ W \models \phi \\
M \sqsubseteq_{\mathcal{L}} N ~~~ & \Leftrightarrow  ~~~ \forall \Phi \in \textit{CF}_{\mathcal{L}}(\TypeOne), ~ M \models \Phi~  \Rightarrow ~ N \models \Phi
\end{align*}
The \emph{logical equivalence} $\equiv_{\mathcal{L}}$ on terms is the equivalence relation induced by the preorder
(the intersection of $\sqsubseteq_{\mathcal{L}}$ and its converse).
\end{definition}

\noindent
In the case that formulas in $\mathcal{L}$ are closed under negation, the preorder  $\sqsubseteq_{\mathcal{L}}$ is already an equivalence relation, and hence coincides with $\equiv_{\mathcal{L}}$. Thus we shall only refer specifically to the preorder  $\sqsubseteq_{\mathcal{L}}$, for fragments, such as $\LogPos$, that are not closed under negation. 

The two main  relations of interest to us in this paper are the primary relations determined by $\LogNeg$ and $\LogPos$: 
full \emph{behavioural equivalence} $\equiv_{\LogNeg}$; and the \emph{positive behavioural preorder} $\PrePos$ (which induces 
\emph{positive behavioural equivalence} $\equiv_{\LogPos}$).
Since $\LogPos$ is a subset of $\LogNeg$, it is apparent that $\equiv_{\LogNeg}$ is finer than $\equiv_{\LogPos}$, as it considers more behavioural properties which could distinguish terms.
For the same reason, it holds that $\sqsubseteq_{\LogNeg} \,\subseteq \,\,\sqsubseteq_{\LogPos}$.

Before formulating the required notions to prove congruence of the behavioural equivalences, we shall make some observations about the preorders and discuss a possible simplification of the logic (Proposition~\ref{proposition:logA}).

\begin{lemma}\label{funct_clas}
	For any $V_0, V_1 \in \textit{Val}(\TypeTwo \rightarrow \TypeOne)$, we have  $V_0 \PrePos V_1$ if and only if:
	$$\forall W \in \textit{Val}(\TypeTwo),\,
	\forall \Psi \in \textit{CF}_{\LogPos}(\TypeOne), ~ 
	V_0 \models   (W \mapsto \Psi) ~ \text{implies}~ V_1 \models   (W \mapsto \Psi) \, . $$
\end{lemma}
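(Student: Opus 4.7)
The forward direction is immediate from Definition~\ref{definition:preorder}: every formula $(W \mapsto \Psi)$ with $W \in \textit{Val}(\TypeTwo)$ and $\Psi \in \textit{CF}_{\LogPos}(\TypeOne)$ is itself a member of $\textit{VF}_{\LogPos}(\TypeTwo \to \TypeOne)$ by rule~(2) of Fig.~\ref{figure:logneg}, so the preservation assumed by $V_0 \PrePos V_1$ specialises to the stated condition.

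For the backward direction, my plan is to reduce an arbitrary positive value formula at function type to a disjunction of conjunctions of basic formulas, using a positive analogue of the normal-form lemma. Specifically, I would first establish: every $\phi \in \textit{VF}_{\LogPos}(\TypeTwo \to \TypeOne)$ is semantically equivalent to a formula of the shape
$\bigvee_{i \in I} \bigwedge_{j \in J_i} (W_{i,j} \mapsto \Psi_{i,j})$,
with $W_{i,j} \in \textit{Val}(\TypeTwo)$ and $\Psi_{i,j} \in \textit{CF}_{\LogPos}(\TypeOne)$. This is proved by induction on the construction of $\phi$, mirroring the proof of Lemma~\ref{lemma:normal}. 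Since $\LogPos$ contains no negation, the clause in Lemma~\ref{lemma:normal} handling negated basic formulas is simply dropped, so only basic literals appear in the normal form and it still lies inside $\LogPos$.

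Given this normal form, the proof concludes rapidly. Suppose $V_0 \models \phi$ for an arbitrary $\phi \in \textit{VF}_{\LogPos}(\TypeTwo \to \TypeOne)$; passing to the normal form, some index $i \in I$ witnesses $V_0 \models (W_{i,j} \mapsto \Psi_{i,j})$ for every $j \in J_i$. Applying the hypothesis to each such basic conjunct yields $V_1 \models (W_{i,j} \mapsto \Psi_{i,j})$ for the same $i$ and every $j \in J_i$, so the same disjunct witnesses $V_1 \models \phi$. This shows $V_0 \PrePos V_1$.

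The main technical point is the positive normal-form lemma itself. The only non-trivial step in its induction is the distributive law
$\bigwedge_{i} \bigvee_{j \in K_i} \chi_{i,j} \equiv \bigvee_{f \in \prod_i K_i} \bigwedge_{i} \chi_{i,f(i)}$,
which invokes the axiom of choice on the index sets but is otherwise entirely routine. All remaining cases merely reorganise nested conjunctions and disjunctions, and no step ever introduces a negation, so the positive fragment is preserved throughout. Once the normal form is in hand, the rest of the argument is purely bookkeeping, as sketched above.
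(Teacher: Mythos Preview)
Your argument is correct, but it takes a longer route than the paper. The paper simply observes that positive value formulas at type $\TypeTwo \to \TypeOne$ are generated from the basic formulas $(W \mapsto \Psi)$ by closing under arbitrary conjunctions and disjunctions, and that satisfaction of $\bigvee$ and $\bigwedge$ is determined pointwise by satisfaction of the constituent formulas. A direct structural induction on $\phi \in \textit{VF}_{\LogPos}(\TypeTwo \to \TypeOne)$ then immediately gives the backward direction: the basic case is exactly the hypothesis, and the inductive cases for $\bigvee$ and $\bigwedge$ are trivial.

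Your detour through a positive normal form is valid but heavier: it requires establishing the infinitary distributive law (with its appeal to choice) merely to reduce to a situation that the direct induction handles anyway. The normal form buys nothing extra here, since you never need to analyse the global shape of $\phi$; you only need that each connective preserves the implication ``$V_0 \models (-) \Rightarrow V_1 \models (-)$'', which is immediate. So your proof is sound, just unnecessarily elaborate compared to the one-line induction the paper has in mind.
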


\begin{lemma}\label{comp_clas}
	For any $M_0, M_1 \in \textit{Com}(\TypeOne)$, we have  $M_0 \PrePos M_1$ if and only if:
	$$\forall o \in \Obser,\, 
	\forall \phi \in \textit{VF}_{\LogPos}(\TypeOne), ~ 
	M_0 \models   o\,\phi ~ \text{implies}~ M_1 \models   o\,\phi \, . $$
\end{lemma}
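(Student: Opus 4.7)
My plan is to reduce the claim to Lemma~\ref{lemma:normal} applied to the positive fragment, and then close with elementary propositional manipulation.

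The forward direction ($\Rightarrow$) is immediate: each formula $o\,\phi$ with $o \in \Obser$ and $\phi \in \textit{VF}_{\LogPos}(\TypeOne)$ is a basic computation formula, hence an element of $\textit{CF}_{\LogPos}(\TypeOne)$, so this case is subsumed by Definition~\ref{definition:preorder}. No work is required here.

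For the nontrivial backward direction ($\Leftarrow$), assume the hypothesis on basic formulas, and let $\Phi \in \textit{CF}_{\LogPos}(\TypeOne)$ with $M_0 \models \Phi$. The plan is to invoke Lemma~\ref{lemma:normal} and observe that the normal form construction introduces no negations when given a negation-free input. Concretely, because the only basic computation formulas are of the shape $o\,\phi$ with $\phi \in \textit{VF}\,(\TypeOne)$ (and the positive fragment restricts $\phi$ to $\textit{VF}_{\LogPos}(\TypeOne)$), every $\Phi \in \textit{CF}_{\LogPos}(\TypeOne)$ is equivalent to a formula
\[
\bigvee_{i \in I}\,\bigwedge_{j \in J_i}\, o_{i,j}\,\phi_{i,j} \, ,
\]
where $o_{i,j} \in \Obser$ and $\phi_{i,j} \in \textit{VF}_{\LogPos}(\TypeOne)$. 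From $M_0 \models \Phi$ we extract some $i \in I$ such that $M_0 \models o_{i,j}\,\phi_{i,j}$ for every $j \in J_i$. Applying the hypothesis term-by-term yields $M_1 \models o_{i,j}\,\phi_{i,j}$ for every $j \in J_i$, and thus $M_1 \models \Phi$.

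The only subtle step is justifying that Lemma~\ref{lemma:normal}, which is stated for $\LogNeg$, restricts cleanly to $\LogPos$, i.e.\ that the disjunctive normal form of a negation-free formula can itself be taken negation-free. This is the kind of routine induction on formula structure that the paper suppresses when invoking the lemma; since $\LogPos$ is exactly $\LogNeg$ minus rules~(6) and~(9), the standard inductive construction of the normal form (distributing $\bigwedge$ over $\bigvee$) never needs to introduce a negation, so the result is immediate. No obstacle of substance arises, and the proof is essentially a direct consequence of the normal form lemma together with the fact that basic computation formulas of $\LogPos$ are precisely those of the shape $o\,\phi$ with $\phi \in \textit{VF}_{\LogPos}(\TypeOne)$.
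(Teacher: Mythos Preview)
Your proof is correct and follows essentially the same idea as the paper's: both rely on the fact that positive computation formulas are built from basic formulas $o\,\phi$ using only disjunctions and conjunctions, so satisfaction of arbitrary formulas reduces to satisfaction of basic ones. The paper phrases this as a direct observation (``satisfaction of conjunctions and disjunctions are completely determined by satisfaction of the formulas over which the connectives are taken''), which amounts to a straightforward structural induction on $\Phi$, whereas you route through the normal-form lemma; the difference is purely presentational.
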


\noindent
Both these lemmas are a consequence of the fact that satisfaction of conjunctions and disjunctions are completely determined by satisfaction of the formulas over which the connectives are taken. As such, the logical preorder is completely determined by satisfaction of basic formulas. Similar characterisations, but replacing `implies' and $\LogPos$ with `if and only if' and $\LogNeg$ respectively, hold for  the behavioural equivalence $\equiv_{\LogNeg}$.

\begin{proposition}\label{proposition:logA}
	Let \ $\mathcal{K}$ be the fragment of \ $\LogNeg$ inductively defined by rules 1 to 6 from Fig.~\ref{figure:logneg} (so computation formulas are not closed under propositional connectives), then the induced logical equivalence
	$\equiv_{\mathcal{K}}$ is the same as $\equiv_{\LogNeg}$. 
\end{proposition}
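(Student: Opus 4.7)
The inclusion $\equiv_{\LogNeg} \subseteq \equiv_{\mathcal{K}}$ is immediate because $\mathcal{K}$ is a sub-logic of $\LogNeg$. For the converse, my plan is to construct, by simultaneous induction on formula structure, a pair of semantics-preserving translations: a translation $T$ sending each $\phi \in \textit{VF}_{\LogNeg}(\TypeOne)$ to some $T(\phi) \in \textit{VF}_{\mathcal{K}}(\TypeOne)$ with the same satisfying values, together with an auxiliary family $S_V$, indexed by $V \in \textit{Val}(\TypeOne)$, sending $\Phi \in \textit{CF}_{\LogNeg}(\TypeTwo)$ to $S_V(\Phi) \in \textit{VF}_{\mathcal{K}}(\TypeOne \to \TypeTwo)$ and satisfying $W \models S_V(\Phi) \Leftrightarrow WV \models \Phi$ for every value $W$.

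The crucial observation driving the construction is that the propositional connectives commute semantically with the $(V \mapsto -)$ basic-formula constructor: $W \models (V \mapsto \neg \Phi) \Leftrightarrow W \models \neg(V \mapsto \Phi)$, and analogously for $\bigvee$ and $\bigwedge$. Accordingly, the translations are defined mutually by $T(\{n\}) = \{n\}$, $T((V \mapsto \Phi)) = S_V(\Phi)$, and commuting $T$ past $\neg$, $\bigvee_I$ and $\bigwedge_I$; while $S_V(o\,\phi) = (V \mapsto o\,T(\phi))$, and $S_V$ likewise commutes past the propositional connectives. The net effect of $S_V$ is to push every propositional combination occurring inside $\Phi$ out past the $(V \mapsto -)$ wrapper, so that after translation the wrapper is only ever applied to basic computation formulas of the form $o\,T(\phi)$. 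Because $\mathcal{K}$ retains full propositional closure at the level of \emph{value} formulas (rules 4--6), the output of $S_V$ indeed lies in $\textit{VF}_{\mathcal{K}}$, and a routine simultaneous structural induction establishes both well-definedness and the semantic correctness statements above.

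Given these translations, the proposition follows easily. For values, if $V_0 \equiv_{\mathcal{K}} V_1$, then for each $\phi \in \textit{VF}_{\LogNeg}$ the formulas $\phi$ and $T(\phi) \in \textit{VF}_{\mathcal{K}}$ have the same satisfiers, and $V_0, V_1$ agree on $T(\phi)$ by hypothesis, yielding $V_0 \equiv_{\LogNeg} V_1$. For computations, by the same argument as for Lemma~\ref{comp_clas} (which, as remarked immediately after that lemma, adapts \emph{mutatis mutandis} to $\equiv_{\LogNeg}$), $\equiv_{\LogNeg}$ on computations is determined by agreement on basic formulas $o\,\phi$ with $\phi \in \textit{VF}_{\LogNeg}$; each such formula has, by the translation result, the same satisfying computations as $o\,T(\phi) \in \textit{CF}_{\mathcal{K}}$, on which the hypothesis $M_0 \equiv_{\mathcal{K}} M_1$ provides agreement.

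The main obstacle is bookkeeping in the mutual induction, specifically verifying that $S_V$ always lands in $\textit{VF}_{\mathcal{K}}$ rather than merely in $\textit{VF}_{\LogNeg}$; this hinges on the slightly delicate point that propositional closure is retained for value formulas in $\mathcal{K}$ even though it is dropped for computation formulas. Once the translations are set up correctly, the semantic invariants amount to clause-by-clause unfolding of the satisfaction relation.
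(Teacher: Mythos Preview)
Your proposal is correct and rests on the same semantic observation as the paper's proof: the constructor $(V \mapsto -)$ commutes with all propositional connectives, so connectives sitting inside a computation formula $\Phi$ can be pushed outward to the value-formula level where $\mathcal{K}$ still has them.

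The packaging differs slightly. The paper first invokes the DNF normal-form lemma (Lemma~\ref{lemma:normal}) to write $\Phi$ as $\bigvee_I\bigwedge_J \Psi_{i,j}$ with each $\Psi_{i,j}$ of the form $o\,\psi$ or $\neg o\,\psi$, then distributes $(V\mapsto -)$ over that fixed shape, and organises the recursion as an induction on \emph{types} (with an inner structural induction for non-basic value formulas). You instead set up a direct mutual structural recursion on formulas via the pair $(T,S_V)$, which dispenses with the DNF detour entirely. Your route is arguably cleaner and makes the well-foundedness more transparent; the paper's route has the virtue of reusing an already-stated normal-form lemma. Both reach the same translation and finish identically by appealing to (the $\LogNeg$-variant of) Lemma~\ref{comp_clas} for computation terms.
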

\begin{proof}
	We prove that any value formula $\phi \in \LogNeg$ is equivalent to a value formula $\phi'$ from $\mathcal{K}$. We do this by induction on types.
	For value formulas of natural numbers type, note that $\textit{VF}_{\LogNeg}(\NatType) = \textit{VF}_{\mathcal{K}}(\NatType)$, so the statement is trivially true by taking $\phi' := \phi$.

	For value formulas of function type, assume $\phi$ is a basic formula $V \mapsto \Phi$, where by Lemma \ref{lemma:normal} we may assume w.l.o.g. that $\Phi$ is a disjunction over conjunctions over formulas $\Psi$ of the form $o\,\psi$ or $\neg o\,\psi$. We can now use the equivalences $(V \mapsto \bigvee_I \bigwedge_J \Psi) \equiv \bigvee_I \bigwedge_J (V \mapsto \Psi)$ and $(V \mapsto \neg o \, \psi) \equiv \neg (V \mapsto o \, \psi)$ to construct a formula $\phi' \in \mathcal{K}$ equivalent to $\phi$, using the induction hypothesis to replace each occurrence of $o \, \psi$ with $o \, \psi'$ where $\psi' \in \mathcal{K}$ is equivalent to $\psi$. In the case that $\phi$ is not a basic formula, we can do an induction on its structure to find the desired $\phi'$, where $(\bigvee_I \psi)' := \bigvee_I \psi'$,  $(\bigwedge_I \psi)' := \bigwedge_I \psi'$, $(\neg \psi)' := \neg (\psi')$, and basic formulas are handled as above.
	
	So every value formula has an equivalent  value formula in $\mathcal{K}$, so the logical preorder on value terms remains unchanged. To see that the logical equivalence on computation terms remains unchanged, simply use Lemma \ref{comp_clas}. 
\end{proof}

Altering the proof slightly, we can derive a similar result for the positive logic.
\begin{proposition}\label{proposition:logB}
	Let \ $\mathcal{K}^+$ be the fragment of \ $\LogPos$ defined by rules 1 to 5 from Fig.~\ref{figure:logneg} (so computation formulas are not closed under propositional connectives), then the induced logical preorder \
	$\sqsubseteq_{\mathcal{K}^+}$ is the same as \ $\PrePos$. 
\end{proposition}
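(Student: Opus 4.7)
The plan is to adapt the proof of Proposition~\ref{proposition:logA} to the negation-free setting, observing that none of the equivalences used there involving negation are needed here. Specifically, I would show by induction on types that every $\phi \in \textit{VF}_{\LogPos}(\TypeOne)$ is semantically equivalent to some $\phi' \in \textit{VF}_{\mathcal{K}^+}(\TypeOne)$; the conclusion for computation terms then follows immediately from Lemma~\ref{comp_clas}.

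First I would establish a positive analogue of Lemma~\ref{lemma:normal}: every formula in $\LogPos$ is equivalent to one of the shape $\bigvee_I \bigwedge_J \psi$ with each $\psi_{i,j}$ basic. This is a routine induction on formula structure using only infinitary distributivity of $\bigwedge$ over $\bigvee$, so in particular no negation-handling is required.

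For the main induction on types, the base cases $\NatType$ and $\mathbf{1}$ are trivial, as the corresponding sets of value formulas coincide in the two logics. For a function type $\TypeOne \to \TypeTwo$, I would do a sub-induction on the structure of a general $\phi \in \textit{VF}_{\LogPos}(\TypeOne \to \TypeTwo)$. Compound formulas are handled by setting $(\bigvee_I \phi_i)' := \bigvee_I \phi'_i$ and $(\bigwedge_I \phi_i)' := \bigwedge_I \phi'_i$. For a basic formula $(V \mapsto \Phi)$ with $\Phi \in \textit{CF}_{\LogPos}(\TypeTwo)$, I would apply the positive normal form to write $\Phi \equiv \bigvee_I \bigwedge_J o_{i,j}\,\psi_{i,j}$, then use the pointwise semantic equivalences
\[
(V \mapsto \bigvee_I \Psi_i) \equiv \bigvee_I (V \mapsto \Psi_i) \quad\text{and}\quad (V \mapsto \bigwedge_I \Psi_i) \equiv \bigwedge_I (V \mapsto \Psi_i)
\]
to distribute $V \mapsto$ through the connectives, obtaining $\bigvee_I \bigwedge_J (V \mapsto o_{i,j}\,\psi_{i,j})$, and finally replace each inner $\psi_{i,j} \in \textit{VF}_{\LogPos}(\TypeTwo)$ by an equivalent $\psi'_{i,j} \in \textit{VF}_{\mathcal{K}^+}(\TypeTwo)$ supplied by the outer induction hypothesis at the structurally smaller type $\TypeTwo$.

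With this, the logical preorder on value terms is immediately seen to coincide for the two logics. For computation terms, I would invoke Lemma~\ref{comp_clas} to reduce $\PrePos$ to satisfaction of basic computation formulas $o\,\phi$, and then replace $\phi$ by an equivalent formula in $\mathcal{K}^+$ to conclude. I do not anticipate any serious obstacle: the whole point is that the step $(V \mapsto \neg o\,\psi) \equiv \neg(V \mapsto o\,\psi)$ which was essential in the proof of Proposition~\ref{proposition:logA} is never needed here, precisely because the positive fragment contains no negations to move across $V \mapsto$, so the argument is an almost verbatim simplification of the earlier proof.
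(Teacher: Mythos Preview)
Your proposal is correct and takes essentially the same approach as the paper, which simply remarks that one obtains the result by ``altering the proof slightly'' from Proposition~\ref{proposition:logA}. You have accurately identified the relevant alteration: the normal form of Lemma~\ref{lemma:normal} specialises to basic formulas only (no negated atoms), and the negation-commuting step $(V \mapsto \neg o\,\psi) \equiv \neg(V \mapsto o\,\psi)$ is no longer needed, while the distributivity of $(V \mapsto -)$ over $\bigvee$ and $\bigwedge$ and the appeal to Lemma~\ref{comp_clas} go through unchanged.
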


We next formulate the appropriate technical notion of (pre)congruence\footnote{A \emph{precongruence} is a preorder enjoying the properties of a congruence relation apart from symmetry.} to apply to the relations $\equiv_{\LogNeg}$ and
$\PrePos$. 
These two preorders are examples of \emph{well-typed relations} on closed terms. Any such relation can be extended to a relation on open terms in the following way. Given a well-typed relation $\mathcal{R}$ on closed terms, we define the \emph{open extension} $\mathcal{R}^{\circ}$ where $\Gamma \vdash M \,\mathcal{R}^{\circ} N : \TypeOne$ precisely when, for every well-typed vector of closed values $\overrightarrow{V} : \Gamma$, it holds that  $M[\overrightarrow{V}] \,\mathcal{R}\, N[\overrightarrow{V}]$.  
The correct notion of precongruence for a well-typed preorder on closed terms, is to ask for its open extension to be  \emph{compatible} in the sense of the definition below; see, e.g., Lassen and Pitts~\cite{Lassen,Pitts00} for further explanation.

\begin{definition}[Compatibility]\label{open}
	A well-typed open relation $\mathcal{R}$ is said to be \emph{compatible} if it is closed under the rules in Fig.~\ref{fig:compat}.
\end{definition}

\begin{figure}
	\begin{gather*}
	\frac{}{\Gamma, x: \TypeOne \vdash x \, \mathcal{R} \, x : \TypeOne} \quad \quad
	\frac{}{\Gamma \vdash Z \, \mathcal{R} \, Z : \NatType} \quad \quad
	\frac{\Gamma \vdash V \, \mathcal{R} \, V' : \NatType}{\Gamma \vdash S(V) \, \mathcal{R} \, S(V') : \NatType}  
	\\[1ex]
	\frac{\Gamma \vdash V\, \mathcal{R} \, V' : \TypeOne}{\Gamma \vdash \textbf{return}(V) \, \mathcal{R} \, \textbf{return}(V') : \TypeOne} \quad \quad
	\frac{\Gamma, x : \TypeOne \vdash M \, \mathcal{R} \, M' : \TypeTwo}{\Gamma \vdash (\lambda x:\TypeOne.M)  \, \mathcal{R} \, (\lambda x:\TypeOne.M') : \TypeOne \to \TypeTwo}
	\\[1ex]
	\frac{\Gamma \vdash V \, \mathcal{R} \, V' : \TypeOne \to \TypeTwo \quad \quad \Gamma \vdash W \, \mathcal{R} \, W' : \TypeOne}{\Gamma \vdash (V W)  \, \mathcal{R} \, (V' W') : \TypeTwo} \quad \quad
	\frac{\Gamma \vdash V \, \mathcal{R} \, V' : (\TypeOne \to \TypeTwo) \to (\TypeOne \to \TypeTwo)}{\Gamma \vdash \text{\textbf{fix}}(V) \, \mathcal{R} \, \text{\textbf{fix}}(V') : \TypeOne \to \TypeTwo}
	\\[1ex]
	\frac{\Gamma \vdash V \, \mathcal{R} \, V' : \NatType \quad \quad \Gamma \vdash M \, \mathcal{R} \, M' : \TypeOne \quad \quad \Gamma, x:\NatType \vdash N \, \mathcal{R} \, N' : \TypeOne}
	{\Gamma \vdash \text{ \textbf{case} } V \text{ \textbf{of} } \{Z \Rightarrow M; S(x) \Rightarrow N\} \, \mathcal{R} \, \text{ \textbf{case} } V' \text{ \textbf{of} } \{Z \Rightarrow M'; S(x) \Rightarrow N'\} : \TypeOne}
	\\[1ex]
	\frac{\Gamma \vdash M \, \mathcal{R} \, M' : \TypeOne  \quad \quad \Gamma, x:\TypeOne \vdash N \, \mathcal{R} \, N' : \TypeTwo}{\Gamma \vdash \textbf{let } M \Rightarrow x \textbf{ in } N \, \mathcal{R} \, \textbf{let } M' \Rightarrow x \textbf{ in } N' : \TypeTwo}	
	\\[1ex]
	\frac{\Gamma \vdash M_i \, \mathcal{R} \, M_i' : \TypeOne}{\Gamma \vdash \sigma(M_0,M_1,...) \, \mathcal{R} \, \sigma(M_0',M_1',...) : \TypeOne} \quad
	\frac{\Gamma \vdash V \, \mathcal{R} \, V' : \NatType \quad \quad \Gamma \vdash M_i \, \mathcal{R} \, M_i' : \TypeOne}{\Gamma \vdash \sigma(V;M_0,M_1,...) \, \mathcal{R} \, \sigma(V';M_0',M_1',...) : \TypeOne}
	\\[1ex]
	\frac{\Gamma \vdash V \, \mathcal{R} \, V' : \NatType \to \TypeOne}{\Gamma \vdash \sigma(V) \, \mathcal{R} \, \sigma(V') : \TypeOne} \quad \quad 
	\frac{\Gamma \vdash V \, \mathcal{R} \, V' : \NatType \quad \quad \Gamma \vdash W \, \mathcal{R} \, W' : \NatType \to \TypeOne}{\Gamma \vdash \sigma(V;W) \, \mathcal{R} \, \sigma(V';W') : \TypeOne}
	\end{gather*}
	\caption{Rules for compatibility}
	\label{fig:compat}
\end{figure}

We now state our main congruence result, although we have not yet defined the conditions it depends upon.

\begin{theoremm}\label{Main_Theorem}
	If \ $\Obser$ is a decomposable set of Scott-open modalities then the open extensions of \
	 $\equiv_{\LogNeg}$ and \ $\PrePos$ are both compatible. (It is an immediate consequence that the open extension of \
	  $\equiv_{\LogPos}$ is also compatible.)
\end{theoremm}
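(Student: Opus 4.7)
The plan is to prove Theorem~\ref{Main_Theorem} indirectly, by routing through the coincidence theorem (Theorem~\ref{log_is_sim}) announced in the introduction, which identifies $\equiv_{\LogNeg}$ with applicative $\Obser$-bisimilarity and $\PrePos$ with applicative $\Obser$-similarity. Under this identification, compatibility of the logical relations becomes equivalent to compatibility of their applicative counterparts, and the latter is the natural target for an adaptation of Howe's method. So the first move is to reduce Theorem~\ref{Main_Theorem} to the statement that applicative $\Obser$-similarity $\simil$ and applicative $\Obser$-bisimilarity $\bisim$ are (pre)congruences on open terms.

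Granting this reduction, the central step is to show that these relations are closed under the compatibility rules of Figure~\ref{fig:compat}. Following Howe, for each $\mathcal{R} \in \{\simil, \bisim\}$ I would define its \emph{Howe closure} $\mathcal{R}^\bullet$, the least open relation that is compatible and closed under right-composition with $\mathcal{R}$. By construction $\mathcal{R}^\bullet$ is compatible and contains $\mathcal{R}$, so compatibility of $\mathcal{R}$ will follow once I establish the reverse inclusion $\mathcal{R}^\bullet \subseteq \mathcal{R}$. This inclusion unwinds, via the coinductive definition of $\Obser$-(bi)similarity on computations, to a key operational lemma: whenever $M \,\mathcal{R}^\bullet\, N$ at a computation type, then for every modality $o \in \Obser$ the $o$-behaviour of $|M|$ is inherited by $|N|$ (with the appropriate implication in the similarity case, and with symmetry in the bisimilarity case).

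The key lemma is the one place where both hypotheses on $\Obser$ do real work, and it will be the main obstacle. I would proceed by induction on the approximation index $n$ of $|M|_n$, with a case analysis on the configuration reached after $n$ reduction steps. The cases for $\textbf{return}$, $\textbf{let}$, $\textbf{fix}$, application and $\textbf{case}$ are handled by the standard substitutivity and reduction-preservation lemmas of the Howe construction: the value components of $\mathcal{R}^\bullet$ are closed under the appropriate substitutions, and $\mathcal{R}^\bullet$ respects single-step reduction. The essentially new case is the one in which evaluation produces an effect operation $\sigma$ at the root of the approximant. Here the modality $o$ must be pushed through $\sigma$ onto the children of that node, and this is exactly what \emph{decomposability} of $\Obser$ supplies: it reassembles membership of $|M|[\,\models \phi\,]$ in $\denote{o}$ from modal information about the subtrees below $\sigma$, so that the inductive hypothesis applies componentwise. \emph{Scott-openness} of each $\denote{o}$ is then required to transfer properties from the finite approximants $|M|_n$, $|N|_n$ to the limits $|M| = \bigsqcup_n |M|_n$ and $|N|$: membership of a directed supremum in a Scott-open set is witnessed at some finite stage, and upward closure lifts membership at a stage back to the limit.

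Once the key lemma is in place, the inclusion $\mathcal{R}^\bullet \subseteq \mathcal{R}$ follows by coinduction against the definition of $\Obser$-(bi)similarity, using the symmetric closure of $\mathcal{R}^\bullet$ in the bisimilarity case (a standard but delicate step that needs the transitive closure of $\bisim^\bullet$ to be symmetric). Compatibility of $\equiv_{\LogNeg}$ and $\PrePos$ then follows by invoking Theorem~\ref{log_is_sim} in reverse. The overall skeleton is by now standard for Howe-style arguments with algebraic effects, as in Dal Lago \emph{et al.}~\cite{Relational}; what is specific to this paper, and where the proof effort concentrates, is the modal, decomposability-driven treatment of the effect-operation case together with the Scott-openness passage to the limit.
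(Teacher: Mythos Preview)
Your overall architecture is correct and matches the paper: reduce to compatibility of $\Obser$-(bi)similarity via Theorem~\ref{log_is_sim}, then run Howe's method with a key lemma proved by induction on the approximation index $n$ in $|M|_n$, and use Scott-openness for the passage to the limit. The paper carries out exactly this plan (Theorems~\ref{log_is_sim} and~\ref{theorem:sim_com}, Key Lemma~\ref{lemma_key}), including the transitive-symmetric-closure trick you mention for bisimilarity.

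However, you have misidentified where decomposability does its essential work. You list the $\textbf{let}$ case among the ``standard'' cases handled by substitutivity and reduction-preservation, and you locate the need for decomposability in the effect-operation case, describing it as ``pushing the modality $o$ through $\sigma$ onto the children''. This is backwards. The paper says explicitly, in the paragraph motivating Definition~\ref{definition:decomposable}, that the \emph{main purpose} of decomposability is to handle the $\textbf{let}$ constructor, and this is what happens in the Key Lemma (case~5): one has $|\textbf{let } M \Rightarrow x \textbf{ in } N| = \mu(|M|[V \mapsto |N[V/x]|])$, so the monad multiplication $\mu$ appears, and decomposability is precisely the statement that $\mu$ respects the relator-lifted relation (Lemma~\ref{Rel_prop3}(2), Corollary~\ref{dec_equ2}). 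If you attempted the $\textbf{let}$ case with only substitutivity and single-step reduction you would get stuck, since reduction of $\textbf{let}$ pushes onto the stack rather than producing a redex, and unfolding the tree semantics forces you through $\mu$. The effect-operation cases (cases~6--8) do also rely on decomposability, but only indirectly: they are handled by Corollary~\ref{Rel_prop4}(2), which encodes $\sigma(u_0,u_1,\dots)$ as $\mu(\sigma(0,1,\dots)[n \mapsto u_n])$ and then invokes the Kleisli-lifting property, itself resting on Lemma~\ref{Rel_prop3}(2). Your description of decomposability as ``reassembling membership of $|M|[\,\models \phi]$ in $\denote{o}$ from modal information about the subtrees below $\sigma$'' does not match Definition~\ref{definition:decomposable}, which is purely a statement about $\mu : TT\UnitType \to T\UnitType$ and says nothing directly about individual operations $\sigma$.
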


The Scott-openness condition refers to the \emph{Scott topology} on $T\UnitType$, which can be divided into two properties.
\begin{definition}
We say that $o \in \Obser$ is \emph{upwards closed} if $\denote{o}$ is an upwards closed subset of $T\UnitType$; i.e.,
if $t \in \denote{o}$ implies $t'  \in \denote{o}$ whenever $t  \leq t'$.
\end{definition}
\begin{definition}
The modality $o \in \Obser$ is \emph{Scott-open} if $\denote{o}$ is an open set in the Scott topology on  $T\UnitType$; i.e.,
$\denote{o}$ is upwards closed and, whenever $t_1 \leq t_2 \leq \dots$ is an ascending chain in  $T\UnitType$ with supremum $\sqcup_i t_i \in \denote{o}$, we have  $t_n \in  \denote{o}$ for some $n$.
\end{definition}

We now turn to the \emph{decomposability} condition in Theorem~\ref{Main_Theorem}, which will
eventually be given in Definition~\ref{definition:decomposable} below, following the introduction of certain auxiliary relations that are required for the definition. Because the formulation is unavoidably technical, we first give
some motivation for where we are heading.

The main purpose of the decomposability property is to enable us to prove that the positive behavioural preorder is preserved by the $\textbf{let}$ term constructor. (This motivation will eventually manifest itself in case 5 in the proof of Lemma \ref{lemma_key} in Appendix \ref{section:proofs}.) The particular technical challenge presented by $\textbf{let}$ is that it sequences effects. Semantically, the operation of effect sequencing is distilled into the 
monad multiplication map  $\mu : TTX \to TX$ defined in Section \ref{section:language}. Accordingly, we formulate the required notion of decomposability as a property of $\mu$. It turns out that we need only consider $\mu$ in the case that $X$ is the singleton set $\{\ast\}$. Thus decomposability will involve  trees of unit type, that is trees in $T\UnitType$, as well as \emph{double trees}, that is trees in $\Dtrees$.
We shall define it as the property that the monad multiplication $\mu$ is order-preserving with respect to 
preorders $\modbel$  and $\prebas$ defined on double and single trees respectively.

The relation $\prebas$ on $T\UnitType$ is a natural extension of  the positive behavioural preorder $\PrePos$, at unit type, 
from a relation on computation terms to a relation $\prebas$ on arbitrary effect trees.
To accommodate this perspective, we introduce a new notation, allowing us to apply a modality $o$ to an arbitrary subset of a set $X$ of values: for $A \subseteq X$ and $o \in \Obser$ we define
 $$o_X(A) ~ = ~ \{t \in TX \mid t[{\in A}] \in \denote{o}\} ~  \subseteq  ~ TX \enspace . $$
We will often write $o(A)$ instead of $o_X(A)$ when $X$ is clear from the context. For instance, if $t \in TX$ then $t \in o(A)$ means $t \in o_X(A)$.
One case of particular interest is when $A = X = \{\ast\}$, for which we note that $o(\{\ast\}) = \denote{o}$. 
Using the extended interpretation of modalities, we similarly extend the satisfaction relation of type $\UnitType$ positive computation formulas, from computation terms of type $\UnitType$ to arbitrary $T\UnitType$ computation trees.\footnote{In general, there are uncountably many trees, whereas there are only countably many computation terms.}
For each $\Phi \in \Treef$, we define its denotation $\denote{\Phi} \subseteq T\UnitType$ over unit type trees inductively, according to the following rules:\footnote{For a set $X$ of formulas, we denote $\bigvee X$ and $\bigwedge X$ for respectively the disjunction and conjunction over X.}
\begin{align*}
	& \denote{o(\top)} := o(\{\ast\}), & \quad \quad &
	\denote{o(\bot)} := o(\emptyset),\\
	& \Denote{\bigvee X} := \bigcup \{\denote{\Phi} \mid \Phi \in X\}, & \quad \quad & \Denote{\bigwedge X} := \bigcap \{\denote{\Phi} \mid \Phi \in X\}.
\end{align*}

\noindent
The above definition can be seen as an extension of the satisfaction relation on unit type computation terms, since for any computation term $M : \UnitType$ and formula $\Phi \in \textit{CF}_{\LogPos}(\UnitType)$ it holds that $M \models \Phi \Leftrightarrow |M| \in \denote{\Phi}$.
 
We can now define the required  preorder between arbitrary trees of type  $\UnitType$.

\begin{definition}
	We define the preorder $\prebas$ on $T\UnitType$ by: \,for any two trees $t,t' \in T\UnitType$, 
	$$t \prebas t' \quad:\iff\quad \forall \Phi \in \Treef,~ t \in \denote{\Phi} \Rightarrow t' \in \denote{\Phi}.$$
\end{definition} 

It is immediate that $\prebas$ is a conservative extension of the positive behavioural preorder on  computation terms of type $\UnitType$ in the following sense:

\begin{proposition}\label{prop:preceq}
	For any $M, N \in \textit{Com}(\UnitType)$, it holds that  $|M| \prebas |N|$ if and only if
	$M \PrePos N$.
\end{proposition}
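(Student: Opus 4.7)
The plan is to reduce the proposition directly to the parenthetical remark stated just above it: \emph{for any $M \in \textit{Com}(\UnitType)$ and $\Phi \in \Treef$, we have $M \models \Phi$ if and only if $|M| \in \denote{\Phi}$}. Once that correspondence is in hand, both directions of the biconditional follow immediately, since $\PrePos$ on unit-type computations and $\prebas$ on $T\UnitType$ are quantified over exactly the same class of formulas $\Treef = \textit{CF}_{\LogPos}(\UnitType)$. Concretely, if $|M| \prebas |N|$ and $M \models \Phi$, then $|M| \in \denote{\Phi}$ by the correspondence, hence $|N| \in \denote{\Phi}$, hence $N \models \Phi$; the converse direction is symmetric. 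So the real work is to establish the correspondence, which I would state and prove as a preliminary lemma.

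I would prove the correspondence by induction on the structure of $\Phi \in \Treef$. The conjunction and disjunction cases are transparent from the inductive hypothesis together with the clauses $\denote{\bigvee X} = \bigcup \{\denote{\Psi} \mid \Psi \in X\}$ and $\denote{\bigwedge X} = \bigcap \{\denote{\Psi} \mid \Psi \in X\}$ and the parallel clauses in the definition of $\models$. The base case concerns a basic computation formula $o\,\phi$ with $\phi \in \textit{VF}_{\LogPos}(\UnitType)$. Here I use the fact that $\textit{Val}(\UnitType) = \{\ast\}$, so any such $\phi$ is semantically equivalent to $\top$ (when $\ast \models \phi$) or to $\bot$ (when $\ast \not\models \phi$); in the positive fragment these arise as $\bigwedge_{\emptyset}$ and $\bigvee_{\emptyset}$ respectively. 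In the first case $\tis{|M|}{\{V \mid V \models \phi\}} = \tis{|M|}{\{\ast\}} = |M|$, so $M \models o\,\phi$ iff $|M| \in \denote{o} = o(\{\ast\}) = \denote{o(\top)}$. In the second case $\tis{|M|}{\{V \mid V \models \phi\}} = \tis{|M|}{\emptyset}$, so $M \models o\,\phi$ iff $|M| \in o(\emptyset) = \denote{o(\bot)}$. Either way, $M \models o\,\phi \iff |M| \in \denote{o\,\phi}$.

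The only subtle point, and hence the part most likely to require care in writing, is matching the syntactic classification of formulas in $\Treef$ (which is presented via the generators $o(\top)$, $o(\bot)$, $\bigvee$, $\bigwedge$) against the general syntactic generation of $\textit{CF}_{\LogPos}(\UnitType)$ from rules 1--5, 7, 8 of Figure~\ref{figure:logneg} (where a basic formula is any $o\,\phi$ with $\phi \in \textit{VF}_{\LogPos}(\UnitType)$). The observation that $\textit{Val}(\UnitType)$ is a singleton collapses this distinction cleanly, via an appeal to Proposition~\ref{proposition:logB} if one wishes to restrict further to $\mathcal{K}^+$-style formulas, but no such appeal is strictly necessary—every unit-type positive value formula has the same truth value at $\ast$, and that is all that is used. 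There are no heavy calculations; the entire argument is a bookkeeping verification that the two layers of notation denote the same subsets of $T\UnitType$.
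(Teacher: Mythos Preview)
Your proposal is correct and follows exactly the approach the paper intends: the paper states the proposition as ``immediate'' from the preceding remark that $M \models \Phi \Leftrightarrow |M| \in \denote{\Phi}$, and you have simply spelled out a careful proof of that remark by induction on $\Phi$. Your handling of the base case---observing that any $\phi \in \textit{VF}_{\LogPos}(\UnitType)$ is semantically $\top$ or $\bot$ because $\textit{Val}(\UnitType) = \{\ast\}$---is exactly the point that makes the paper's inductive clauses for $\denote{o(\top)}$ and $\denote{o(\bot)}$ suffice, and you are right that this is the only step requiring any thought.
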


We give some alternative characterisations of $\prebas$.

\begin{proposition}\label{prop:base_relation}
	For any $t,t' \in T\UnitType$, the following three statements are equivalent:
	\begin{enumerate}
		\item $t \prebas t'$.
		\item \label{base:two} $(t \prebas t') \wedge (t[\in \emptyset] \prebas t'[\in \emptyset])$.
		\item \label{base:three} $\forall o \in \Obser, ~ (t \in o(\{\ast\}) \Rightarrow t' \in o(\{\ast\}))\,  \wedge\,  (t \in o(\emptyset)  \Rightarrow t' \in o(\emptyset))$.
	\end{enumerate}
\end{proposition}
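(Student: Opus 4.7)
The plan is to split the three-way equivalence into $1 \Leftrightarrow 3$ and $1 \Leftrightarrow 2$; since (2) is (1) conjoined with an extra clause, $2 \Rightarrow 1$ is immediate. For $1 \Rightarrow 3$, I would instantiate statement~(1) at the two formulas $o(\top), o(\bot) \in \Treef$, whose denotations are by definition $o(\{\ast\})$ and $o(\emptyset)$; this directly yields both clauses of~(3). Conversely, $3 \Rightarrow 1$ is a routine structural induction on $\Phi \in \Treef$ following the inductive definition of $\denote{-}$: the base cases $o(\top)$ and $o(\bot)$ are exactly the two clauses of~(3), while the steps for $\bigvee X$ and $\bigwedge X$ follow from the defining equations $\denote{\bigvee X} = \bigcup_{\Psi \in X} \denote{\Psi}$ and $\denote{\bigwedge X} = \bigcap_{\Psi \in X} \denote{\Psi}$ by straightforward set-theoretic manipulation of the inductive hypothesis.

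The only genuinely non-formal step is $1 \Rightarrow 2$. Given $t \prebas t'$, I must obtain $t[\in \emptyset] \prebas t'[\in \emptyset]$. The observation that makes this work is that whenever $s \in T\UnitType$ has only $\bot$ leaves, the relabelling $s[\in A]$ acts as the identity for any $A \subseteq \{\ast\}$, so that
$$s \in o(\{\ast\}) ~\iff~ s \in \denote{o} ~\iff~ s \in o(\emptyset) \, .$$
This applies in particular to both $t[\in \emptyset]$ and $t'[\in \emptyset]$. Now the already-established direction $1 \Rightarrow 3$, applied to the hypothesis $t \prebas t'$, yields $t \in o(\emptyset) \Rightarrow t' \in o(\emptyset)$, which unfolded reads $t[\in \emptyset] \in \denote{o} \Rightarrow t'[\in \emptyset] \in \denote{o}$. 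Combined with the collapse above, this supplies condition~(3) for the pair $(t[\in \emptyset], t'[\in \emptyset])$, and a second appeal to $3 \Rightarrow 1$ concludes.

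The one point needing care is the base case of the induction in $3 \Rightarrow 1$: I would want the inductive grammar $\{o(\top), o(\bot), \bigvee, \bigwedge\}$ defining $\denote{-}$ to cover all of $\Treef$. This is not a real obstacle because every $\phi \in \textit{VF}_{\LogPos}(\UnitType)$ is built using only $\top$, $\bot$ and propositional connectives (there being no basic value formulas at unit type), and is therefore semantically equivalent to $\top$ or to $\bot$; hence every basic computation formula $o\phi$ collapses to either $o(\top)$ or $o(\bot)$. I expect this bookkeeping to be the only mildly fiddly step in the whole argument.
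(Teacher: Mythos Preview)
Your proposal is correct and follows essentially the same approach as the paper: establish $1 \Leftrightarrow 3$ by structural induction on positive unit-type computation formulas, then derive $1 \Leftrightarrow 2$ from $1 \Leftrightarrow 3$ via the observation that $t[\in \emptyset] \in o(\{\ast\}) \Leftrightarrow t \in o(\emptyset)$ (which is exactly your ``collapse'' for trees with only $\bot$ leaves). Your additional care about the base case of the induction---reducing every $\phi \in \textit{VF}_{\LogPos}(\UnitType)$ to $\top$ or $\bot$---is a detail the paper leaves implicit but is handled correctly.
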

\begin{proof}
	The equivalence (1) $\Leftrightarrow$ (3)
	follows from a straightforward induction on the structure of positive unit type computation formulas. 
	Using the previous equivalence, the equivalence (1) $\Leftrightarrow$ (2) follows from the fact that $t[\in \emptyset] \in o(\{\ast\}) \Leftrightarrow t \in o(\emptyset)$.
\end{proof}

In the examples given in this paper, the preorder can be characterised in a simpler way, where 
\begin{equation}
\label{equation:prebas_reduce}
t \prebas t' \quad\iff\quad \forall o \in \Obser, ~ t \in \denote{o} \Rightarrow t' \in \denote{o}.
\end{equation}
This is a consequence of the fact that each modality $o$ given in the examples satisfy one of the two following properties: 
\begin{itemize}
	\item[(i)] $\forall t \in T\UnitType, t \notin o(\emptyset)$. The modalities with this property are: $\downarrow, \Diamond, \Box, \mathsf{P}_{>q}, (s \rightarrowtail s'),$ and $\langle w \rangle\!\downarrow$.
	\item[(ii)] $\forall t \in T\UnitType, t \in o(\{\ast\}) \Leftrightarrow t \in o(\emptyset)$. The modalities with this property are $\mathsf{E}_e$ and $\langle w \rangle_{\dots}$.
\end{itemize} 

\noindent
There do however exist sets of Scott open modalities for which the characterisation of $\prebas$ via (\ref{equation:prebas_reduce}) does not hold.
For example, for $\Sigma = \{\textit{raise}: \alpha^0 \rightarrow \alpha\}$ and $\Obser = \{o\}$ where $\denote{o} = \{\ast, \textit{raise}\}$. 

We next define the relation $\modbel$ on double trees $TT\UnitType$.

\begin{definition}
	We define the preorder $\modbel$ on $TT\UnitType$ by: \,for any two double trees $r,r' \in TT\UnitType$,
	$$r \modbel\, r' \quad:\iff\quad \forall o \in \Obser, \forall \Phi \in \Treef, ~ r \in o(\denote{\Phi}) \Rightarrow r' \in o(\denote{\Phi}).$$
\end{definition} 

We give some characterisations of $\modbel$. These use the notion of \emph{right-set}, that is; for any relation $\mathcal{R} \subseteq X \times Y$ and subset $A \subseteq X$, we write $\Rise{\mathcal{R}}{A} := \{y \in Y \, | \, \exists x \in A, x \,\mathcal{R}\, y \}$ for the right-set of $A$ under the relation $\mathcal{R}$.

\begin{lemma}\label{lem:decomp_equ}
	If $\Obser$ is a set of upwards-closed modalities, then for all $r, r' \in \Dtrees$, the following are equivalent:
	\begin{enumerate}
		\item $r \modbel\, r'$.
		\item $\forall A \subseteq T\UnitType, ~ \tis{r}{A} \prebas \tis{r'}{(\Rise{\prebas}{A})}$.
		\item $\forall o \in \Obser, \forall A \subseteq T\UnitType, ~ r \in o(A) \Rightarrow r' \in o(\Rise{\prebas}{A})$.
	\end{enumerate}
\end{lemma}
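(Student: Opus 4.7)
The plan is to prove the equivalences in the order $(3) \Rightarrow (1)$, $(1) \Rightarrow (3)$, $(3) \Rightarrow (2)$, $(2) \Rightarrow (3)$. The easy direction $(3) \Rightarrow (1)$ follows by taking $A := \denote{\Phi}$ and noting that $\denote{\Phi}$ is upwards closed under $\prebas$ (directly from the definition of $\prebas$), so $\Rise{\prebas}{\denote{\Phi}} = \denote{\Phi}$.

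The key step is $(1) \Rightarrow (3)$, where I would show that every upset $\Rise{\prebas}{A}$ is the denotation of some positive formula. Specifically, for each $t \in T\UnitType$ define
\[
\Phi_t ~ := ~ \bigwedge\,\{\Psi \in \Treef \mid t \in \denote{\Psi}\} \enspace ,
\]
which is a legitimate $\Treef$-formula because the logic $\LogPos$ admits arbitrary conjunctions (rule 8 of Fig.~\ref{figure:logneg}). By construction, $t' \in \denote{\Phi_t}$ iff $t'$ satisfies every positive formula that $t$ satisfies, i.e.\ iff $t \prebas t'$. Hence $\denote{\Phi_t} = \Rise{\prebas}{\{t\}}$, and setting $\Phi_A := \bigvee_{t \in A} \Phi_t$ gives $\denote{\Phi_A} = \Rise{\prebas}{A}$. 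Now if $r \in o(A)$, the inclusion $A \subseteq \Rise{\prebas}{A}$ yields $\tis{r}{A} \leq \tis{r}{\Rise{\prebas}{A}}$, and by upwards-closedness of $o$ we get $r \in o(\Rise{\prebas}{A}) = o(\denote{\Phi_A})$. Applying (1) then delivers $r' \in o(\denote{\Phi_A}) = o(\Rise{\prebas}{A})$. The main obstacle here is the unrestricted set-indexed conjunction/disjunction, which is legitimate only because we are working with the infinitary logic; this is the place where the expressive power of $\LogPos$ is really used.

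For $(3) \Leftrightarrow (2)$, I would unpack (2) via Proposition~\ref{prop:base_relation}(\ref{base:three}). Observe first that for any $t \in T\UnitType$ one has $\tis{t}{\{*\}} = t$ and $\tis{t}{\emptyset}$ is the tree obtained by replacing every leaf of $t$ by $\bot$; in particular $\tis{\tis{r}{A}}{\{*\}} = \tis{r}{A}$ and $\tis{\tis{r}{A}}{\emptyset} = \tis{r}{\emptyset}$ (independently of $A$). Hence, for all $o \in \Obser$ and $A \subseteq T\UnitType$,
\[
\tis{r}{A} \in o(\{*\}) ~\Longleftrightarrow~ r \in o(A) \enspace , \qquad \tis{r}{A} \in o(\emptyset) ~\Longleftrightarrow~ r \in o(\emptyset) \enspace ,
\]
and analogously for $r'$ and $\Rise{\prebas}{A}$. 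The first equivalence shows that the ``positive'' clause of Proposition~\ref{prop:base_relation}(\ref{base:three}) for $\tis{r}{A} \prebas \tis{r'}{\Rise{\prebas}{A}}$ is exactly the statement of (3); the second equivalence shows that the ``negative'' clause reduces to $r \in o(\emptyset) \Rightarrow r' \in o(\emptyset)$, which is just the $A = \emptyset$ instance of (3) (using $\Rise{\prebas}{\emptyset} = \emptyset$). So (3) and (2) are equivalent, completing the cycle.
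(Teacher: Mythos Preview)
Your overall strategy matches the paper's: both arguments hinge on showing that every $\prebas$-upset $\Rise{\prebas}{A}$ is the denotation of some formula in $\Treef$, then use upwards-closure of modalities to pass from $o(A)$ to $o(\Rise{\prebas}{A})$ before invoking (1). Your handling of $(3)\Rightarrow(1)$ and of $(3)\Leftrightarrow(2)$ via Proposition~\ref{prop:base_relation} (including the observation $\tis{\tis{r}{A}}{\emptyset}=\tis{r}{\emptyset}$) is correct and essentially the same as the paper's $(2)\Rightarrow(3)$ and $(3)\Rightarrow(1)$ steps.

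There is one genuine technical gap in your $(1)\Rightarrow(3)$ step. Your formula
\[
\Phi_t ~:=~ \bigwedge\{\Psi \in \Treef \mid t \in \denote{\Psi}\}
\]
is not a legitimate formula: rule~(8) in Fig.~\ref{figure:logneg} permits conjunctions indexed over \emph{sets}, but the class of all $\Treef$-formulas is a proper class (as the paper notes just after Definition~\ref{logic}), and so is your index class. The paper avoids this by exploiting Proposition~\ref{prop:base_relation}(\ref{base:three}) to write down an explicit \emph{set}-indexed formula
\[
\Phi_A ~:=~ \bigvee_{t \in A}\Bigl(\bigl(\bigwedge_{o \in \Obser,\, t \in o(\{\ast\})} o(\top)\bigr)\wedge\bigl(\bigwedge_{o \in \Obser,\, t \in o(\emptyset)} o(\bot)\bigr)\Bigr)
\]
whose denotation is $\Rise{\prebas}{A}$; here the outer disjunction is over the set $A\subseteq T\UnitType$ and the inner conjunctions are over subsets of the set $\Obser$. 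Replacing your $\Phi_t$ by the corresponding inner conjunct repairs the argument without changing its structure.
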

\begin{proof}
	(1) $\Rightarrow$ (2) ~ Let $\tis{r}{A} \in o(\{\ast\})$, then since $A \subseteq (\Rise{\prebas}{A})$ and $o$ is upwards closed, $\tis{r}{(\Rise{\prebas}{A})} \in o(\{\ast\})$, which means $r \in o(\Rise{\prebas}{A})$.
	Let $\Phi_A := \bigvee_{t \in A} ((\bigwedge_{o \in \Obser, t \in o(\{\ast\})} o(\top)) \wedge (\bigwedge_{o \in \Obser, t \in o(\emptyset)} o(\bot)))$, then $\denote{\Phi_A} = (\Rise{\prebas}{A})$ as $\Phi_A$ perfectly replicates the condition of membership in $\Rise{\prebas}{A}$.
	So by (1) it holds that $r' \in o(\Rise{\prebas}{A})$, hence $\tis{r'}{(\Rise{\prebas}{A})} \in o(\{\ast\})$.
	If $\tis{r}{A} \in o(\emptyset)$, then $r \in o(\emptyset)$ hence $r' \in o(\emptyset)$ (since $(\bigvee \emptyset) \in \Treef$ and $\denote{\bigvee \emptyset} = \emptyset$), so $\tis{r'}{(\Rise{\prebas}{A})} \in o(\emptyset)$.
	We can conclude that $\tis{r}{A} \prebas \tis{r'}{(\Rise{\prebas}{A})}$.
	
	(2) $\Rightarrow$ (3) ~ If $\tis{r}{A} \prebas \tis{r'}{(\Rise{\prebas}{A})}$, then for any $o \in \Obser$ it holds that $\tis{r}{A} \in o(\{\ast\}) \Rightarrow \tis{r'}{\Rise{\prebas}{A}} \in o(\{\ast\})$, which is identical to the statement $r \in o(A) \Rightarrow r' \in o(\Rise{\prebas}{A})$.
	
	(3) $\Rightarrow$ (1) ~ If $r \in o(\denote{\Phi})$ with $\Phi \in \Treef$, then by (3) it holds that $r' \in o(\Rise{\prebas}{\denote{\Phi}})$.
	By the definition of $\prebas$ it holds that $(\Rise{\prebas}{\denote{\Phi}}) \subseteq \denote{\Phi}$, and since $\prebas$ is reflexive $\denote{\Phi} \subseteq (\Rise{\prebas}{\denote{\Phi}})$.
	Hence $(\Rise{\prebas}{\denote{\Phi}}) = \denote{\Phi}$ and we conclude that $r' \in o(\denote{\Phi})$.
\end{proof}

We can now finally define the promised notion of decomposability. 

\begin{definition}[Decomposability]
	\label{definition:decomposable}
	We say that  $\Obser$ is \textit{decomposable} if, for all double trees $r,r' \in \Dtrees$,
	$r \modbel\, r'$ implies $\mu r \prebas \mu r'$.
\end{definition}

Since decomposability is such a technical notion, we give 
two lemmas providing alternative characterisations of it.
The first gives a reformulation that is immediate in the case of our examples, where the statement $t \prebas t'$ can be simplified via (\ref{equation:prebas_reduce}). The general case, however, requires a rather technical proof.

\begin{lemma}\label{lem:decomp_reduct}
	A set of upwards-closed modalities $\Obser$ is decomposable if and only if for all $r,r' \in \Dtrees$, if $r \modbel\, r'$, then ~$\forall o \in \Obser, ~ \mu r \in o(\{\ast\}) \Rightarrow \mu r' \in o(\{\ast\})$.
\end{lemma}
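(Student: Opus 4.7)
The forward direction ($\Rightarrow$) is immediate: if $\Obser$ is decomposable and $r \modbel r'$, then $\mu r \prebas \mu r'$, and Proposition~\ref{prop:base_relation}(\ref{base:three}) then yields $\mu r \in o(\{\ast\}) \Rightarrow \mu r' \in o(\{\ast\})$ for each $o \in \Obser$. For the converse, suppose the displayed condition holds, and take $r, r' \in \Dtrees$ with $r \modbel r'$. By Proposition~\ref{prop:base_relation}(\ref{base:three}), the statement $\mu r \prebas \mu r'$ amounts, for every $o \in \Obser$, to two implications: (i)~$\mu r \in o(\{\ast\}) \Rightarrow \mu r' \in o(\{\ast\})$ and (ii)~$\mu r \in o(\emptyset) \Rightarrow \mu r' \in o(\emptyset)$. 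Part~(i) is exactly the hypothesis applied to $r,r'$. The plan is to reduce~(ii) to an instance of~(i) by constructing a modified pair of double trees on which the role of $o(\emptyset)$ for $r$ is played by $o(\{\ast\})$.

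Let $\phi : T\UnitType \to T\UnitType$ be the map $\phi(t) := \tis{t}{\emptyset}$, which overwrites every value leaf of $t$ with $\bot$, and set $\tilde{r} := T\phi(r)$ and $\tilde{r}' := T\phi(r')$. Two facts will suffice. \emph{Fact~A}: $\tilde{r} \modbel \tilde{r}'$. \emph{Fact~B}: $\mu \tilde{r} = \mu r[{\in}\emptyset]$ and similarly for $r'$, so that $\mu \tilde{r} \in o(\{\ast\}) \Leftrightarrow \mu r \in o(\emptyset)$. Granting both, the hypothesis applied to $\tilde{r} \modbel \tilde{r}'$ and $o$ yields (ii) at once. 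Fact~B is essentially unpacking: $\tilde{r}$ has the same shape as $r$ with each non-$\bot$ leaf $t$ replaced by $\phi(t)$, whose own leaves are all $\bot$; grafting these in via $\mu$ produces the shape of $\mu r$ with every value leaf overwritten by $\bot$, which is precisely $\mu r[{\in}\emptyset]$, and that tree is unchanged by a further application of $[{\in}\{\ast\}]$.

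For Fact~A, I would introduce a formula transformation $(-)^\flat : \Treef \to \Treef$ defined by structural induction, with $(o'(\top))^\flat := (o'(\bot))^\flat := o'(\bot)$ and commuting with $\bigvee$ and $\bigwedge$. The crux is that $\phi(t)$ has no $\ast$-leaves, so $\phi(t) \in o'(\{\ast\}) \Leftrightarrow \phi(t) \in o'(\emptyset) \Leftrightarrow t \in o'(\emptyset)$; a straightforward induction on $\Phi$ then gives $\phi^{-1}(\denote{\Phi}) = \denote{\Phi^\flat}$. Because $T\phi$ preserves tree structure and only relabels leaves, $\tis{\tilde{r}}{\denote{\Phi}} = \tis{r}{\denote{\Phi^\flat}}$, whence $\tilde{r} \in o(\denote{\Phi}) \Leftrightarrow r \in o(\denote{\Phi^\flat})$, and symmetrically on the primed side; since $\Phi^\flat \in \Treef$, $r \modbel r'$ transfers verbatim to $\tilde{r} \modbel \tilde{r}'$. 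The main obstacle will be marshalling these coherence checks between $T\phi$, $\tis{-}{-}$, and $(-)^\flat$; each step is a routine unfolding of definitions, but the whole argument hinges on the basic-formula coincidence $\phi(t) \in o'(\{\ast\}) \Leftrightarrow \phi(t) \in o'(\emptyset)$ that makes the definition of $(-)^\flat$ work uniformly over $\Treef$.
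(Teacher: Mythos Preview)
Your proof is correct and follows the same overall architecture as the paper's: reduce to showing $\mu r \in o(\emptyset) \Rightarrow \mu r' \in o(\emptyset)$, push the map $\phi(t) = \tis{t}{\emptyset}$ through the leaves of $r$ and $r'$ to obtain $\tilde r, \tilde r'$, observe $\mu\tilde r = \tis{(\mu r)}{\emptyset}$, establish $\tilde r \modbel \tilde r'$, and invoke the hypothesis.

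The one genuine difference is in how you establish Fact~A. The paper appeals to the semantic characterisation of $\modbel$ from Lemma~\ref{lem:decomp_equ} (item~3), working with arbitrary subsets $A \subseteq T\UnitType$ and the right-set $\Rise{\prebas}{A}$, and using Proposition~\ref{prop:base_relation}(\ref{base:two}) to pass from $t \prebas t'$ to $\tis{t}{\emptyset} \prebas \tis{t'}{\emptyset}$. Your route is syntactic: you define a formula transformation $(-)^\flat$ on $\Treef$ and show $\phi^{-1}(\denote{\Phi}) = \denote{\Phi^\flat}$, which immediately gives $\tilde r \in o(\denote{\Phi}) \Leftrightarrow r \in o(\denote{\Phi^\flat})$ and hence transfers $\modbel$ directly from the definition. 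Your argument is self-contained and avoids the dependency on Lemma~\ref{lem:decomp_equ}; the paper's argument reuses machinery it has already set up. Both are equally short once the relevant ingredients are in hand.
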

\begin{proof}
	We use the equivalences from Proposition~\ref{prop:base_relation}.
	
	($\Rightarrow$) ~
	The result follows by observing that $\mu r \prebas \mu r'$ implies $\forall o \in \Obser, ~ \mu r \in o(\{\ast\}) \Rightarrow {\mu r' \in o(\{\ast\})}$.
	
	($\Leftarrow$) ~
	Assume: ~
	(I) $\forall r,r' \in \Dtrees, ~ r \modbel\, r' ~ \Rightarrow ~ (\forall o \in \Obser, ~ \mu r \in o(\{\ast\}) \Rightarrow \mu r' \in o(\{\ast\}))$.
	
	\noindent
	Take some $r, r' \in \Dtrees$, and  
	suppose that $r \modbel\, r'$, hence with Lemma \ref{lem:decomp_equ} it holds that: 
	
	(II) $\forall o \in \Obser, A \subseteq T\UnitType, ~ r \in o(A) \Rightarrow r' \in o(\Rise{\prebas}{A})$.
	
	\noindent
	We need to prove that $\mu r \prebas \mu r'$.
	By (I) we derive that $\forall o \in \Obser, ~ \mu r \in o(\{\ast\}) \Rightarrow \mu r' \in o(\{\ast\})$.
	To prove $\mu r \prebas \mu r'$ we need only prove $\forall o \in \Obser, ~ \mu r \in o(\emptyset) \Rightarrow \mu r' \in o(\emptyset)$.
	
	Assume $\mu r \in o'(\emptyset)$ for $o' \in \Obser$, then $\mu(r[t \mapsto \tis{t}{\emptyset}]) = \tis{(\mu r)}{\emptyset} \in o'(\{\ast\})$.
	We prove that ${r[t \mapsto \tis{t}{\emptyset}]} \modbel~ r'[t' \mapsto \tis{t'}{\emptyset}]$.
	Suppose for some $o \in \Obser$ and $A \subseteq T\UnitType$, it holds that ${r[t \mapsto \tis{t}{\emptyset}]} \in o(A)$.
	Let $B := \{t \in T\UnitType \mid \tis{t}{\emptyset} \in A\}$, then $r \in o(B)$.
	By (II) it holds that $r' \in o(\Rise{\prebas}{B})$.
	
	For $t' \in \,(\Rise{\prebas}{B})$, there is a $t \in B$ such that $t \prebas t'$.
	Since $t \in B$, $\tis{t}{\emptyset} \in A$ and hence $\tis{t'}{\emptyset} \in (\Rise{\prebas}{A})$.
	So $(\Rise{\prebas}{B}) \subseteq \{t' \in T\UnitType \mid \tis{t'}{\emptyset} \in (\Rise{\prebas}{A})\}$, and we can use upwards closure of $o'$ to derive ${r'[t' \mapsto \tis{t'}{\emptyset}]} \in o(\Rise{\prebas}{A})$.
	Hence by Lemma \ref{lem:decomp_equ}, $r[t \mapsto \tis{t}{\emptyset}] \modbel\, r'[t' \mapsto \tis{t'}{\emptyset}]$.
	
	We can apply (I) to derive $\mu(r[t \mapsto \tis{t}{\emptyset}]) \prebas \mu(r'[t' \mapsto \tis{t'}{\emptyset}])$.
	So since $\mu(r[t \mapsto \tis{t}{\emptyset}]) = \tis{(\mu r)}{\emptyset} \in o'(\{\ast\})$, it holds that and $\mu(r'[t' \mapsto \tis{t'}{\emptyset}]) \in o'(\{\ast\})$ and hence $\mu r' \in o'(\emptyset)$. 
	We conclude that $\mu r \prebas \mu r'$, so we are finished.
\end{proof}

 \noindent
The second reformulation of decomposability shows that 
it is equivalent to being able to `decompose' statements of the 
form $\mu r \in o(\{\ast\})$ into a collection of modal properties of $r$.

\begin{lemma}\label{lem:true_decomp}
	A set of upwards-closed modalities $\Obser$ is decomposable if and only if for any $r \in \Dtrees$ and $o \in \Obser$ such that $\mu r \in o(\{\ast\})$, there is a collection of pairs $\{(o_i,\Phi_i)\}_{i \in I}$, with each $o_i \in \Obser$ and $\Phi_i \in \Treef$, satisfying the following two properties:
	\begin{enumerate}
		\item $\forall i \in I, ~ r \in o_i(\denote{\Phi_i})$.
		\item $\forall r' \in \Dtrees, ~ (\forall i \in I, ~ r' \in o_i(\denote{\Phi_i})) ~ \Rightarrow ~ \mu r' \in o(\{\ast\})$.
	\end{enumerate}
\end{lemma}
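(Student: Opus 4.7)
The plan is to prove the two directions separately, using Lemma \ref{lem:decomp_reduct} to reduce decomposability to a simpler test condition involving only singleton assertions of the form $\mu r \in o(\{\ast\})$.

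For the $(\Leftarrow)$ direction, I would assume the existence of the decomposing families $\{(o_i,\Phi_i)\}_{i \in I}$ and verify decomposability via the criterion from Lemma \ref{lem:decomp_reduct}. So I take $r \modbel r'$ and $o \in \Obser$ with $\mu r \in o(\{\ast\})$, and I want to show $\mu r' \in o(\{\ast\})$. By the hypothesis, fix a witnessing collection $\{(o_i,\Phi_i)\}_{i \in I}$ for $r$ and $o$. Property (1) gives $r \in o_i(\denote{\Phi_i})$ for each $i$. Since $\denote{\Phi_i} \in \Treef$ in particular corresponds to a positive unit-type computation formula, the definition of $\modbel$ applied with $\Phi := \Phi_i$ and modality $o_i$ yields $r' \in o_i(\denote{\Phi_i})$ for every $i$. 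Property (2) then delivers $\mu r' \in o(\{\ast\})$, as required.

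For the $(\Rightarrow)$ direction, the key idea is to take the canonical maximal witnessing family: given $r$ and $o$ with $\mu r \in o(\{\ast\})$, set
\[
I ~:=~ \{(o', \Phi') \mid o' \in \Obser,~ \Phi' \in \Treef,~ r \in o'(\denote{\Phi'})\},
\]
indexing each pair by itself. Property (1) holds by construction. For property (2), I would argue that for any $r' \in \Dtrees$ satisfying $r' \in o'(\denote{\Phi'})$ for every $(o',\Phi') \in I$, one in fact has $r \modbel r'$. Indeed, if $r \in o'(\denote{\Phi'})$ for some $o' \in \Obser$ and $\Phi' \in \Treef$, then $(o',\Phi') \in I$, so by assumption $r' \in o'(\denote{\Phi'})$; this is precisely the defining condition of $\modbel$. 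Decomposability then yields $\mu r \prebas \mu r'$, and since $\mu r \in o(\{\ast\})$, the definition of $\prebas$ (via Proposition \ref{prop:base_relation}, as $o(\{\ast\})$ is expressible by the positive formula $o(\top)$) gives $\mu r' \in o(\{\ast\})$.

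I do not expect a serious obstacle: both directions reduce, via Lemma \ref{lem:decomp_reduct} and the definitions of $\modbel$ and $\prebas$, to straightforward logical manipulations once the canonical maximal choice of $I$ is identified. The only subtlety worth checking carefully is that the formula $o'(\top)$ used implicitly to transfer $\mu r \in o(\{\ast\})$ to $\mu r'$ does indeed lie in $\Treef$, and that the index set $I$ is allowed to be a proper class (or, equivalently, that one can cut down to a set of representatives, since each $\denote{\Phi'}$ is a subset of $T\UnitType$ so there are only set-many distinct constraints up to semantic equivalence).
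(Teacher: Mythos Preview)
Your proposal is correct and follows essentially the same approach as the paper. The only cosmetic difference is in the $(\Rightarrow)$ direction: the paper indexes its collection by subsets $A \subseteq TT\UnitType$ containing $r$ that are expressible as $o'(\denote{\Phi'})$ (choosing one representative pair for each), whereas you index directly by all pairs $(o',\Phi')$ with $r \in o'(\denote{\Phi'})$; these produce the same family of constraints, and your version is arguably cleaner. Your remark about cutting $I$ down to a set via the semantic subsets $\denote{\Phi'} \subseteq T\UnitType$ is a valid way to handle the size issue.
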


\begin{proof}
	We use the equivalent statement for decomposability established in Lemma \ref{lem:decomp_reduct}.
	
	($\Rightarrow$) Assume decomposability, and that for some $r \in \Dtrees$ and $o \in \Obser$, it holds that $\mu r \in o(\{\ast\})$.
	For any $A \subseteq TT\UnitType$ such that $r \in A$, choose a pair $(o'_A,\Phi'_A)$ where $o'_A \in \Obser$ and $\Phi'_A \in \Treef$ such that $o'_A(\denote{\Phi'_A}) = A$, if such a pair exists. 
	All these chosen pairs together form our desired collection.
	Since for any such $A$, $r \in A = o'_A(\denote{\Phi'_A})$, this collection satisfies condition (1).
	For any $r' \in  \Dtrees$, if $r' \in o'_A(\denote{\Phi'_A}) = A$ for any pair in the collection, then for any $r \in A \subseteq TT\UnitType$ for which there is a pair $o'', \Phi''$ such that $o''(\denote{\Phi''}) = A$, it holds that $r' \in A$. So $r \modbel\, r'$.
	Hence by decomposability, $\forall o'' \in \Obser, \mu r \in o''(\{\ast\}) \Rightarrow \mu r' \in o''(\{\ast\})$, so in particular $r' \in o(\{\ast\})$, and thus condition (2) holds.
	
	($\Leftarrow$) Assume the statement given in the lemma, we need to prove decomposability.
	For some  $r,r' \in \Dtrees$, suppose that $r \modbel\, r'$.
	Now let $o \in \Obser$ such that $\mu r \in o(\{\ast\})$,
	then there is a collection of pairs $\{(o_i,\Phi_i)\}_{i \in I}$ satisfying the properties given above.
	So by property (1), $\forall i \in I, ~ r \in o_i(\denote{\Phi_i})$, and since $r \modbel\, r'$ it holds that $r' \in o_i(\denote{\Phi_i})$.
	By property (2) we conclude that $\mu r' \in o(\{\ast\})$ which is what we needed to prove.
\end{proof}

Below, we shall show that all our running examples satisfy the decomposability property.
In all cases we can do this by establishing a stronger property that  is easier to verify.
This strengthened notion of decomposability is obtained by simplifying the property given in Lemma \ref{lem:true_decomp}. 

\begin{definition}[Strong decomposability]
	\label{def:strong}
	We say that $\Obser$ is \emph{strongly decomposable} if, for every $r  \in \Dtrees$ and $o \in \Obser$ 
	for which $\mu r \in o(\{\ast\}) $, there exists a collection  $\{(o_i,o_i')\}_{i \in  I}$ of pairs of modalities  such that:
	\begin{enumerate}
		\item $\forall i \in I, ~ r \in o_i(o_i'(\{\ast\})) \,$; and
		\item for every  $r' \in \Dtrees$, if for all $i \in I$, 
		$r' \in o_i(o_i'(\{\ast\}))$ then $\mu r' \in o(\{\ast\})\,$.
	\end{enumerate}
\end{definition}

\begin{proposition}
	If $\Obser$ is strongly decomposable set of upwards closed modalities, 
	then $\Obser$ is decomposable.
\end{proposition}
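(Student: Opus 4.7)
The plan is to reduce strong decomposability to the characterisation of decomposability furnished by Lemma \ref{lem:true_decomp}. The key observation is that strong decomposability is simply the special case in which the formulas $\Phi_i \in \Treef$ appearing in Lemma \ref{lem:true_decomp} are restricted to the simplest possible shape $o'(\top)$, whose denotation, by the defining clause $\denote{o(\top)} = o(\{\ast\})$, is exactly $o'(\{\ast\})$. So the translation between the two conditions is literally a matter of repackaging.

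Concretely, I would proceed as follows. Fix $r \in \Dtrees$ and $o \in \Obser$ with $\mu r \in o(\{\ast\})$. By strong decomposability, pick a family $\{(o_i,o_i')\}_{i \in I}$ satisfying conditions (1) and (2) of Definition \ref{def:strong}. Then define $\Phi_i := o_i'(\top) \in \Treef$ for each $i \in I$. Because $\denote{\Phi_i} = o_i'(\{\ast\})$, condition (1) of Definition \ref{def:strong} transcribes verbatim to: for all $i \in I$, $r \in o_i(\denote{\Phi_i})$, which is clause (1) of Lemma \ref{lem:true_decomp}. Similarly, for any $r' \in \Dtrees$, the hypothesis ``$\forall i \in I,\ r' \in o_i(\denote{\Phi_i})$'' unfolds to ``$\forall i \in I,\ r' \in o_i(o_i'(\{\ast\}))$'', and condition (2) of Definition \ref{def:strong} then yields $\mu r' \in o(\{\ast\})$, which is clause (2) of Lemma \ref{lem:true_decomp}.

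Having verified both clauses of Lemma \ref{lem:true_decomp}, that lemma delivers decomposability of $\Obser$ (using that the modalities are upwards closed, as required by the lemma). There is no real obstacle here: the entire content of the argument is the bookkeeping observation that $o'(\top)$ is a legitimate positive tree formula whose denotation coincides with $o'(\{\ast\})$, so that every ``modal pair'' used in strong decomposability is already a legal ``modality/formula pair'' in the sense of Lemma \ref{lem:true_decomp}.
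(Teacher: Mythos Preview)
Your proposal is correct and follows essentially the same approach as the paper: both invoke Lemma~\ref{lem:true_decomp} and use the observation that $o_i'(\top) \in \Treef$ with $\denote{o_i'(\top)} = o_i'(\{\ast\})$, so that the pairs from strong decomposability immediately instantiate the required data. The paper's proof is a one-line summary of exactly what you wrote out in detail.
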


\begin{proof}
	Using Lemma \ref{lem:true_decomp}, this result is a simple consequence of the fact that for any $o_i \in \Obser$, $o_i(\top) \in \Treef$ and $\denote{o_i(\top)} = o_i(\{\ast\})$.
\end{proof}

Not all decomposable sets of Scott open modalities are strongly decomposable. Take for instance $\Obser := \{\downarrow, \widehat{\Box}\}$ for the signature $\Sigma = \{\textit{or}: \alpha^2 \rightarrow \alpha\}$, where $\denote{{\downarrow}} := \{\, * \, \}$ 
and $\denote{\widehat{\Box}} := (\denote{\Box} - \{\, * \, \})$. 

We end this section by again looking at our running examples, and showing, in each case, that the identified collection $\Obser$ only has Scott-open (hence upwards closed) modalities and is strongly decomposable (hence decomposable). For any of the examples, upwards closure is easily established, so we will not show it here.

\setcounter{myexampleD}{-1}
\begin{myexampleD}[Pure functional computation]
We have  $\Obser = \{{\downarrow}\}$ and $
\denote{{\downarrow}} = \{\, * \, \}$. 
Scott openness holds since if $\sqcup_i t_i = *$ then for some $i$ we must already have $t_i = *$.
It is strongly decomposable since: $\mu r \in  \denote{\downarrow} \Leftrightarrow r \in \,\downarrow\!(\downarrow\!(\{\ast\}))$, which means $r$ returns a tree $t$ which is a leaf $*$.
\end{myexampleD}

\begin{myexampleD}[Error] We have  $\Obser = \{{\downarrow}\} \cup \{\mathsf{E}_e \mid e \in E\}$ and 
$ \denote{{\mathsf{E}_e}} = \{\, \textit{raise}_e \, \}$. Scott-openness holds for both modalities for the same reason as in the previous example, and strong decomposability holds since:
$\mu r \in  \denote{\downarrow} \Leftrightarrow r \in \,\downarrow\!(\downarrow\!(\{\ast\}))$ like in the previous example, and:

\medskip
$\mu r \in \denote{\mathsf{E}_e} \quad \Leftrightarrow \quad r \in \mathsf{E}_e(\mathsf{E}_e(\{\ast\})) ~ \vee ~ r \in \,\downarrow\!(\mathsf{E}_e(\{\ast\}))$.
\medskip

\noindent Which means $r$ raises an error, or returns a tree that raises an error.
\end{myexampleD}

\begin{myexampleD}[Nondeterminism]
\label{example:nonD}
We have  $\Obser = \{\Diamond\}$ for angelic nondeterminism and $\Obser = \{\Box\}$ for demonic nondeterminism.
The Scott-openness of $\denote{{\Diamond}}  = \{ t \mid \text{$t$ has some $*$ leaf} \}$ holds because if $\sqcup_i t_i$ has a $*$ leaf, then that leaf must already be contained in $t_i$ for some $i$. Similarly, if  $\sqcup_i t_i \in \denote{{\Box}}$ then, because $\denote{{\Box}} = \{ t \mid \text{$t$ has finite height and every leaf is a  $*$}\}$, the tree $\sqcup_i t_i$ has finitely many leaves and all must be contained in $t_i$ for some $i$. Hence $t_i \in \denote{{\Box}}$ for that $i$. Strong decomposability holds because:

\medskip
$\mu r \in  \denote{\Diamond} ~ \Leftrightarrow ~ r \in \Diamond(\Diamond(\{\ast\}))
\quad \text{and} \quad
\mu r \in  \denote{\Box} ~ \Leftrightarrow ~ r \in \Box(\Box(\{\ast\})
\enspace .$
\medskip

\noindent
The former states that  $r$ has as a leaf a tree $t$, which itself has a leaf $*$.
The latter states that  $r$ is finite and all leaves are finite trees $t$ that  have only $*$ leaves. We can conclude that $\{\Diamond\}$ and $\{\Box\}$ are both strongly decomposable sets of Scott open modalities. Moreover, it is obvious that their union $\{\Diamond,\Box\}$, for neutral nondeterminism, is a strongly decomposable set of Scott open modalities too.
\end{myexampleD}

\begin{myexampleD}[Probabilistic choice] $\Obser = \{ \mathsf{P}_{>q} \mid q \in \mathbb{Q},\, 0 \leq q < 1\}$. For the Scott-openness of $\denote{\mathsf{P}_{>q}} = \{ t \mid 
\mathbf{P}(\,\text{$t$ terminates with a $*$ leaf}\,) > q\}$, note that $\mathbf{P}(\,\text{$\sqcup_i t_i$ terminates with a $*$ leaf}\,)$ is determined by some countable sum over the leaves of $t_i$. If this sum is greater than a rational $q$, then some finite approximation of the sum must already be above $q$. The finite sum is over finitely many leaves
from $\sqcup_i t_i$, all of which will be present in $t_i$ for some $i$. Hence $t_i \in \denote{\mathsf{P}_{>q}}$.

For strong decomposability, suppose $\mu r \in \mathsf{P}_{>q}(\{\ast\})\,$. However,
$\mathbf{P}(\,\text{$\mu r$ terminates with a $*$ leaf}\,)$ equals the integral 
of the (monotone decreasing) function $f_r(x) = \textit{sup}\{y \in [0,1] \mid r \in \mathsf{P}_{>y}(\mathsf{P}_{>x}(\{\ast\}))\}$
from $[0,1]$ to $[0,1]$. Informally, 
$f_r(x)$ is the probability that $r$ returns a tree in the set $\denote{\mathsf{P}_{>x}}$. 
Since $\int_0^1 f_r(x)>q$, we can find a monotone decreasing rational step function below $f_r$
whose integral is also greater than $q$. This rational step function can be specified by rational numbers
$0 < a_0 < \dots < a_n < 1$ and $1 > b_0 > \dots > b_n > 0$ satisfying $b_i < f_r(a_i)$ and 
$a_0 b_0 + \sum_{i=1}^n (a_i-a_{i-1})b_i)  > q\,$. Then the collection of pairs of modalities
$\{(\mathsf{P}_{>a_i}, \mathsf{P}_{>b_i})\}_{i=1}^n$ satisfies the properties required by strong decomposability.
\end{myexampleD}

\begin{myexampleD}[Global store] We have 
$\Obser = \{(s \rightarrowtail s') \mid s,s' \in \textit{State}\}$. For the Scott-openness of 
$\denote{(s \rightarrowtail s')}  = \{ t \mid  \textit{exec}(t,s) = (*,s')\}$,
note that if $\textit{exec}(\sqcup_i t_i,s) = (*,s')$, there is a single finite branch of $t$ that follows the path the recursive function $\textit{exec}$ took. This branch must already be contained in $t_i$ for some $i$. We also have strong decomposability since:

\medskip
$\mu r \in \denote{s \rightarrowtail s'} \quad \Leftrightarrow \quad \exists s'' \in \textit{State}, \,r \in (s \rightarrowtail s'')((s'' \rightarrowtail s')(\{\ast\}))$.
\medskip

\noindent
Which just means that for some $s''$, it holds that $\textit{exec}(r,s) = (t,s'')$ and $\textit{exec}(t,s'') = (*,s')$.
\end{myexampleD}

\begin{myexampleD}[Input/output] We have 
$\Obser = \{{\langle w \rangle\!\!\downarrow},\,
\langle w\rangle_{\!\dots} \mid \text{$w$ an i/o-trace}\}$. For the Scott-openness of 
$\denote{\langle w \rangle\!\!\downarrow \,} =  \{ t \mid  t \models {\langle w \rangle\!\!\downarrow}\, \{*\}\, \}$, note that the i/o-trace $\langle w \rangle\!\!\downarrow$ is given by some finite branch, which if in $\sqcup_i t_i$ must be in $t_i$ for some $i$. The Scott-openness of $\denote{\langle w\rangle_{\!\dots} \,} = \{ t \mid  t \models   \langle w\rangle_{\!\dots}\, \}$ holds for similar reasons. We have strong decomposability because of the implications:

$\mu r \in \denote{\langle w \rangle\!\!\downarrow} \quad \Leftrightarrow \quad \exists v,u$ i/o-traces$, ~ vu = w ~ \wedge ~ r \in \langle v \rangle\!\!\downarrow\!(\langle u \rangle\!\!\downarrow\!(\{\ast\}))$.

\noindent Which means $r$ follows trace $v$ returning $t$, and $t$ follows trace $u$ returning $*$.

$\mu r \in \denote{\langle w \rangle_{\!\dots}} \; \Leftrightarrow \quad 
r \in \langle w \rangle_{\!\dots}(\langle \rangle_{\!\dots}(\{\ast\})) ~ \vee	~ (\exists v,u, ~ vu = w ~ \wedge ~ r \in \langle v \rangle\!\!\downarrow(\langle u \rangle_{\!\dots}(\{\ast\})))$.

\noindent Which means either $r$ follows trace $w$ immediately, or it follows $v$ returning a tree that follows $u$.
\end{myexampleD}

In Example \ref{example:nonD} above, we used the obvious fact that the property of being strongly decomposable is preserved under union. Similarly, if all modalities are upwards-closed, then by Lemma \ref{lem:true_decomp},  the property of being decomposable is preserved under union.

\begin{lemma}
	If $\,\Obser_0$ and $\Obser_1$ are decomposable sets of upwards closed modalities for the same signature $\Sigma$, then $\Obser_0 \cup \Obser_1$ is decomposable.
\end{lemma}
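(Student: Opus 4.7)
The plan is to invoke Lemma~\ref{lem:true_decomp}, which gives a characterisation of decomposability in terms of the existence of a witnessing collection of pairs for every instance $\mu r \in o(\{\ast\})$. This characterisation is tailored for exactly this kind of closure argument, because its content is genuinely \emph{existential} in the collection, and enlarging the ambient set of modalities can only make such existence easier.

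Given $r \in \Dtrees$ and $o \in \Obser_0 \cup \Obser_1$ with $\mu r \in o(\{\ast\})$, I would split cases on whether $o \in \Obser_0$ or $o \in \Obser_1$; the two cases are entirely symmetric, so consider $o \in \Obser_0$. Decomposability of $\Obser_0$ together with Lemma~\ref{lem:true_decomp} supplies a collection $\{(o_i, \Phi_i)\}_{i \in I}$, with each $o_i \in \Obser_0$ and each $\Phi_i$ a positive unit-type computation formula built from modalities in $\Obser_0$, satisfying conditions (1) and (2) of that lemma. I claim that exactly the same collection witnesses decomposability of $\Obser_0 \cup \Obser_1$ at the same $r$ and $o$.

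To verify this, I would note two things. First, admissibility: since $\Obser_0 \subseteq \Obser_0 \cup \Obser_1$, every $o_i$ lies in $\Obser_0 \cup \Obser_1$, and every $\Phi_i$, being built inductively from modalities in $\Obser_0$ and propositional connectives, is also a valid formula in $\textit{CF}_{\LogPos}(\UnitType)$ computed with respect to the larger set of modalities. Second, the content of the two conditions is stable under enlarging the modality set: condition (1), $r \in o_i(\denote{\Phi_i})$, depends only on the denotation of $\Phi_i$, and an immediate induction on the formula structure shows that $\denote{\Phi_i}$ does not change when the ambient set of modalities is enlarged (each clause of the definition of $\denote{-}$ refers only to the modalities occurring in the formula); condition (2) is a universal statement over all $r' \in \Dtrees$ that makes no reference to the set of modalities at all. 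Hence the collection continues to witness the decomposability condition for the union, and Lemma~\ref{lem:true_decomp} allows us to conclude.

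There is no real obstacle in this argument; the only point requiring any care is the invariance of $\denote{\Phi_i}$ under enlargement of $\Obser$, which is a one-line induction on the structure of $\Phi_i$. The whole proof is therefore just a bookkeeping reformulation that exploits the right characterisation from Lemma~\ref{lem:true_decomp}, much in the spirit of the parenthetical remark preceding the lemma statement.
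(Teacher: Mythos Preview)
Your proposal is correct and follows exactly the approach the paper indicates: the lemma is derived from Lemma~\ref{lem:true_decomp}, and the key observation is that a witnessing collection $\{(o_i,\Phi_i)\}_{i\in I}$ obtained for $\Obser_0$ (or $\Obser_1$) remains a valid witnessing collection for the union, since both the modalities $o_i$ and the formulas $\Phi_i$ embed into the larger setting with unchanged denotations. The paper does not spell out the details beyond the remark you cite, so your write-up is in fact more explicit than the paper's own treatment.
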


In order to prove Theorem \ref{Main_Theorem}, we need to relate the behavioural equivalence $\equiv_{\LogNeg}$ with a modality-defined notion of applicative bisimilarity.
As motivated in the introduction, the connection between these two equivalences is interesting on its own. 
We will spend the next two sections defining this notion of bisimilarity, establishing the connection with behavioural equivalence and proving that the two relations are compatible.
From Section \ref{section:contextual} and onwards, we will return to only studying the behavioural equivalence.

\section{Applicative \texorpdfstring{$\Obser$}{TEXT}-(bi)similarity}\label{section:similarity}

In this section, we define this notion of applicative bisimilarity based on a set of modalities $\Obser$, and establish that it is equal to the behavioural equivalence $\equiv_{\LogNeg}$.
Central to such a definition lies the concept of a \emph{relator}~\cite{Thijs,Levy11}, which we use to lift a relation on value terms to a relation on computation terms. 
Later on in this section, we will prove that this relator, if based on a decomposable set of Scott open modalities $\Obser$, satisfies the right properties in order to prove the compatibility of bisimilarity in Section~\ref{section:Howe}.

Let $\Obser$ be some set of modalities for an effect signature $\Sigma$.
We define a relator based on $\Obser$.

\begin{definition}[$\Obser$-relator] \label{definition:relator-action}
The $\Obser$-\emph{relator}  lifts for each two sets $X$ and $Y$ a relation $\mathcal{R} \subseteq X \times Y$ to a relation $\Obser(\mathcal{R}) \subseteq TX \times TY$, such that
$$t \, \Obser(\mathcal{R}) \, t' \quad \iff \quad \forall A \subseteq X, \forall o \in \Obser, \, t \in o(A) \Rightarrow t' \in o(\Rise{\mathcal{R}}{A}) \enspace .$$
\end{definition}

\noindent
Remember that $\Rise{\mathcal{R}}{A} := \{y \in Y \mid \exists x \in A, x \,\mathcal{R}\, y\}$, and $t \in o(A)$ means $t[\in A] \in \denote{o}$.
By Proposition \ref{prop:base_relation} it holds that $\Obser(\textit{id}_{\UnitType}) = (\prebas)$, and by Lemma \ref{lem:decomp_equ} we know that $\Obser(\prebas) = \, \modbel$. 

We will use this relator to define an $\Obser$-tailored  variant of Abramsky's applicative similarity~\cite{Abramsky91}. First however, we describe the specific $\Obser$-relators that arise for each of our running examples.

\setcounter{myexampleE}{-1}
\begin{myexampleE}[Pure functional computation]
	The statement $t \,\{{\downarrow}\}(\mathcal{R})\, r$ holds if and only if:
	
	\noindent
	When $t$ evaluates to $x \in X$ then $r$ evaluates to a $y \in Y$ such that $x \,\mathcal{R}\, y$.
\end{myexampleE}

\begin{myexampleE}[Error] The statement $t ~ (\{{\downarrow}\} \cup \{\mathsf{E}_e \mid e \in E\})(\mathcal{R}) ~ r$ holds precisely when the following two statements hold:
	\begin{enumerate}
	\item[-] When $t$ evaluates to $x \in X$ then $r$ evaluates to a $y \in Y$ such that $x \,\mathcal{R}\, y$.
	
	\item[-] When $t$ raises an error $e \in E$ (meaning it is node $\textit{raise}_e$), then $r$ raises the same error $e$.
	\end{enumerate}
\end{myexampleE}

\begin{myexampleE}[Nondeterminism] $t ~ \{\Diamond, \, \Box\}(\mathcal{R}) ~ r$ holds precisely when the following statements hold:
	\begin{enumerate}
	\item[-] If $x \in X$ is a leaf of $t$, then $r$ has a leaf $y \in Y$ such that $x \,\mathcal{R}\, y$.
	
	\item[-] If $t$ is finite and has no $\bot$ leaf, then $r$ is finite and has no $\bot$-leaf.
	
	\item[-] If $t$ is finite, has no $\bot$ leaf, and $y \in Y$ is a leaf of $r$, then $t$ has a leaf $x \in X$ such that $x \,\mathcal{R}\, y$.
	\end{enumerate}
\end{myexampleE}

\begin{myexampleE}[Probabilistic choice] It holds that $t ~ \{ \mathsf{P}_{>q} \mid q \in \mathbb{Q},\, 0 \leq q < 1\}(\mathcal{R}) ~ r$ if and only if:
	
	\noindent
	For any $A \subseteq X$, the probability of $t$ terminating with an element of $A$ is at most the probability of $r$ terminating with a $y$ related to some element of $A$ (meaning there is an $x \in A$ such that $x \, \mathcal{R} \, y$).
\end{myexampleE}
\begin{myexampleE}[Global store] The statement $t ~ \{(s \rightarrowtail s') \mid s,s' \in \textit{State}\}(\mathcal{R}) ~ r$ holds if:
	
	\noindent
	For any $s \in \textit{State}$, if $\textit{exec}(t,s) = (x,s')$ then there is a $y \in Y$ such that $\textit{exec}(r,s) = (y,s')$ and $x \,\mathcal{R}\, y$.
\end{myexampleE}

\begin{myexampleE}[Input/output] $t ~ \{{\langle w \rangle\!\!\downarrow},\,
	\langle w\rangle_{\!\dots} \mid w \text{ an i/o-trace}\}(\mathcal{R}) ~ r$ holds precisely when the following two statements hold:
	\begin{enumerate}
	\item[-] If $t$ has $w$ as initial execution trace, then $r$ has $w$ as an initial execution trace.
	
	\item[-] If $t$ has execution trace $w$ terminating with $x \in X$, then $r$ has trace $w$ terminating with some $y \in Y$ such that $x \,\mathcal{R}\, y$.
	\end{enumerate}
\end{myexampleE}

In each of the examples above,  we obtain an $\Obser$-relator that acts in line with expectations, 
based on the explicit definitions of relators for different effects in the paper by Dal Lago \emph{et al.}~\cite{Relational}. In our case, these relators have been 
obtained in a uniform way from the corresponding sets of modalities  $\Obser$. 

A direct comparison with definitions in the paper~\cite{Relational} is slightly involved, because our relators act on the set of effect trees, given by applying the effect-tree monad $T$, whereas those in 
the paper~\cite{Relational} are defined using effect-dependent monads $M$. For example, for probabilistic choice the distribution monad is used.
Despite this difference, there is still a tight relationship between the two approaches. 
The tree monad $T$ is the free-continuous-algebra monad over the effect signature $\Sigma$. There is thus an induced
$\rho_X : TX \to MX$, mapping trees to elements of the effect-specific monad $MX$. It then holds, for each effect example, 
that
$$t \,\Obser(\mathcal{R})\, r \iff \rho(t) \,\Gamma(\mathcal{R}) \, \rho(r) \,,$$
where $\Gamma$ is the relevant relator from the paper~\cite{Relational} and $\Obser$ is the corresponding set of modalities from this paper.

Following the paper~\cite{Relational}, we use the relation-lifting operation of Definition~\ref{definition:relator-action} to define notions
of applicative similarity and bisimilarity. We assume that all modalities of $\Obser$ are upwards closed.

\begin{definition}[Similarity]\label{definition:sim}
	An \emph{applicative $\Obser$-simulation} is a pair of relations $\mathcal{R}^v_{\TypeOne}$ and $\mathcal{R}^c_{\TypeOne}$ for each type $\TypeOne$, where $\mathcal{R}^v_{\TypeOne} \subseteq \textit{Val}(\TypeOne)^2$ and $\mathcal{R}^c_{\TypeOne} \subseteq \textit{Com}(\TypeOne)^2$, such that:
	\begin{enumerate}
		\item $V \,\mathcal{R}^v_{\NatType}\, W ~ \implies ~ (V = W)$ .
		\item $M \,\mathcal{R}^c_{\TypeOne}\, N~ \implies ~|M|$ $\Obser(\mathcal{R}^v_{\TypeOne})$ $|N|$ .
		\item $V \,\mathcal{R}^v_{\TypeTwo \rightarrow \TypeOne}\, W ~ \implies ~\forall U \in \textit{Val}(\TypeTwo), \; VU \, \mathcal{R}^c_{\TypeOne} \, WU$ .
	\end{enumerate}
	\emph{Applicative $\Obser$-similarity} is the largest applicative $\Obser$-simulation, which is equal to the union of all applicative $\Obser$-simulations.
\end{definition}

\begin{definition}[Bisimilarity]\label{definition:bisim}
	An \emph{applicative $\Obser$-bisimulation} is a symmetric $\Obser$-simulation. The relation of \emph{$\Obser$-bisimilarity} is the largest applicative $\Obser$-bisimulation.
\end{definition}

\noindent
Applicative $\Obser$-similarity and $\Obser$-bisimilarity may not exist for all sets of modalities $\Obser$. But, if all modalities of $\Obser$ are upwards closed (as asserted above), their existence is guaranteed. 
Though this fact can be proven directly from relational properties established in Lemma \ref{Rel_prop1}, we can also see it as a consequence of Theorem \ref{log_is_sim}.

\begin{lemma}
	Applicative $\Obser$-bisimilarity is identical to the relation of applicative $(\Obser \cap \Obser^{\mathsf{op}})$-similarity, where $\Obser^{\mathsf{op}}(\mathcal{R}) := (\Obser(\mathcal{R}^{\mathsf{op}}))^{\mathsf{op}}$, so $t \,(\Obser \cap \Obser^{\mathsf{op}})(\mathcal{R})\, r \Leftrightarrow t \,\Obser (\mathcal{R})\, r \wedge r \,\Obser (\mathcal{R}^{\mathsf{op}})\, t$.
\end{lemma}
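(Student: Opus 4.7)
The plan is to establish both inclusions between applicative $\Obser$-bisimilarity $\bisim$ and applicative $(\Obser \cap \Obser^{\mathsf{op}})$-similarity, where the latter is the instance of Definition~\ref{definition:sim} obtained by replacing the relator $\Obser(-)$ by the composite relator $\mathcal{R} \mapsto \Obser(\mathcal{R}) \cap \Obser^{\mathsf{op}}(\mathcal{R})$. The whole argument rests on a single algebraic identity,
$$(\Obser \cap \Obser^{\mathsf{op}})(\mathcal{R}^{\mathsf{op}}) \;=\; \bigl((\Obser \cap \Obser^{\mathsf{op}})(\mathcal{R})\bigr)^{\mathsf{op}},$$
which unfolds directly from the definition of $\Obser^{\mathsf{op}}$: one has $\Obser(\mathcal{R}^{\mathsf{op}}) = (\Obser^{\mathsf{op}}(\mathcal{R}))^{\mathsf{op}}$ and $\Obser^{\mathsf{op}}(\mathcal{R}^{\mathsf{op}}) = (\Obser(\mathcal{R}))^{\mathsf{op}}$, so intersecting the two and using the fact that taking opposites distributes over intersection yields the claim.

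For the inclusion of $\bisim$ into $(\Obser \cap \Obser^{\mathsf{op}})$-similarity, I would verify that $\bisim$ itself is an $(\Obser \cap \Obser^{\mathsf{op}})$-simulation. If $M \bisim^c N$, then the ordinary simulation clause yields $|M|\,\Obser(\bisim^v)\,|N|$; by symmetry also $N \bisim^c M$, so $|N|\,\Obser(\bisim^v)\,|M|$, and because $(\bisim^v)^{\mathsf{op}} = \bisim^v$ this last statement rewrites as $|M|\,\Obser^{\mathsf{op}}(\bisim^v)\,|N|$. The $\NatType$ and application clauses are literally the same in both definitions, so $\bisim$ satisfies all the clauses required of a $(\Obser \cap \Obser^{\mathsf{op}})$-simulation and is therefore contained in the largest one.

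For the reverse inclusion, I would first show that the converse of any $(\Obser \cap \Obser^{\mathsf{op}})$-simulation $\mathcal{R}$ is again a $(\Obser \cap \Obser^{\mathsf{op}})$-simulation. The $\NatType$ and application clauses are manifestly preserved by converting $\mathcal{R}$ into $\mathcal{R}^{\mathsf{op}}$ (equality is symmetric, and the quantifier over $U$ is insensitive to swapping the two sides). For the main clause, the duality identity above turns the requirement on $\mathcal{R}^{\mathsf{op}}$, namely $M\mathcal{R}^{c,\mathsf{op}} N \Rightarrow |M|\,(\Obser \cap \Obser^{\mathsf{op}})(\mathcal{R}^{v,\mathsf{op}})\,|N|$, into precisely the original clause $N\mathcal{R}^c M \Rightarrow |N|\,(\Obser \cap \Obser^{\mathsf{op}})(\mathcal{R}^v)\,|M|$. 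By maximality, the largest $(\Obser \cap \Obser^{\mathsf{op}})$-simulation contains its own converse and is therefore symmetric. Projecting the intersected simulation clause onto its $\Obser(-)$ component, it is in particular a symmetric $\Obser$-simulation, i.e., an $\Obser$-bisimulation, and so is contained in $\bisim$.

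The only step requiring any care is the verification of the duality identity for the composite relator; once that is in hand, everything else reduces to bookkeeping. Upwards-closedness plays no role beyond its prior use in guaranteeing that both largest simulations exist.
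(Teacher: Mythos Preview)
Your proposal is correct and follows essentially the same approach as the paper. The paper's proof also shows that bisimilarity is an $(\Obser \cap \Obser^{\mathsf{op}})$-simulation via symmetry, and for the converse direction establishes the same duality (phrased there as the implication $|N|\,\Obser(\mathcal{R})\,|M| \wedge |M|\,\Obser(\mathcal{R}^{\mathsf{op}})\,|N| \Rightarrow |M|\,(\Obser \cap \Obser^{\mathsf{op}})(\mathcal{R}^{\mathsf{op}})\,|N|$) before concluding symmetry via maximality; the only cosmetic difference is that the paper routes through the symmetric closure $\mathcal{R} \cup \mathcal{R}^{\mathsf{op}}$ being a simulation rather than $\mathcal{R}^{\mathsf{op}}$ alone, which is a slightly less direct way of reaching the same conclusion.
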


\begin{proof}
	Let $\mathcal{R}$ be the $\Obser$-bisimilarity, then by symmetry we have $\mathcal{R}^{\mathsf{op}} = \mathcal{R}$. So if $M \,\mathcal{R}\, N$ we have $N \,\mathcal{R}\, M$, and by the simulation rules we derive $|M| \,\Obser (\mathcal{R})\, |N|$ and $|N| \,\Obser(\mathcal{R})\, |M|$, so $\mathcal{R}$ is an $\Obser \cap \Obser^{\mathsf{op}}$-simulation.
	
	Let $\mathcal{R}$ be the $\Obser \cap \Obser^{\mathsf{op}}$-similarity. If $M \,\mathcal{R}^{\mathsf{op}}\, N$ then $|N| \,(\Obser \cap \Obser^{\mathsf{op}})(\mathcal{R})\, |M|$ so $|N| \,\Obser(\mathcal{R})\, |M| \wedge |M| \,\Obser(\mathcal{R}^{\mathsf{op}})\, |N|$ which results in $|M| \,(\Obser \cap \Obser^{\mathsf{op}})(\mathcal{R}^{\mathsf{op}})\, |N|$. Verifying the other simulation conditions as well, we can conclude that the symmetric closure $\mathcal{R} \cup \mathcal{R}^{\mathsf{op}}$ is an $\Obser \cap \Obser^{\mathsf{op}}$-simulation. So $\mathcal{R} = \mathcal{R} \cup \mathcal{R}^{\mathsf{op}}$ and hence $\mathcal{R}$ is a symmetric $\Obser$-simulation.
\end{proof}

For brevity, we will leave out the word ``applicative'' from here on out.
The key result now is that the maximal relation, the $\Obser$-similarity is, given our assertion that all modalities are upwards closed, the same object as our logical preorder. We first give a short Lemma.

\begin{lemma}[Characteristic formulas]\label{pre_form}
	For any fragment $\mathcal{L}$ of $\,\mathcal{V}$ closed under countable conjunction, it holds that for each value $V$ there is a formula $\chi^\mathcal{L}_V \in \mathcal{L}$ such that $W \models_{\mathcal{L}} \chi^\mathcal{L}_V \Leftrightarrow V \sqsubseteq_{\mathcal{L}} W$. Similarly, for every computation $M$, there is a computation formula 
	$\Chi^\mathcal{L}_M \in \mathcal{L}$ such that $N \models_{\mathcal{L}} \Chi^\mathcal{L}_M \Leftrightarrow M\sqsubseteq_{\mathcal{L}} N$. 
\end{lemma}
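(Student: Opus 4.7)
The plan is to build each characteristic formula as a countable conjunction of formulas that witness the failures of the preorder. For the value case, given $V \in \textit{Val}(\tau)$, the strategy is as follows. By the preceding remark in the paper, the term sets $\textit{Val}(\tau)$ and $\textit{Com}(\tau)$ are countable. Hence the collection $\{W \in \textit{Val}(\tau) \mid V \not\sqsubseteq_{\mathcal{L}} W\}$ is countable. For each such $W$, the definition of $\sqsubseteq_{\mathcal{L}}$ directly yields a formula $\phi_{V,W} \in \textit{VF}_{\mathcal{L}}(\tau)$ with $V \models \phi_{V,W}$ but $W \not\models \phi_{V,W}$; I would fix one such choice for each $W$ (using countable choice). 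Then I define
$$\chi^{\mathcal{L}}_V ~:=~ \bigwedge_{W:\, V \not\sqsubseteq_{\mathcal{L}} W} \phi_{V,W} ~,$$
which lives in $\mathcal{L}$ by the assumed closure under countable conjunction.

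Second, I would verify the biconditional. The forward direction ($V \sqsubseteq_{\mathcal{L}} W \Rightarrow W \models \chi^{\mathcal{L}}_V$) is immediate: $V$ satisfies every conjunct, so any $W$ with $V \sqsubseteq_{\mathcal{L}} W$ also satisfies each conjunct, and hence the whole conjunction. For the converse, suppose $W \models \chi^{\mathcal{L}}_V$ but, towards a contradiction, $V \not\sqsubseteq_{\mathcal{L}} W$. Then $\phi_{V,W}$ is one of the conjuncts, so $W \models \phi_{V,W}$, directly contradicting the choice of $\phi_{V,W}$.

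Third, the computation case is proved entirely in parallel: using countability of $\textit{Com}(\tau)$, for each $N$ with $M \not\sqsubseteq_{\mathcal{L}} N$ I pick $\Phi_{M,N} \in \textit{CF}_{\mathcal{L}}(\tau)$ separating $M$ from $N$ (such a separator exists by definition of $\sqsubseteq_{\mathcal{L}}$ on computations), and set
$$\Chi^{\mathcal{L}}_M ~:=~ \bigwedge_{N:\, M \not\sqsubseteq_{\mathcal{L}} N} \Phi_{M,N} ~,$$
which lies in $\mathcal{L}$, with the biconditional verified as above.

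There is no genuine obstacle in this argument: the construction is essentially a packaging of the definition of $\sqsubseteq_{\mathcal{L}}$. The only mildly delicate point is ensuring that a countable conjunction is enough, and this is precisely where the earlier observation that value and computation term sets are countable is invoked. Note that the hypothesis that $\mathcal{L}$ be closed under countable conjunction matches this cardinality exactly, so the lemma covers both $\LogNeg$ and $\LogPos$ without change.
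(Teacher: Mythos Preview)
Your proof is correct and follows essentially the same construction as the paper: for each $W$ with $V \not\sqsubseteq_{\mathcal{L}} W$ pick a separating formula and take their conjunction. Your version is in fact more explicit about why a \emph{countable} conjunction suffices and about verifying both directions of the biconditional, whereas the paper's proof is terse on these points.
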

\begin{proof}
	We prove the statement for values $V$.
	For each $U$ such that $(V \not\sqsubseteq_{\mathcal{L}} U)$, choose a formula $\phi^U \in \mathcal{L}$ such that $V \models_{\mathcal{L}} \phi^U$ and $(U \not\models \phi^U)$. Then if we define $\chi_V := \bigwedge_{\{U \,\mid\, V \not\sqsubseteq_{\mathcal{L}} U\}} (U \mapsto \phi^U)$  it holds that $V \not\sqsubseteq_{\mathcal{L}} U \Leftrightarrow U \not\models \chi_V$, which is what we want.
\end{proof}

\noindent
As in the proof above, we usually omit the superscript $\mathcal{L}$ when clear from the context. Note that when 
$\mathcal{L}$ is the full logic $\mathcal{V}$, it holds that 
\begin{equation}
\label{equation:characteristic}
N \models_{\mathcal{V}} \Chi^\mathcal{L}_M ~ \Leftrightarrow ~ M\equiv_{\mathcal{V}} N \enspace ,
\end{equation}
and similarly for value formulas.

\begin{theoremm}[A]\label{log_is_sim}
	For any family of upwards closed modalities $\Obser$, we have that the logical preorder $\sqsubseteq_{\LogPos}$ is identical to $\Obser$-similarity.
\end{theoremm}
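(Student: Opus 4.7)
The plan is to prove the two inclusions separately. For $\sqsubseteq_{\LogPos}\, \subseteq\,$ $\Obser$-similarity I would show that the pair of relations given by $\sqsubseteq_{\LogPos}$, restricted to values and computations respectively, is itself an applicative $\Obser$-simulation; maximality then concludes. The condition on numeral values is immediate from $\{n\} \in \textit{VF}_{\LogPos}(\NatType)$, and the function-application condition follows from Lemma~\ref{funct_clas}: if $V \sqsubseteq_{\LogPos} W$ and $U \in \textit{Val}(\TypeTwo)$, then for every $\Phi \in \textit{CF}_{\LogPos}(\TypeOne)$, $VU \models \Phi$ iff $V \models (U \mapsto \Phi)$, which forces $W \models (U \mapsto \Phi)$ and hence $WU \models \Phi$, i.e.\ $VU \sqsubseteq_{\LogPos} WU$.

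The interesting condition is the relator one: for $M \sqsubseteq_{\LogPos} N$ I need $|M| \,\Obser(\sqsubseteq_{\LogPos})\, |N|$. Given $o \in \Obser$ and $A \subseteq \textit{Val}(\TypeOne)$ with $|M| \in o(A)$, I form, via Lemma~\ref{pre_form} applied to the fragment $\LogPos$, the characteristic formulas $\chi_V \in \textit{VF}_{\LogPos}(\TypeOne)$ for each $V \in A$, and set $\phi_A := \bigvee_{V \in A} \chi_V$, whose satisfying values are exactly $\Rise{\sqsubseteq_{\LogPos}}{A}$. Since $A \subseteq \Rise{\sqsubseteq_{\LogPos}}{A}$, upward closure of $\denote{o}$ (in the tree order) gives $|M| \in o(\Rise{\sqsubseteq_{\LogPos}}{A})$, i.e.\ $M \models o\,\phi_A$. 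Because $o\,\phi_A \in \textit{CF}_{\LogPos}(\TypeOne)$ and $M \sqsubseteq_{\LogPos} N$, we obtain $N \models o\,\phi_A$, which unpacks to $|N| \in o(\Rise{\sqsubseteq_{\LogPos}}{A})$, as required.

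For the reverse inclusion, $\Obser$-similarity $\subseteq\, \sqsubseteq_{\LogPos}$, I would prove by mutual induction on the structure of positive formulas $\phi \in \textit{VF}_{\LogPos}(\TypeOne)$ and $\Phi \in \textit{CF}_{\LogPos}(\TypeOne)$ that whenever two terms are related by $\Obser$-similarity, the first's satisfying a formula forces the second to satisfy it too. The propositional cases (countable $\bigvee$, $\bigwedge$) are immediate, and the basic value cases $\{n\}$ and $(V \mapsto \Phi)$ use conditions~1 and~3 of Definition~\ref{definition:sim} respectively, the latter invoking the induction hypothesis at $\Phi$. The basic computation case $o\,\phi$ is the crucial one: from $M \models o\,\phi$ we have $|M| \in o(B)$ where $B := \{V \mid V \models \phi\}$; condition~2 of Definition~\ref{definition:sim} then gives $|N| \in o(\Rise{\mathcal{S}^v}{B})$ (writing $\mathcal{S}^v$ for $\Obser$-similarity on values); the induction hypothesis at $\phi$ shows $\Rise{\mathcal{S}^v}{B} \subseteq B$; and upward closure of $\denote{o}$ then yields $|N| \in o(B)$, hence $N \models o\,\phi$.

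The main obstacle is marrying the set-theoretic action of the relator (quantifying over all subsets $A$ of values) with the formula-based definition of the logical preorder. The resolution in both directions is the same pair of ingredients: characteristic formulas in the fragment $\LogPos$, which demand closure under (countable) conjunction, together with upward closure of modalities, which allows replacing a set by its $\sqsubseteq_{\LogPos}$-upward closure without leaving $\denote{o}$. Once these are in hand, both directions reduce to bookkeeping. As a by-product this gives existence of $\Obser$-similarity (since $\sqsubseteq_{\LogPos}$ is always a well-defined simulation when the modalities are upwards closed), without needing a separate appeal to relational properties.
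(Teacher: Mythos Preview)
Your proposal is correct and follows essentially the same approach as the paper. The first direction is identical: you show $\sqsubseteq_{\LogPos}$ is an $\Obser$-simulation using characteristic formulas (Lemma~\ref{pre_form}) and upward closure of modalities, exactly as the paper does. For the reverse inclusion, the paper organises the argument by induction on \emph{types}, showing that an arbitrary $\Obser$-simulation $\mathcal{R}$ satisfies $\mathcal{R} \subseteq\, \sqsubseteq_{\LogPos}$ (invoking Lemmas~\ref{funct_clas} and~\ref{comp_clas} to reduce to basic formulas at each type), whereas you organise it by mutual induction on the \emph{structure of positive formulas}; the underlying content---using the three simulation conditions and upward closure---is identical, and both inductions are valid. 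Your formula induction is arguably a touch more direct, since it avoids the detour through the basic-formula characterisations, but the paper's type induction makes the existence argument for $\Obser$-similarity slightly cleaner by quantifying over all simulations rather than presupposing the largest one.
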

\begin{proof}
	We write $\sqsubseteq$ instead of $\sqsubseteq_{\LogPos}$ to make room for other annotations.
	We first prove that our logical preorder $\sqsubseteq$ is an  $\Obser$-simulation by induction on types.
	\begin{enumerate}
		\item Values of $\NatType$. If $\overline{n} \sqsubseteq^v_{\NatType} \overline{m}$, then since $\overline{n} \models \{n\}$ we have that $\overline{m} \models \{n\}$, hence $m=n$.
	
		\item Computations of $\TypeOne$. Assume $M \sqsubseteq^c_{\TypeOne} N$, we prove that $|M| \,\Obser(\sqsubseteq^v_{\TypeOne})\, |N|$. Take $A \subseteq \textit{Val}(\TypeOne)$ and $o \in \Obser$ such that $|M| \in o(A)$. Taking the following formula $\phi_A := \bigvee_{a \in A} \chi_a$ (where $\chi_a$ as in Lemma \ref{pre_form}), then $b \models \phi_A \Leftrightarrow \exists a \in A, a \sqsubseteq^v_{\TypeOne} b \Leftrightarrow b \in \Rise{(\sqsubseteq^v_{\TypeOne})}{A}$ and $\forall a \in A, a \models \phi_A$. So $|M|[\models \phi_A] \geq \tis{|M|}{A}$, hence since $o$ is upwards closed, $|M|[\models \phi_A] \in \denote{o}$. 
		By $M \sqsubseteq^c_{\TypeOne} N$ we have $|N|[\models \phi_A] \in \denote{o}$ and hence $|N| \in o(\Rise{(\sqsubseteq^v_{\TypeOne})}{A})$. We conclude that $|M| \,\Obser(\sqsubseteq^v_{\TypeOne})\, |N|$.
	
		\item Function values of $\TypeTwo \rightarrow \TypeOne$. This step follows from Lemma \ref{funct_clas} and the induction hypothesis.
	\end{enumerate}
	
	\noindent
	We can conclude that $\sqsubseteq$ is an $\Obser$-simulation. Now take an arbitrary $\Obser$-simulation $\mathcal{R}$. We prove by induction on types that $\mathcal{R} \subseteq (\sqsubseteq)$.
	
	\begin{enumerate}
		\item Values of $\NatType$. If $V \,\mathcal{R}^v_{\NatType}\, W$ then $V=W$, hence by reflexivity we get $V \sqsubseteq^v_{\NatType} W$.
		
		\item Computations of $\TypeOne$. Assume $M \,\mathcal{R}^c_{\TypeOne}\, N$, we prove that $M \sqsubseteq^c_{\TypeOne} N$ using the characterisation from Lemma \ref{comp_clas}. Say for $o \in \Obser$ and $\phi \in \textit{VF}\,(\TypeOne)$ we have $M \models o \, \phi$. 
		Let $A_{\phi} := {\{a \in \textit{Val}(\TypeOne) \,\mid\, a \models \phi \}} \subseteq \textit{Val}(\TypeOne)$, then $|M| \in o(A_{\phi})$ hence by $M \,\mathcal{R}^c_{\TypeOne}\, N$ and the simulation property, we derive $|N| \in o(\Rise{(\mathcal{R}^v_{\TypeOne})}{A_{\phi}})$. 
		By the induction hypothesis on values of $\TypeOne$, we know that $\mathcal{R}^v_{\TypeOne} \subseteq (\sqsubseteq^v_{\TypeOne})$, hence  `$\exists a \in A_{\phi}, a \,\mathcal{R}^v_{\TypeOne}\, b$' implies $b \models \phi$. 
		We get that $|N|[\models \phi] \geq \tis{|N|}{\Rise{(\mathcal{R}^v_{\TypeOne})}{A_{\phi}}}$, so by upwards closure of $o$ we have $|N|[\models \phi] \in \denote{o}$ meaning $N \models o \, \phi$. We conclude that $M \sqsubseteq^c_{\TypeOne} N$.
		
		\item Function values of $\TypeTwo \rightarrow \TypeOne$. Assume $V \,\mathcal{R}^v_{\TypeTwo \rightarrow \TypeOne}\, W$. We prove $V \sqsubseteq^v_{\TypeTwo \rightarrow \TypeOne} W$ using the characterisation from Lemma \ref{funct_clas}. Assume $V \models (U \mapsto \Phi)$ where $U \in \textit{Val}(\TypeTwo)$ and $\Phi \in \textit{CF}\,(\TypeOne)$, so $VU \models \Phi$. By $V \,\mathcal{R}^v_{\TypeTwo \rightarrow \TypeOne}\, W$ we have $VU \; \mathcal{R}^c_{\TypeOne} \; WU$ and by the induction hypothesis we have $\mathcal{R}^c_{\TypeOne} \subseteq (\sqsubseteq^c_{\TypeOne})$, so $VU \sqsubseteq^c_{\TypeOne} WU$. Hence $WU \models \Phi$ meaning $W \models (U \mapsto \Phi)$. We can conclude that $V \sqsubseteq^v_{\TypeTwo \rightarrow \TypeOne} W$.
		
		\item Values of $\bf{1}$. If $V \,\mathcal{R}^v_{\bf{1}}\, W$ then $V=*=W$ hence $V \sqsubseteq^v_{\bf{1}} W$.
	\end{enumerate}
	In conclusion: any $\Obser$-simulation $\mathcal{R}$ is a subset of the $\Obser$-simulation $\sqsubseteq_{\LogPos}$. So $\sqsubseteq_{\LogPos}$ is $\Obser$-similarity.
\end{proof}

Adapting the proof slightly, we can find the connection between full behavioural equivalence and $\Obser$-bisimilarity.

\addtocounter{theoremm}{-1}
\begin{theoremm}[B]
	For any family of upwards closed modalities $\Obser$, we have that the logical equivalence $\equiv_{\LogNeg}$ is identical to  $\Obser$-bisimilarity.
\end{theoremm}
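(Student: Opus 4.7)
The plan is to closely mirror the two-direction structure of the proof of Theorem A, making precisely those changes needed to upgrade similarity/positive-logic results to bisimilarity/full-logic ones. Since $\LogNeg$ is closed under negation, the preorder $\sqsubseteq_{\LogNeg}$ already coincides with the equivalence $\equiv_{\LogNeg}$, so it suffices to show that $\equiv_{\LogNeg}$ is the largest $\Obser$-bisimulation.

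For the first direction, I would show that $\equiv_{\LogNeg}$ is an $\Obser$-bisimulation. Symmetry is immediate, so what remains is to verify the three simulation clauses, which I would do by induction on types in exact parallel with Theorem A. The only substantive change is in the computation case: given $M \equiv_{\LogNeg} N$ and $|M| \in o(A)$ for some $A \subseteq \textit{Val}(\TypeOne)$, I would form $\phi_A := \bigvee_{a \in A} \chi^{\LogNeg}_a$ using the characteristic formulas from Lemma \ref{pre_form} applied to $\LogNeg$ (which is closed under countable conjunction), so that $b \models \phi_A$ iff $b \in \Rise{(\equiv_{\LogNeg})}{A}$. Then $|M|[\models \phi_A] \geq |M|[\in A] \in \denote{o}$, upwards closure of $o$ gives $M \models o\,\phi_A$, and from $M \equiv_{\LogNeg} N$ we obtain $|N| \in o(\Rise{(\equiv_{\LogNeg})}{A})$. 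The function-value case transfers verbatim from Theorem A via the basic formulas $(V \mapsto \Phi)$.

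For the second direction, let $\mathcal{R}$ be any $\Obser$-bisimulation. I would again induct on types to show $\mathcal{R} \subseteq (\equiv_{\LogNeg})$. At computation type, assuming $M \,\mathcal{R}^c\, N$, I need both $M \models \Phi \Rightarrow N \models \Phi$ and the converse for every $\Phi \in \textit{CF}\,(\TypeOne)$. I would prove this by a secondary induction on the structure of $\Phi$ (using the normal form of Lemma \ref{lemma:normal} if convenient): boolean connectives are immediate; negation is where the symmetric (bisimulation) hypothesis pays off, since both directions of the equivalence propagate through $\neg$; the basic case $\Phi = o\,\phi$ works exactly as in Theorem A, using the induction hypothesis on values together with the simulation clause for one implication and the same clause for $\mathcal{R}^{\mathsf{op}}$ (which is also a bisimulation) for the other. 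The function-value case is handled as in Theorem A using the characterisation of Lemma \ref{funct_clas}, now with $\Phi$ ranging over full $\LogNeg$ computation formulas rather than just positive ones.

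The only genuine novelty relative to Theorem A is the negation case in the formula induction of the second direction, and this is precisely where the symmetry of bisimulation is used: without it one obtains only one of the two required implications. No further obstacles are expected, as the rest of the argument is a bookkeeping adaptation of the proof already given.
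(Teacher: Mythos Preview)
Your proposal is correct and follows essentially the same route as the paper: show that $\equiv_{\LogNeg}$ is a symmetric $\Obser$-simulation using the characteristic formulas of Lemma~\ref{pre_form} (now taken in $\LogNeg$), and then show that every $\Obser$-bisimulation is contained in $\equiv_{\LogNeg}$ by induction on types. The paper's proof is terser because it invokes the $\LogNeg$-analogues of Lemmas~\ref{funct_clas} and~\ref{comp_clas} (which reduce the equivalence to checking only basic formulas $o\,\phi$ and $(V \mapsto \Phi)$, with ``if and only if'' in place of ``implies''), thereby avoiding your secondary induction on formula structure. One small clarification: in your formula induction, if you are proving the biconditional $M \models \Phi \Leftrightarrow N \models \Phi$ at every step, then the negation case is automatic from the inductive hypothesis and the symmetry of $\mathcal{R}$ is actually spent at the \emph{basic} case $o\,\phi$ (as you yourself note when invoking $\mathcal{R}^{\mathsf{op}}$); your remark that ``negation is where the symmetric hypothesis pays off'' is only accurate if you are instead proving a single implication and letting negation flip the direction.
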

\begin{proof}
	Note first that $\equiv_{\LogNeg}$ is symmetric. 
	Secondly, note that since $\equiv_{\LogNeg} \,=\, \sqsubseteq_{\LogNeg}$ we know by Lemma \ref{pre_form}, that for any $V$, there is a formula $\chi_V$ such that $W \models \chi_V \Leftrightarrow V \equiv_{\LogNeg} W$. 
	Using these observations and the appropriate adjustments to Lemma \ref{funct_clas} and \ref{comp_clas}, the proof of this result is as in Theorem \ref{log_is_sim}(A), proving $\equiv_{\LogNeg}$ is a symmetric $\Obser$-simulation and contains any other symmetric $\Obser$-simulation.
\end{proof}

Given Theorem~\ref{log_is_sim}(b), it is straightforward to observe an expressive completeness result with respect to the notion of \emph{behavioural property} induced by $\Obser$-bisimilarity (as adumbrated in Section~\ref{section:introduction}). A subset $S \subseteq  \textit{Com}(\TypeOne)$  (respectively $S \subseteq \textit{Val}(\TypeOne)$)
may be called \emph{behavioural} if it respects bisimilarity; that is, in the case of computations, if $M \in S$ and 
$M$ is bisimilar to $M'$ then  $M' \in S$.
It follows from Theorem~\ref{log_is_sim}(b) that the behavioural subsets coincide with the logically definable subsets.
Namely, $S \subseteq  \textit{Com}(\TypeOne)$ is behavioural if and only if there exists a formula $\Phi \in \textit{CF}\,(\TypeOne)$ such that $S = \{M \mid M \models \Phi\}$;
and similarly, any $S \subseteq  \textit{Val}(\TypeOne)$ is behavioural if and only if there exists a formula $\phi \in \textit{VF}\,(\TypeOne)$ such that $S = \{V \mid V \models \phi\}\,$.
The proof is straightforward, given the observation that every behavioural $S$ is a union of equivalence classes under bisimilarity. Hence, for example, $\Phi$ may be defined as an infinite disjunction
\[
\Phi ~ := ~ \bigvee_{M \in S} \, \Chi_M \enspace ,
\]
using the characteristic formulas obtained in~(\ref{equation:characteristic}) above.

\subsection{Relator properties}\label{section:relprop}

In this subsection, 
we identify abstract properties of our relation lifting $\Obser(\mathcal{R})$, which will be used in our
application of Howe's method, in Section \ref{section:Howe}. The necessary properties were identified in the paper~\cite{Relational}. The contribution of this paper is  that all the required properties follow 
from our modality-based definition of $\Obser(\mathcal{R})$.

The first set of properties tell us that $\Obser(-)$ is indeed a relator in the sense of Levy~\cite{Levy11}.

\begin{lemma}\label{Rel_prop1}
	If the modalities from $\Obser$ are upwards closed, then $\Obser(-)$ is a relator, meaning that:
	\begin{enumerate}
		\item If \,$\mathcal{R} \subseteq X \times X$ is reflexive, then so is $\Obser(\mathcal{R})$.
		\item $\forall \mathcal{R} \subseteq X \times Y, \forall \mathcal{S} \subseteq Y \times Z, \,\, \Obser(\mathcal{R})\, \Obser(\mathcal{S}) \subseteq \Obser(\mathcal{R} \mathcal{S})$, where $\mathcal{R} \mathcal{S} \subseteq X \times Z$ is relation composition.
		\item $\forall \mathcal{R} \subseteq X \times Y, \forall \mathcal{S} \subseteq X \times Y, \,\, \mathcal{R} \subseteq \mathcal{S} \Rightarrow \Obser(\mathcal{R}) \subseteq \Obser(\mathcal{S})$.
		\item $\forall f: X \to Z, g: Y \to W, \mathcal{R} \subseteq Z \times W, \Obser((f \times g)^{-1}\mathcal{R}) = (Tf \times Tg)^{-1} \Obser(\mathcal{R})$
		
		where $(f \times g)^{-1}(\mathcal{R}) = \{(x,y) \in X \times Y \,\mid\, f(x) \,\mathcal{R}\, g(y)\}$
		\footnote{In general, given $h: X \to Y$, $A \subseteq X$, and $B \subseteq Y$, we write $h(A) := \{h(x) \mid x \in X\}$ and $h^{-1}(B) := \{x \in X \mid {h(x) \in B}\}$.}.
	\end{enumerate}
\end{lemma}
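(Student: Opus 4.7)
The plan is to verify each of the four relator axioms directly from the definition of $\Obser(\mathcal{R})$ in Definition~\ref{definition:relator-action}, using a few standing observations. The key auxiliary facts I would establish first are: (i) upward closure of each $o \in \Obser$ is equivalent to monotonicity of $A \mapsto o(A)$, because $A \subseteq B$ gives $t[{\in A}] \leq t[{\in B}]$ on trees; (ii) for $f : X \to Z$, the identity $Tf(t)[{\in B}] = t[{\in f^{-1}B}]$, so $Tf(t) \in o(B)$ iff $t \in o(f^{-1}B)$; (iii) the standard right-set calculations $\Rise{\mathcal{S}}{\Rise{\mathcal{R}}{A}} = \Rise{\mathcal{R}\mathcal{S}}{A}$ and $\Rise{(f \times g)^{-1}\mathcal{R}}{A} = g^{-1}(\Rise{\mathcal{R}}{f(A)})$.

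For (1), reflexivity of $\mathcal{R}$ gives $A \subseteq \Rise{\mathcal{R}}{A}$ for every $A$, so if $t \in o(A)$ then by (i) also $t \in o(\Rise{\mathcal{R}}{A})$, whence $t \,\Obser(\mathcal{R})\, t$. For (3), if $\mathcal{R} \subseteq \mathcal{S}$ then $\Rise{\mathcal{R}}{A} \subseteq \Rise{\mathcal{S}}{A}$, so by (i) the implication $t \in o(A) \Rightarrow t' \in o(\Rise{\mathcal{R}}{A})$ upgrades to $t \in o(A) \Rightarrow t' \in o(\Rise{\mathcal{S}}{A})$. For (2), assume $t \,\Obser(\mathcal{R})\, t'$ and $t' \,\Obser(\mathcal{S})\, t''$; given $A \subseteq X$, $o \in \Obser$ with $t \in o(A)$, the first hypothesis yields $t' \in o(\Rise{\mathcal{R}}{A})$, and then the second yields $t'' \in o(\Rise{\mathcal{S}}{\Rise{\mathcal{R}}{A}})$, which by the identity in (iii) equals $o(\Rise{\mathcal{R}\mathcal{S}}{A})$.

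For (4), which I expect to be the main obstacle, I would expand both sides using (ii) and the right-set identity in (iii). On the one hand, $(Tf, Tg)$-membership in $\Obser(\mathcal{R})$ unfolds to: for all $B \subseteq Z$ and $o$, $t \in o(f^{-1}B)$ implies $t' \in o(g^{-1}\Rise{\mathcal{R}}{B})$. On the other, $t \,\Obser((f \times g)^{-1}\mathcal{R})\, t'$ unfolds to: for all $A \subseteq X$ and $o$, $t \in o(A)$ implies $t' \in o(g^{-1}\Rise{\mathcal{R}}{f(A)})$. For the $\supseteq$ direction, given $A \subseteq X$ set $B := f(A)$: then $A \subseteq f^{-1}f(A) = f^{-1}B$, so $t \in o(A) \Rightarrow t \in o(f^{-1}B)$ by (i), and the conclusion follows directly. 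For the $\subseteq$ direction, given $B \subseteq Z$ set $A := f^{-1}B$: the hypothesis delivers $t' \in o(g^{-1}\Rise{\mathcal{R}}{f(f^{-1}B)})$, and since $f(f^{-1}B) \subseteq B$ gives $g^{-1}\Rise{\mathcal{R}}{f(f^{-1}B)} \subseteq g^{-1}\Rise{\mathcal{R}}{B}$, upward closure promotes this to $t' \in o(g^{-1}\Rise{\mathcal{R}}{B})$.

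The only place upward closure is used nontrivially is in (1), (3), and the two set-containment steps in (4); composition (2) does not need it at all. Thus the whole proof reduces to the set-theoretic identities about $\Rise{}{}$ together with the monotonicity consequence of upward closure, and no further properties of the signature $\Sigma$ or the particular tree structure are invoked.
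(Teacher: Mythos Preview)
Your proposal is correct and follows essentially the same approach as the paper's proof: both arguments reduce each relator axiom to elementary right-set identities together with the monotonicity of $A \mapsto o(A)$ that upward closure provides. Your treatment of item~(4) via the substitutions $B := f(A)$ and $A := f^{-1}B$, using $A \subseteq f^{-1}f(A)$ and $f(f^{-1}B) \subseteq B$ to invoke upward closure, matches the paper's argument step for step; you are simply more explicit in isolating the auxiliary facts (i)--(iii) upfront, and in noting that (2) needs no upward closure.
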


\begin{proof} We prove each property separately.
	\begin{enumerate}
		\item For any set $A \subseteq X$ we have $A \subseteq (\Rise{\mathcal{R}}{A})$ by reflexivity of $\mathcal{R}$. 
		So for any $t \in TX$, if $t \in o(A)$ then by upwards closure of $o$ we have $t \in o(\Rise{\mathcal{R}}{A})$. 
		We conclude that $t \,\Obser(\mathcal{R})\, t$.
		
		\item We can see this by observing that $\Rise{(\mathcal{R}\mathcal{S})}{A} = \{z \in Z \mid \exists x \in X, y \in Y, x \,\mathcal{R}\,y\,\mathcal{S}\,z\} = \Rise{\mathcal{S}}{(\Rise{\mathcal{R}}{A})}$, and with $t \,\Obser(\mathcal{R})\, r \,\Obser(\mathcal{S})\, l$ it holds that $t \in o(A) \Rightarrow r \in o(\Rise{\mathcal{R}}{A}) \Rightarrow l \in o(\Rise{\mathcal{S}}{(\Rise{\mathcal{R}}{A})})$.
		
		\item With $\mathcal{R} \subseteq \mathcal{S}$ it holds that $\Rise{\mathcal{R}}{A} \subseteq \Rise{\mathcal{S}}{A}$ for any set $A \subseteq X$. 
		Assume $t  \,\Obser(\mathcal{R})\, r$ and $t \in o(A)$, then $r \in o(\Rise{\mathcal{R}}{A})$. 
		Hence with $\tis{r}{\Rise{\mathcal{S}}{A}} \geq \tis{r}{\Rise{\mathcal{R}}{A}}$ and by upwards closure of $o$, $r \in o(\Rise{\mathcal{S}}{A})$.
		
		\item If $t \,\Obser((f \times g)^{-1}\mathcal{R})\, r$ then for all $A \subseteq X$ and $o \in \Obser$, $t \in o(A) \Rightarrow r \in o(\Rise{((f \times g)^{-1} \mathcal{R})}{A}) \Rightarrow r \in o(g^{-1}(\Rise{\mathcal{R}}{f(A)})$. 
		Assume that for some $B$ we have $Tf(t) \in o(B)$, then $t \in o(f^{-1}(B))$ so $r \in o(g^{-1}(\Rise{\mathcal{R}}{f(f^{-1}(B))}))$. 
		Considering that $f(f^{-1}(B)) \subseteq B$ and $o$ is upwards closed, we derive that $r \in o(g^{-1}(\Rise{\mathcal{R}}{B}))$ and hence $Tg(r) \in o(\Rise{\mathcal{R}}{B})$. 
		This is for all such $B$ and $o$, so we can conclude that $Tf(t) \,\Obser(\mathcal{R})\, Tg(r)$.
		
		Now assume $Tf(t) \,\Obser(\mathcal{R})\, Tg(r)$ and $t \in o(A)$, then since $Tf(t) \in o(f(A))$ we have $Tg(r) \in o(\Rise{\mathcal{R}}{f(A)})$ so $r \in o(g^{-1}(\Rise{\mathcal{R}}{f(A)}))$. 
		This is for all such $A$ and $o$, hence $t \,\Obser((f \times g)^{-1}\mathcal{R})\, r$.
	\end{enumerate}
\end{proof}

\noindent
The next property
together with the previous lemma establishes that $\Obser(-)$ is a \emph{monotone relator} in the sense of Thijs~\cite{Thijs}.

\begin{lemma}\label{Rel_prop1a}
	If the modalities from $\Obser$ are upwards closed, then $\Obser(-)$ is {monotone}, meaning for any $f: X \to Z$, $g: Y \to W$, $\mathcal{R} \subseteq X \times Y$ and $\mathcal{S} \subseteq Z \times W$:
	$$t \,\Obser(\mathcal{R})\, r \wedge (\forall x,y, x \,\mathcal{R}\, y \Rightarrow f(x) \, \mathcal{S} \, g(y)) \quad \implies \quad t[x \mapsto f(x)] \, \Obser(\mathcal{S}) \, r[y \mapsto g(y)]$$
\end{lemma}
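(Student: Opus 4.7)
The plan is to derive monotonicity as a direct corollary of the four relator properties already established in Lemma~\ref{Rel_prop1}. The hypothesis $\forall x,y,\, x \,\mathcal{R}\, y \Rightarrow f(x) \,\mathcal{S}\, g(y)$ is, unpacking the definition of $(f \times g)^{-1}\mathcal{S}$, exactly the statement $\mathcal{R} \subseteq (f \times g)^{-1}\mathcal{S}$. So I would start by making this observation explicit.

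Next, I would apply part (3) of Lemma~\ref{Rel_prop1} (monotonicity of $\Obser(-)$ in the relation argument) to conclude $\Obser(\mathcal{R}) \subseteq \Obser((f \times g)^{-1}\mathcal{S})$. Then I would apply part (4) of Lemma~\ref{Rel_prop1} (the naturality/pullback property) to rewrite $\Obser((f \times g)^{-1}\mathcal{S}) = (Tf \times Tg)^{-1} \Obser(\mathcal{S})$.

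Chaining these two inclusions, from $t \,\Obser(\mathcal{R})\, r$ we obtain $t \,(Tf \times Tg)^{-1}\Obser(\mathcal{S})\, r$, which by definition of the preimage relation unfolds to $Tf(t) \,\Obser(\mathcal{S})\, Tg(r)$. Since the earlier notation introduced in Section~\ref{section:language} identifies $Tf(t)$ with $t[x \mapsto f(x)]$ (and similarly for $g$), this is the desired conclusion $t[x \mapsto f(x)] \,\Obser(\mathcal{S})\, r[y \mapsto g(y)]$.

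There is essentially no obstacle here: the hard work has already been done in establishing parts (3) and (4) of Lemma~\ref{Rel_prop1}, and the only non-trivial step is recognising that the pointwise inclusion hypothesis is just the relational inclusion $\mathcal{R} \subseteq (f \times g)^{-1}\mathcal{S}$ phrased elementwise. The proof is therefore a two-line composition and I would write it out with no further commentary.
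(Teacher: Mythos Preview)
Your proof is correct and takes a genuinely different route from the paper. The paper gives a direct element-chasing argument: given $o \in \Obser$ and $K$ with $Tf(t) \in o(K)$, it pulls $K$ back along $f$ to $L = f^{-1}(K)$, applies the hypothesis $t\,\Obser(\mathcal{R})\,r$ to obtain $r \in o(\Rise{\mathcal{R}}{L})$, and then pushes forward along $g$ using the pointwise condition and upwards closure to reach $Tg(r) \in o(\Rise{\mathcal{S}}{K})$.

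Your argument is shorter and more conceptual: you observe that the monotonicity statement is a formal consequence of parts (3) and (4) of Lemma~\ref{Rel_prop1}, via the identification of the pointwise hypothesis with the inclusion $\mathcal{R} \subseteq (f \times g)^{-1}\mathcal{S}$. This has the nice side effect of showing that monotonicity is not an independent property of the $\Obser$-relator but follows from the relator axioms already proved, so it holds for any relator in the sense of Levy. The paper's direct proof, by contrast, is self-contained and makes the role of upwards closure visible at the point where it is actually used (passing from $\Rise{\mathcal{R}}{L}$ to the larger set $g^{-1}(\Rise{\mathcal{S}}{K})$), whereas in your proof that use is buried inside the proofs of parts (3) and (4).
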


\begin{proof}
	Let $\mathcal{R} \subseteq X \times Y$ and $\mathcal{S} \subseteq A \times B$. Assume: (i) $\forall x,y, x \,\mathcal{R}\, y \Rightarrow f(x) \,\mathcal{S}\, g(y)$ and (ii) $t \,\Obser(\mathcal{R})\, r$. 
	
	Take $o \in \Obser$ and $K \subseteq TA$ such that $t[x \mapsto f(x)] \in o(K)$. 
	Take $L = f^{-1}(K)$, then $t \in o(L)$. 
	So by (ii), $r \in o(\Rise{\mathcal{R}}{L})$. 
	For $y \in (\Rise{\mathcal{R}}{L})$, there is an $x \in L$ such that $x \,\mathcal{R}\, y$, hence by (i) we have $f(x) \,\mathcal{S}\, g(y)$. 
	Since $x \in L$ means $f(x) \in K$, it holds that $g(y) \in (\Rise{\mathcal{S}}{K})$, hence $(\Rise{\mathcal{R}}{L}) \subseteq g^{-1}(\Rise{\mathcal{S}}{K})$. 
	Using upwards closure of $o$, $r \in o(g^{-1}(\Rise{\mathcal{S}}{K}))$, and hence $r[y \mapsto g(y)] \in o(\Rise{\mathcal{S}}{K})$.
	Since this is for all $o \in \Obser$ and $K \subseteq TA$ with $t[x \mapsto f(x)] \in o(K)$, we can conclude that $t[x \mapsto f(x)] \,\Obser(\mathcal{S})\, r[y \mapsto g(y)]$.
\end{proof}	

\noindent
The relator also interacts  well with the monad structure on $T$.
\begin{lemma}\label{Rel_prop3}
	If $\,\Obser$ is a decomposable set of upwards closed modalities, then:
	\begin{enumerate}
		\item $x \,\mathcal{R}\, y~ \Rightarrow ~\eta(x) \,\Obser(\mathcal{R})\, \eta(y)$.
		\item $t \,\Obser(\Obser(\mathcal{R}))\, r~ \Rightarrow~ \mu t \,\Obser(\mathcal{R})\, \mu r$.
	\end{enumerate}
\end{lemma}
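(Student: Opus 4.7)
I would argue directly from the definitions. Fix $A \subseteq X$ and $o \in \Obser$ with $\eta(x) \in o(A)$. If $x \in A$, then $\eta(x)[{\in A}] = *$ (the one-node tree), so $* \in \denote{o}$; since $x\,\mathcal{R}\,y$, we have $y \in \Rise{\mathcal{R}}{A}$, hence $\eta(y)[{\in \Rise{\mathcal{R}}{A}}] = * \in \denote{o}$, i.e. $\eta(y) \in o(\Rise{\mathcal{R}}{A})$. If $x \notin A$, then $\eta(x)[{\in A}] = \bot$ and by upwards-closure of $o$ we have $\denote{o} = T\UnitType$, so membership of $\eta(y)$ in $o(\Rise{\mathcal{R}}{A})$ is automatic.

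\textbf{Part 2.} The strategy is to reduce to decomposability via relabeling maps to unit-type trees. For any $A \subseteq X$, define $g_A : TX \to T\UnitType$ by $g_A(s) := s[{\in A}]$. A routine inspection of how leaves are re-labelled shows that the monad multiplication commutes with relabeling: $\mu(Tg_A(t)) = (\mu t)[{\in A}]$. So given $\mu t \in o(A)$ and writing $B := \Rise{\mathcal{R}}{A}$, it suffices to prove $\mu(Tg_A(t)) \prebas \mu(Tg_B(r))$, because then $\mu(Tg_A(t)) \in \denote{o}$ forces $\mu(Tg_B(r)) \in \denote{o}$ (taking $\Phi = o(\top)$ in the definition of $\prebas$), which is exactly $\mu r \in o(B)$.

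By decomposability, it is enough to establish $Tg_A(t) \modbel Tg_B(r)$, and by Lemma~\ref{lem:decomp_equ}(3) this in turn amounts to showing that for every $o' \in \Obser$ and $C \subseteq T\UnitType$, $Tg_A(t) \in o'(C)$ implies $Tg_B(r) \in o'(\Rise{\prebas}{C})$. Unfolding the relabeling, $Tg_A(t) \in o'(C)$ means $t \in o'(g_A^{-1}(C))$; applying the hypothesis $t\,\Obser(\Obser(\mathcal{R}))\,r$ yields $r \in o'(\Rise{\Obser(\mathcal{R})}{g_A^{-1}(C)})$. Using upwards-closure of $o'$, it remains only to verify the set inclusion $\Rise{\Obser(\mathcal{R})}{g_A^{-1}(C)} \subseteq g_B^{-1}(\Rise{\prebas}{C})$.

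The final subclaim is a small but crucial lemma that I would prove in passing: whenever $s\,\Obser(\mathcal{R})\,s'$ in $TX$, then $s[{\in A}] \prebas s'[{\in B}]$. Using Proposition~\ref{prop:base_relation}(3), this reduces to checking, for each $o \in \Obser$, that $s[{\in A}] \in o(\{*\}) \Rightarrow s'[{\in B}] \in o(\{*\})$ (which unfolds to $s \in o(A) \Rightarrow s' \in o(B)$, immediate from $s\,\Obser(\mathcal{R})\,s'$) and the analogous statement with $\emptyset$ in place of $\{*\}$ (which likewise becomes $s \in o(\emptyset) \Rightarrow s' \in o(\Rise{\mathcal{R}}{\emptyset}) = o(\emptyset)$). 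With this subclaim in hand, given $s' \in TX$ with $s\,\Obser(\mathcal{R})\,s'$ and $g_A(s) \in C$, the witness $g_A(s) \in C$ satisfies $g_A(s) \prebas g_B(s')$, so $g_B(s') \in \Rise{\prebas}{C}$, completing the chain. The main obstacle is organizing this reduction cleanly: the conceptual core is that $\Obser(-)$ interacts well with the point-wise relabeling $(-)[{\in A}]$, which is precisely what lets decomposability, a statement purely about $T\UnitType$ and $TT\UnitType$, propagate to arbitrary $X$.
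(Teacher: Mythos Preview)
Your proof is correct and follows essentially the same route as the paper's. Both arguments relabel the double trees to $TT\UnitType$ via $a \mapsto a[{\in A}]$ (your $g_A$), establish the $\modbel$ relation between the relabeled trees, invoke decomposability, and use that $\mu$ commutes with relabeling; your subclaim $s\,\Obser(\mathcal{R})\,s' \Rightarrow s[{\in A}] \prebas s'[{\in B}]$ is exactly the step the paper proves inline, and both verify it via the characterisation in Proposition~\ref{prop:base_relation}(3). (One tiny slip: in your subclaim you write ``in $TX$'' but $s' \in TY$ when $\mathcal{R} \subseteq X \times Y$; this is harmless since $g_B$ is the corresponding map $TY \to T\UnitType$.)
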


\begin{proof} We prove the properties separately.
	\begin{enumerate}
		\item Note that $\eta(x) \in o(A)$ either means $x \in A$ and $\ast \in \denote{o}$, or $\bot \in \denote{o}$. 
		By upwards closure, if $\bot \in \denote{o}$ then $\denote{o} = T\UnitType$. 
		If $x \in A$ then $y \in \Rise{\mathcal{R}}{A}$. 
		Either way, $\eta(y) \in o(\Rise{\mathcal{R}}{A})$.
		
		\item Take $t \,\Obser(\Obser(\mathcal{R}))\, r$ and $(\mu t) \in o(K)$ where $o \in \Obser$.
		Take $A \subseteq T\UnitType$ and $\gamma \in \Obser$ such that ${t[a \mapsto \tis{a}{K}]} \in \gamma(A)$. Let $S := \{a \,\mid\, \tis{a}{K} \in A\}$, then $t \in \gamma(S)$. 
		By $t \,\Obser(\Obser(\mathcal{R}))\, r$ we get $r \in \gamma(\Rise{\Obser(\mathcal{R})}{S})$. 
		
		In the following paragraph we prove that for $b \in (\Rise{\Obser(\mathcal{R})}{S})$, $\tis{b}{\Rise{\mathcal{R}}{K}} \in (\Rise{\prebas}{A})$, using characterisation \ref{base:three} of $\prebas$ from Proposition \ref{prop:base_relation}.
		
		For $b \in (\Rise{\Obser(\mathcal{R})}{S})$, there is an $a \in S$ such that $\tis{a}{K} \in A$ and $a \,\Obser(\mathcal{R})\, b$. 
		If $\tis{a}{K} \in \delta(\{\ast\})$ for some $\delta \in \Obser$, then $a \in \delta(K)$, so $b \in \delta(\Rise{\mathcal{R}}{K})$ and hence $\tis{b}{\Rise{\mathcal{R}}{K}} \in \delta(\{\ast\})$. 
		If $\tis{a}{K} \in \delta(\emptyset)$ for some $\delta \in \Obser$, then $a \in \delta(\emptyset)$, so we have $b \in \delta(\Rise{\mathcal{R}}{\emptyset})$. 
		Since $\Rise{\mathcal{R}}{\emptyset} = \emptyset$, $b \in \delta(\emptyset)$ and hence $\tis{b}{\Rise{\mathcal{R}}{K}} \in \delta(\emptyset)$. 
		With the previous two derivations, we see that $\tis{a}{K} \prebas \tis{b}{\Rise{\mathcal{R}}{K}}$, hence $\tis{b}{\Rise{\mathcal{R}}{K}} \in (\Rise{\prebas}{A})$ since $\tis{a}{K} \in A$. So the following inclusion holds:
		$$(\Rise{\Obser(\mathcal{R})}{S}) \,\,\subseteq\,\, \{b \,\mid\, \tis{b}{\Rise{\mathcal{R}}{K}} \in (\Rise{\prebas}{A})\} \,.$$
		
		By upwards closure of $\gamma$ and $r \in \gamma(\Rise{\Obser(\mathcal{R})}{S})$ we get $r \in \gamma(\{b \,\mid\, \tis{b}{\Rise{\mathcal{R}}{K}} \in (\Rise{\prebas}{A})\})$. 
		So we have derived that $(t[a \mapsto \tis{a}{K}] \in \gamma(A)) \Rightarrow (r[b \mapsto \tis{b}{\Rise{\mathcal{R}}{K}}] \in \gamma(\Rise{\prebas}{A}))$ for all $\gamma$ and $A$. 
		By Lemma \ref{lem:decomp_equ}, $t[a \mapsto \tis{a}{K}] \modbel r[b \mapsto \tis{b}{\Rise{\mathcal{R}}{K}}]$, hence by decomposability we get $\tis{(\mu t)}{K} = {\mu (t[a \mapsto \tis{a}{K}])} \prebas \mu (r[b \mapsto \tis{b}{\Rise{\mathcal{R}}{K}}]) = \tis{(\mu r)}{\Rise{\mathcal{R}}{K}}$. 
		Hence with $\mu t \in o(K)$ assumed at the beginning, we get $\mu t \in o(\Rise{\mathcal{R}}{K})$, and we conclude $\mu t \,\Obser(\mathcal{R})\, \mu r$.
	\end{enumerate}
\end{proof}

\noindent
Finally, the following properties show that relators behave well with respect to the order on trees.

\begin{lemma}\label{Rel_prop2}
	If $\,\Obser$ only contains Scott open modalities, then for $\mathcal{R} \subseteq X \times Y$:
	\begin{enumerate}
		\item If $\,\mathcal{R}$ is reflexive, then for any $t \in TX, r \in TY$ it holds that $t \leq r \Rightarrow t \,\Obser(\mathcal{R})\, r$.
		\item For any two sequences $u_0 \leq u_1 \leq u_2 \leq \dots$ from $TX$ and $v_0 \leq v_1 \leq v_2 \leq \dots$ from $TY$: 
		
		$\forall n, (u_n \,\Obser(\mathcal{R})\, v_n) ~ \Rightarrow~ (\sqcup_n u_n) \,\Obser(\mathcal{R})\, (\sqcup_n v_n)$
	\end{enumerate}
\end{lemma}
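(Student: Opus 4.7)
The plan is to prove each part directly from the definition of $\Obser(\mathcal{R})$, using monotonicity properties of the $t[\in A]$ operation together with upwards closure (for part 1) and Scott-openness (for part 2) of the modalities.

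For part (1), I would first observe two basic monotonicity facts about the pruning operation. First, if $A \subseteq B$ then $t[\in A] \leq t[\in B]$, because the latter keeps strictly more (non-$\bot$) leaves. Second, if $t \leq r$ then $t[\in A] \leq r[\in A]$, because pruning commutes with the coordinatewise operation of replacing disallowed leaves with $\bot$. Now assume $\mathcal{R}$ is reflexive and $t \leq r$. Take any $o \in \Obser$ and $A \subseteq X$ with $t \in o(A)$, i.e., $t[\in A] \in \denote{o}$. By the second fact $t[\in A] \leq r[\in A]$, so by upwards closure $r[\in A] \in \denote{o}$. By reflexivity $A \subseteq \Rise{\mathcal{R}}{A}$, so by the first fact $r[\in A] \leq r[\in \Rise{\mathcal{R}}{A}]$, and upwards closure once more gives $r[\in \Rise{\mathcal{R}}{A}] \in \denote{o}$, which is exactly $r \in o(\Rise{\mathcal{R}}{A})$. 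Hence $t\,\Obser(\mathcal{R})\,r$.

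For part (2), the key extra observation is that the pruning operation commutes with ascending suprema: $(\sqcup_n u_n)[\in A] = \sqcup_n (u_n[\in A])$. This holds because both sides are the tree whose shape follows $\sqcup_n u_n$ with every non-$\bot$ leaf $x$ replaced by $\ast$ if $x \in A$ and by $\bot$ otherwise; the order-theoretic description of $\sqcup$ on $TX$ (pruning replacing subtrees with $\bot$) makes this routine. Given this, suppose $(\sqcup_n u_n) \in o(A)$ for some $o \in \Obser$ and $A \subseteq X$. Then $\sqcup_n(u_n[\in A]) = (\sqcup_n u_n)[\in A] \in \denote{o}$, and since the sequence $(u_n[\in A])_n$ is ascending in $T\UnitType$, Scott-openness of $\denote{o}$ yields some $N$ with $u_N[\in A] \in \denote{o}$, i.e., $u_N \in o(A)$. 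The hypothesis $u_N\,\Obser(\mathcal{R})\,v_N$ then gives $v_N \in o(\Rise{\mathcal{R}}{A})$, i.e., $v_N[\in \Rise{\mathcal{R}}{A}] \in \denote{o}$. Since $v_N \leq \sqcup_n v_n$, monotonicity of pruning and upwards closure yield $(\sqcup_n v_n)[\in \Rise{\mathcal{R}}{A}] \in \denote{o}$, as required.

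I do not expect genuine obstacles here; the only place that requires care is the commutation of pruning with directed suprema (used in part 2), which is best dispatched by appealing directly to the definition of $\leq$ on $TX$ as the pruning order, so that $\sqcup_n$ of an ascending chain of trees is the least tree extending all of them, and $t \mapsto t[\in A]$ factors through this.
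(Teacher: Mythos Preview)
Your proof is correct and follows essentially the same approach as the paper's. Part~2 is virtually identical; for part~1 the paper invokes the already-established reflexivity of $\Obser(\mathcal{R})$ (Lemma~\ref{Rel_prop1}) to pass from $r \in o(A)$ to $r \in o(\Rise{\mathcal{R}}{A})$, whereas you inline that step directly via the monotonicity of $t[\in A]$ in $A$ --- a cosmetic difference only.
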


\begin{proof} We separate the proofs by property.
	\begin{enumerate}
		\item If $\mathcal{R}$ is reflexive then $\Obser(\mathcal{R})$ is reflexive by Lemma \ref{Rel_prop1}.  Now for $t \leq r$, if $t \in o(A)$ then since $\tis{t}{A} \leq \tis{r}{A}$ we have $r \in o(A)$. 
		By reflexivity $r \, \Obser(\mathcal{R}) \, r$, so $r \in o(\Rise{\mathcal{R}}{A})$.
		
		\item Take $A \subseteq X$ and $o \in \Obser$ such that $(\sqcup_n u_n) \in o(A)$. 
		Now $\tis{(\sqcup_n u_n)}{A} = \sqcup_n (\tis{u_n}{A})$ so by Scott openness, there is an $m$ such that $u_m \in o(A)$. 
		Using the assumption we derive $v_m \in o(\Rise{\mathcal{R}}{A})$. 
		Note that $v_m \leq \sqcup_n v_n$, hence $\tis{v_m}{\Rise{\mathcal{R}}{A}} \leq \sqcup_n \tis{v_n}{\Rise{\mathcal{R}}{A}}$, so since $o$ is Scott open we conclude that $(\sqcup_n v_n) \in o(\Rise{\mathcal{R}}{A})$.
	\end{enumerate}
\end{proof}

The lemmas above list the core properties of the relator, which are satisfied when our set $\Obser$ is decomposable and contains only Scott open modalities. 
The results below follow from those above, specifically the next result combines Lemma \ref{Rel_prop3} with the fact that $\Obser(\text{id}_{\UnitType}) = \,\prebas$ and $\Obser(\prebas) = \modbel$.

\begin{corollary}\label{dec_equ2}
		If $\,\Obser$ contains only upwards closed modalities, then:
		$$\Obser \textit{ is decomposable} \quad \iff \quad \forall \mathcal{R} \subseteq X \times Y, \forall t \in TTX, r \in TTY, (t \,\Obser(\Obser(\mathcal{R}))\, r \Rightarrow \mu t \, \Obser(\mathcal{R}) \, \mu r) \enspace .$$
\end{corollary}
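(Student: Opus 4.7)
The plan is to prove the two directions separately, using exactly the correspondence highlighted in the hint, namely $\Obser(\mathrm{id}_{\UnitType}) = {\prebas}$ (which follows from Proposition~\ref{prop:base_relation}) and $\Obser({\prebas}) = {\modbel}$ (which follows from Lemma~\ref{lem:decomp_equ}, given upwards closure). Neither direction should present genuine difficulty: the forward implication is essentially a restatement of already-proved material, while the backward implication is a specialisation.

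For the forward direction ($\Rightarrow$), I would simply invoke Lemma~\ref{Rel_prop3}(2). Under the hypothesis of upwards closure, that lemma already states precisely that if $\Obser$ is decomposable and $t \,\Obser(\Obser(\mathcal{R}))\, r$ then $\mu t \,\Obser(\mathcal{R})\, \mu r$, so there is nothing further to check.

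For the backward direction ($\Leftarrow$), I would specialise the given implication to the case $X = Y = \UnitType$ and $\mathcal{R} = \mathrm{id}_{\UnitType}$. Then $\Obser(\mathcal{R}) = \Obser(\mathrm{id}_{\UnitType}) = {\prebas}$ by Proposition~\ref{prop:base_relation}, and hence (applying the relator once more and using upwards closure to invoke Lemma~\ref{lem:decomp_equ}) $\Obser(\Obser(\mathcal{R})) = \Obser({\prebas}) = {\modbel}$. So for any $r,r' \in \Dtrees$ with $r \,\modbel\, r'$ the hypothesis yields $\mu r \,\prebas\, \mu r'$, which is exactly Definition~\ref{definition:decomposable} of decomposability.

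The only \emph{subtle} step is making sure the identification $\Obser({\prebas}) = {\modbel}$ is legitimate; this is where upwards closure of the modalities gets used, via Lemma~\ref{lem:decomp_equ}. Once that is in place, the corollary follows by packaging the two observations above; no new computation with trees or modalities is required, since all the relevant monotonicity and flattening behaviour of $\Obser(-)$ with respect to $\mu$ has already been discharged in Lemma~\ref{Rel_prop3}.
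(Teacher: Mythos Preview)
Your proposal is correct and matches the paper's approach exactly: the paper's proof is simply the one-line remark that the corollary ``combines Lemma~\ref{Rel_prop3} with the fact that $\Obser(\mathrm{id}_{\UnitType}) = {\prebas}$ and $\Obser({\prebas}) = {\modbel}$,'' which is precisely your argument spelled out---Lemma~\ref{Rel_prop3}(2) for the forward direction, and the specialisation $X=Y=\UnitType$, $\mathcal{R}=\mathrm{id}_{\UnitType}$ (together with those two identifications) for the backward direction.
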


Lastly, we verify that sequencing and the algebraic effect operations interact in the appropriate way with the relator action.

\begin{corollary}\label{Rel_prop4}
	If $\,\Obser$ is a decomposable set of upwards closed modalities, then lifted relations are preserved by Kleisli lifting and effect operations:
	\begin{enumerate}
		\item Given $f: X \to TZ$, $g: Y \to TW$, $\mathcal{R} \subseteq X \times Y$ and $\mathcal{S} \subseteq Z \times W$, if for all $x \in X$ and $y \in Y$ we have $\,x \,\mathcal{R}\, y \Rightarrow f(x) \, \Obser(\mathcal{S}) \, g(y)$ and if $\,t \,\Obser(\mathcal{R})\, r$ then $\mu (t[x \mapsto f(x)]) \, \Obser(\mathcal{S}) \, \mu (r[y \mapsto g(y)])$.
		\item $(\forall k, u_k \,\Obser(\mathcal{S})\, v_k) ~ \Rightarrow~ \sigma(u_0,u_1,\dots) \,\Obser(\mathcal{S})\, \sigma(v_0,v_1,\dots)$
	\end{enumerate}
\end{corollary}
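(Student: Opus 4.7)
\medskip
\noindent
\textbf{Plan.} Both parts will be reduced to the two key relator facts already established: monotonicity (Lemma~\ref{Rel_prop1a}) and the fact that $\Obser(-)$ preserves monad multiplication (Lemma~\ref{Rel_prop3}(2)), which is where decomposability gets used.

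For part (1), the plan is essentially to compose these two lemmas. Given $t \,\Obser(\mathcal{R})\, r$ together with the hypothesis that $x \,\mathcal{R}\, y$ implies $f(x) \, \Obser(\mathcal{S}) \, g(y)$, apply Lemma~\ref{Rel_prop1a} taking the ``target'' relation in its statement to be $\Obser(\mathcal{S}) \subseteq TZ \times TW$. This directly yields $t[x \mapsto f(x)] \,\Obser(\Obser(\mathcal{S}))\, r[y \mapsto g(y)]$. Then Lemma~\ref{Rel_prop3}(2) (or equivalently Corollary~\ref{dec_equ2}) gives $\mu(t[x \mapsto f(x)]) \,\Obser(\mathcal{S})\, \mu(r[y \mapsto g(y)])$, as required.

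For part (2), the idea is to re-express $\sigma$ at the tree level in terms of $\mu$ and $\eta$, so that the same machinery applies. Set
\[
  t' \,:=\, \sigma(\eta(u_0), \eta(u_1), \dots) \,\in\, TTX,
  \qquad
  s' \,:=\, \sigma(\eta(v_0), \eta(v_1), \dots) \,\in\, TTY ,
\]
so that $\mu t' = \sigma(u_0, u_1, \dots)$ and $\mu s' = \sigma(v_0, v_1, \dots)$. I then show by a direct unpacking of the definition of the relator that $t' \,\Obser(\Obser(\mathcal{S}))\, s'$: given $o \in \Obser$ and $B \subseteq TX$ with $t' \in o(B)$, the unit-type trees $t'[\in B]$ and $s'[\in \Rise{\Obser(\mathcal{S})}{B}]$ are both of shape $\sigma(c_0,c_1,\dots)$ with each $c_k \in \{\ast,\bot\}$, and whenever the $k$-th component of $t'[\in B]$ is $\ast$ (i.e.\ $u_k \in B$), the hypothesis $u_k \,\Obser(\mathcal{S})\, v_k$ forces $v_k \in \Rise{\Obser(\mathcal{S})}{B}$, so the $k$-th component of $s'[\in \Rise{\Obser(\mathcal{S})}{B}]$ is also $\ast$. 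Hence $t'[\in B] \leq s'[\in \Rise{\Obser(\mathcal{S})}{B}]$, and upwards closure of $o$ gives $s' \in o(\Rise{\Obser(\mathcal{S})}{B})$. A second application of Lemma~\ref{Rel_prop3}(2) then yields $\mu t' \,\Obser(\mathcal{S})\, \mu s'$, which is the desired conclusion.

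There is no real obstacle here; the only point requiring any care is the shape check in part~(2) that shows $t' \,\Obser(\Obser(\mathcal{S}))\, s'$, since one has to recall that the ``value'' leaves of $t'$ and $s'$ are entire trees $u_k, v_k$ and handle all four arity cases of $\sigma$ in Definition~\ref{definition:tree} uniformly (the arities with an extra $\NatType$ parameter just contribute a subscript $\sigma_m$ to the root label, which is inert in the argument). The two cases $B \ni u_k$ vs.\ $B \not\ni u_k$ suffice, and the monotonicity/multiplication lemmas do the rest of the work.
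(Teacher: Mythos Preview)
Your proof is correct. Part~(1) is identical to the paper's argument: apply Lemma~\ref{Rel_prop1a} with target relation $\Obser(\mathcal{S})$ to obtain $t[x \mapsto f(x)] \,\Obser(\Obser(\mathcal{S}))\, r[y \mapsto g(y)]$, then invoke Lemma~\ref{Rel_prop3}(2).

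For part~(2) you take a slightly different route from the paper. The paper simply \emph{reuses} part~(1) with the data $t = r = \sigma(0,1,2,\dots) \in T\NatNum$, $f(n) = u_n$, $g(n) = v_n$, and $\mathcal{R} = \textit{id}_{\NatNum}$; then $t \,\Obser(\textit{id}_{\NatNum})\, r$ holds by reflexivity (Lemma~\ref{Rel_prop1}(1)), and the conclusion drops out in one line. Your $t'$ and $s'$ are exactly the trees $t[n \mapsto u_n]$ and $r[n \mapsto v_n]$ that appear in the paper's instantiation, but instead of getting $t' \,\Obser(\Obser(\mathcal{S}))\, s'$ for free from part~(1), you verify it by hand via the shape inequality $t'[\in B] \leq s'[\in \Rise{\Obser(\mathcal{S})}{B}]$ and upwards closure. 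Both arguments are valid and both finish with Lemma~\ref{Rel_prop3}(2); the paper's version is just a shorter reduction, while yours makes the underlying combinatorics explicit without needing to spot the index-tree trick.
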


\begin{proof} We prove the properties separately.
	\begin{enumerate}
		\item Using Lemma \ref{Rel_prop1a} on the assumptions we get $t[x \mapsto f(x)] \, \Obser(\Obser(\mathcal{S})) \, r[y \mapsto g(y)]$. We can then apply property 2 of Lemma \ref{Rel_prop3} to get the correct result.
		\item We apply the previous property to the following data; $t = r = \sigma(0,1,2,\dots) \in T\NatNum$, $f(n) = u_n$, $g(n) = v_n$ for all $n \in \NatNum$, and $\mathcal{R} = \textit{id}_{\NatNum}$. The conclusion follows directly.
	\end{enumerate}
\end{proof}

\noindent
Point 2 of Corollary \ref{Rel_prop4} has been stated in such a way that it contains both the infinite arity case $\alpha^{\NatType} \rightarrow \alpha$ and the finite arity case $\alpha^{n} \rightarrow \alpha$ of effect operations $\sigma$. So it states that any lifted relation is preserved under any of the predefined algebraic effects (assuming equality between any possible natural numbers arguments given by the arity).
The next section focusses on proving Theorem \ref{Main_Theorem} using a generalisation of Howe's method.

\section{Howe's method}\label{section:Howe}

In this section, we apply Howe's method~\cite{How89,How} to establish the compatibility of
applicative $\Obser$-similarity and $\Obser$-bisimilarity, and hence compatibility of the positive behavioural preorder and full behavioural equivalence. Given a relation $\mathcal{R}$ on terms, one defines its \emph{Howe closure} $\mathcal{R}^{\bullet}$, which is compatible and contains the open extension $\mathcal{R}^{\circ}$. 
Our proof makes use of the 
relator properties from Section~\ref{section:similarity}, closely following the approach of Dal Lago \emph{et al.}~\cite{Relational}. We will only give an  outline of the proof, focussing on the main steps. Detailed proofs can be found in Appendix \ref{section:proofs}.

Recall from Section~\ref{section:preorder} that, for any closed relation $\mathcal{R}$, we can define the open extension $\mathcal{R}^{\circ}$ as $\Gamma \vdash M \,\mathcal{R}^{\circ}\, N : \TypeOne \Leftrightarrow \forall \overrightarrow{V} : \Gamma, M[\overrightarrow{V}] \,\mathcal{R}\, N[\overrightarrow{V}]$. 
We define two more closure operations.

\begin{definition}\label{def:How}
	Taking an \emph{open} relation $\mathcal{R}$, we define the \emph{compatible refinement} $\widehat{\mathcal{R}}$ using the derivation rules in Fig. \ref{fig:clos}. For a closed relation $\mathcal{R}$ we define the \emph{Howe closure} $\mathcal{R}^{\bullet}$ as the smallest open relation $\mathcal{S}$ closed under the rules:
	\[
	\frac{\Gamma \vdash V \,\widehat{\mathcal{S}}_v\, W \quad \quad \Gamma \vdash W \,\mathcal{R}^{\circ}_v\, L}{\Gamma \vdash V  \,\mathcal{S}_v\, L} \textbf{(HV)} \quad \quad \frac{\Gamma \vdash M \,\widehat{\mathcal{S}}_c\, N \quad \quad \Gamma \vdash N \,\mathcal{R}^{\circ}_c\, K}{\Gamma \vdash M  \mathcal{S}_c K} \textbf{(HC)}
	\]
\end{definition}
\begin{figure}
	\[
	\frac{}{\Gamma, x : \TypeOne \vdash x  \,\,\widehat{\mathcal{R}}_v\,  x } \textbf{(C1)} \quad \quad
	\frac{}{\Gamma \vdash Z  \,\,\widehat{\mathcal{R}}_v\,  Z} \textbf{(C2)} \quad \quad
	\frac{\Gamma \vdash V  \,\mathcal{R}_v\,  V'}{\Gamma \vdash S(V)  \,\,\widehat{\mathcal{R}}_v\,  S(V')} \textbf{(C3)}  
	\]
	\[
	\frac{\Gamma \vdash V \,\mathcal{R}_v\,  V'}{\Gamma \vdash \textbf{return}(V)  \,\,\widehat{\mathcal{R}}_c\,  \textbf{return}(V') } \textbf{(C4)} \quad \quad\frac{\Gamma, x : \TypeOne \vdash M  \,\mathcal{R}_c\,  M'}{\Gamma \vdash (\lambda x:\TypeOne.M)   \,\,\widehat{\mathcal{R}}_v\,  (\lambda x:\TypeOne.M')} \textbf{(C5)}
	\]
	\[
	\frac{\Gamma \vdash V  \,\mathcal{R}_v\,  V' \quad \quad \Gamma \vdash W  \,\mathcal{R}_v\,  W'}{\Gamma \vdash (V W)   \,\,\widehat{\mathcal{R}}_c\,  (V' W')} \textbf{(C6)} \quad \quad
	\frac{\Gamma \vdash V  \,\mathcal{R}_v\,  V'}{\Gamma \vdash \text{\textbf{fix}}(V)  \,\,\widehat{\mathcal{R}}_v\,  \text{\textbf{fix}}(V')} \textbf{(C7)}
	\]
	\[
	\frac{\Gamma \vdash V  \,\mathcal{R}_v\,  V' \quad \quad \Gamma \vdash M  \,\mathcal{R}_c\,  M' \quad \quad \Gamma, x:\NatType \vdash N  \,\mathcal{R}_c\,  N'}
	{\Gamma \vdash (\text{ \textbf{case} } V \text{ \textbf{of} } \{Z \Rightarrow M; S(x) \Rightarrow N\})  \,\,\widehat{\mathcal{R}}_c\,  (\text{\textbf{case} } V' \text{ \textbf{of} } \{Z \Rightarrow M'; S(x) \Rightarrow N'\})} \textbf{(C8)}
	\]
	\[
	\frac{\Gamma \vdash M  \,\mathcal{R}_c\,  M'  \quad \quad \Gamma, x:\TypeOne \vdash N  \,\mathcal{R}_c\,  N'}{\Gamma \vdash (\textbf{let } M \Rightarrow x \textbf{ in } N)  \,\,\widehat{\mathcal{R}}_c\,  (\textbf{let } M' \Rightarrow x \textbf{ in } N')} \textbf{(C9)}	
	\]
	\[
	\frac{\Gamma \vdash M_i \, \mathcal{R}_c \, M_i'}{\Gamma \vdash \sigma(M_0,M_1,\dots) \,\, \widehat{\mathcal{R}}_c \, \sigma(M_0',M_1',\dots)}\textbf{(CA)} 
	\quad \quad
	\frac{\Gamma \vdash V \,  \,\mathcal{R}_v\,  \, V'}{\Gamma \vdash \sigma(V) \,\, \widehat{\mathcal{R}}_c \, \sigma(V')} \textbf{(CB)}
	\]
	\[
	\frac{\Gamma \vdash V \,  \,\mathcal{R}_v\,  \, V' \quad \quad \Gamma \vdash M_i \, \mathcal{R}_c \, M_i'}{\Gamma \vdash \sigma(V;M_0,M_1,\dots) \,\, \widehat{\mathcal{R}}_c \, \sigma(V';M_0',M_1',\dots)}\textbf{(CC)} \quad \quad
	\frac{\Gamma \vdash V \,  \,\mathcal{R}_v\,  \, V' \quad \quad \Gamma \vdash W \,  \,\mathcal{R}_v\,  \, W'}{\Gamma \vdash \sigma(V;W) \,\, \widehat{\mathcal{R}}_c \, \sigma(V';W')}\textbf{(CD)}
	\]
	\caption{Compatible refinement rules}
	\label{fig:clos}
\end{figure}

Note that $\mathcal{R}$ is compatible if and only if $\,\widehat{\mathcal{R}} \subseteq \mathcal{R}$. The Howe closure can also be expressed as the least solution for $\mathcal{S}$ of the equation $\mathcal{S} = \widehat{\mathcal{S}} \mathcal{R}^{\circ}$, or of the inclusion  $\widehat{\mathcal{S}} \mathcal{R}^{\circ} \subseteq \mathcal{S}$. 

The following result contains the main properties of the Howe closure we are interested in, e.g. from Lassen~\cite{Lassen}.

\begin{restatable}[]{lemma}{LemmaHowe}
	\label{Hprop1}
	If \,$\mathcal{R}$ is reflexive, then:
	\begin{enumerate}
		\item $\mathcal{R}^{\bullet}$ is compatible, hence reflexive.
		\item $\mathcal{R}^{\circ} \subseteq \mathcal{R}^{\bullet}$.
		\item If $x:\TypeTwo \vdash A \, \mathcal{R}^{\bullet} B$ and $V, W:\TypeTwo$ such that $V \,\mathcal{R}^{\bullet}\, W$, then $A[V] \,\mathcal{R}^{\bullet}\, B[W]$
	\end{enumerate}
\end{restatable}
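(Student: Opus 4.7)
The plan is to prove the three clauses in the order given, since the later parts invoke the earlier ones. For part (1), I would establish compatibility $\widehat{\mathcal{R}^\bullet} \subseteq \mathcal{R}^\bullet$ by going through the compatibility rules (C1)--(CD): given premises in $\mathcal{R}^\bullet$, the rule itself yields the corresponding conclusion in $\widehat{\mathcal{R}^\bullet}$, and then an application of (HV) or (HC), using the identity instance of $\mathcal{R}^\circ$ on the right (available because $\mathcal{R}$ is reflexive and reflexivity is preserved by the open extension), places the result in $\mathcal{R}^\bullet$. Reflexivity of $\mathcal{R}^\bullet$ then follows by a routine induction on terms. Part (2) is almost immediate: given $\Gamma \vdash M \mathcal{R}^\circ N$, reflexivity of $\mathcal{R}^\bullet$ from part (1) yields $\Gamma \vdash M \widehat{\mathcal{R}^\bullet} M$, and one application of (HC) together with the hypothesis concludes $\Gamma \vdash M \mathcal{R}^\bullet N$; the value case is analogous with (HV).

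For part (3), I would proceed by induction on the derivation of $x{:}\TypeTwo \vdash A \mathcal{R}^\bullet B$. Every such derivation terminates in (HV) or (HC) applied to some $A \widehat{\mathcal{R}^\bullet} C$ together with $C \mathcal{R}^\circ B$. In the inductive cases where $A$ carries a top-level constructor (application, $\lambda$-abstraction, $\textbf{let}$, $\textbf{case}$, effect operation), $C$ has the same top-level shape and its immediate subterms are related componentwise by $\mathcal{R}^\bullet$; applying the induction hypothesis componentwise, reassembling with the same compatibility rule to obtain a judgement in $\widehat{\mathcal{R}^\bullet}$, and pairing with $C[W/x] \mathcal{R}^\circ B[W/x]$ (which follows from $C \mathcal{R}^\circ B$ by the definition of the open extension under closed-value substitution) and one more use of (HV)/(HC), delivers the conclusion. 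The decisive base case is $A = x$, where rule (C1) forces $C = x$; then $x \mathcal{R}^\circ_v B$ unpacks to $W \mathcal{R}_v B[W/x]$, and one must combine this with the hypothesis $V \mathcal{R}^\bullet W$ to produce $V \mathcal{R}^\bullet B[W/x]$.

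This last combining step is the expected main obstacle. The natural move is to unfold $V \mathcal{R}^\bullet W$ as $V \widehat{\mathcal{R}^\bullet} V'$ with $V' \mathcal{R}^\circ W$, then chain $V' \mathcal{R}^\circ W \mathcal{R}^\circ B[W/x]$ and reapply (HV); this works provided $\mathcal{R}^\circ$ is transitive, which is not implied literally by the stated reflexivity of $\mathcal{R}$. In the paper's applications $\mathcal{R}$ is $\Obser$-similarity or $\Obser$-bisimilarity, both of which are preorders, so transitivity is in fact available, and the appendix proof presumably relies on this (or on the equivalent route of deriving $W \mathcal{R}^\bullet B[W/x]$ from part (2) and then composing). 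The remaining structural cases of the induction are routine given the compatibility rules of Fig.~\ref{fig:clos}, and the substitution lemma once established can be iterated to handle contexts $\Gamma$ with more than one free variable.
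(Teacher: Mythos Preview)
Your proposal is correct and follows the same approach as the paper's proof. Parts (1) and (2) are argued identically, just with the paper using the more compressed relational-algebra style $\widehat{\mathcal{R}^\bullet} = \widehat{\mathcal{R}^\bullet}\cdot\mathit{id} \subseteq \widehat{\mathcal{R}^\bullet}\cdot\mathcal{R}^\circ = \mathcal{R}^\bullet$ and $\mathcal{R}^\circ = \mathit{id}\cdot\mathcal{R}^\circ \subseteq \widehat{\mathcal{R}^\bullet}\cdot\mathcal{R}^\circ = \mathcal{R}^\bullet$ in place of your element-level unfolding. (One small imprecision: in part (2) you say ``reflexivity of $\mathcal{R}^\bullet$ yields $M\,\widehat{\mathcal{R}^\bullet}\,M$''; strictly you need the intermediate observation, which the paper makes explicit, that compatible refinement preserves reflexivity.)

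For part (3), your induction is the same as the paper's, but you are in fact \emph{more} careful than the paper on the variable case. The paper's proof claims that in every base case, including \textbf{C1}, one can ``use \textbf{Cn} to derive $A[V]\,\widehat{\mathcal{R}^\bullet}\,C[W]$''; but when $A = C = x$ this would mean deriving $V\,\widehat{\mathcal{R}^\bullet}\,W$ from \textbf{C1}, which that rule does not give. Your analysis is the correct one: the variable case reduces to showing $\mathcal{R}^\bullet\mathcal{R}^\circ \subseteq \mathcal{R}^\bullet$, which needs transitivity of $\mathcal{R}$. The paper only ever applies this lemma to preorders (similarity and bisimilarity), and indeed the very next lemma in the appendix (Lemma~\ref{Hprop2}) assumes $\mathcal{R}$ is a preorder precisely to obtain this absorption property; Lassen's original, which the paper cites, also assumes a preorder. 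So your diagnosis of the obstacle and its resolution is exactly right, and arguably tightens the paper's own sketch.
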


The main step of the Howe's method proof is establishing the following result.

\begin{restatable}[]{proposition}{MainProp}
	\label{How_sim}
	If $\,\Obser$ is a decomposable set of Scott open modalities, then for any preorder applicative $\,\Obser$-simulation $\,\sqsubseteq$, the Howe closure $\,\sqsubseteq^{\bullet}$ limited to closed terms is an applicative $\,\Obser$-simulation.
\end{restatable}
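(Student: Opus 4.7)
The plan is to verify the three conditions of Definition \ref{definition:sim} for $\sqsubseteq^{\bullet}$ restricted to closed terms, taking $\sqsubseteq$ as the preorder $\Obser$-simulation we start with. Conditions 1 and 3 are routine. At type $\NatType$, only rules (C2) and (C3) can build closed value judgments $V \,\widehat{\sqsubseteq^{\bullet}}\, W'$, so $V \sqsubseteq^{\bullet} W$ forces some intermediate $W'$ (with $W' \sqsubseteq W$) to be the same numeral as $V$; equality then propagates to $W$ by the simulation condition assumed on $\sqsubseteq$. For condition 3, reflexivity of $\sqsubseteq^{\bullet}$ on values (Lemma \ref{Hprop1}(1)) combined with compatibility rule (C6) yields $V\,U \sqsubseteq^{\bullet} W\,U$ whenever $V \sqsubseteq^{\bullet} W$ and $U \in \textit{Val}(\TypeTwo)$.

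The substantive work lies in condition 2: to show $|M| \,\Obser(\sqsubseteq^{\bullet})\, |N|$ whenever $M \sqsubseteq^{\bullet} N$. I will prove the finite-approximation statement $|M|_n \,\Obser(\sqsubseteq^{\bullet})\, |N|$ for every $n$ by induction on $n$, and then pass to the supremum on the left using Lemma \ref{Rel_prop2}(2), whose Scott-continuity hypothesis is exactly where Scott-openness of $\Obser$ enters. To make the induction pass through the $\textbf{let}$ construct, I will strengthen the statement to a stack-indexed form: for every $S \in \textit{Stack}(\TypeOne, \TypeTwo)$ and closed $M', N'$ with $S\{M'\} \sqsubseteq^{\bullet} N'$, we show $|S, M'|_n \,\Obser(\sqsubseteq^{\bullet})\, |N'|$. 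The base case $n = 0$ reads $\bot \,\Obser(\sqsubseteq^{\bullet})\, |N'|$, which holds by reflexivity of $\sqsubseteq^{\bullet}$ combined with Lemma \ref{Rel_prop2}(1).

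The inductive step proceeds by case analysis on $(S, M')$ following the clauses defining $|S, M'|_{n+1}$. The routine cases---beta reduction, $\textbf{case}$, $\textbf{fix}$, returning to a non-empty stack, and returning to the empty stack---each use the substitution-closure of $\sqsubseteq^{\bullet}$ (Lemma \ref{Hprop1}(3)) to keep the one-step reduct related to $N'$, and then invoke the inductive hypothesis at $n$. The cases in which $M'$ is an effect operation $\sigma(\cdots)$ are handled by Corollary \ref{Rel_prop4}(2), which propagates $\Obser(\sqsubseteq^{\bullet})$ under a common operation symbol applied to children trees obtained from the inductive hypothesis (after unpacking compatibility of $\sqsubseteq^{\bullet}$ under rules (CA)--(CD)).

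The main obstacle, as expected, is the $\textbf{let}$ case $M' = \textbf{let } M_0 \Rightarrow x \textbf{ in } M_1$, where evaluation first pushes a new frame onto $S$ and then recurses into $M_0$. One exhibits $|S, M'|_{n+1}$ as the $\mu$-flattening of a double tree whose outer layer is obtained by evaluating $M_0$ and whose inner trees, at each value leaf $V$, are $|S, M_1[V/x]|_n$. To compare this with $|N'|$, one extracts a matching double structure from $N'$ and relates the two via $\Obser(\Obser(\sqsubseteq^{\bullet}))$, using the inductive hypothesis for $M_0$ together with, at each value leaf $V$, the relation $S\{M_1[V/x]\} \sqsubseteq^{\bullet} (\text{corresponding reduct of } N')$ obtained from Lemma \ref{Hprop1}(3). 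The abstract fact that $\mu$ preserves the lifted relation---Lemma \ref{Rel_prop3}(2)---then closes the argument. Since Lemma \ref{Rel_prop3}(2) is proved using exactly the decomposability of $\Obser$, this case is the one that genuinely requires the decomposability hypothesis, which has been crafted to be precisely what Howe's method needs in order to handle sequencing of algebraic effects.
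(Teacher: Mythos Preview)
Your treatment of conditions 1 and 3 matches the paper. The divergence, and the problem, is in condition~2.

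The paper does \emph{not} strengthen the Key Lemma to a stack-indexed statement. It proves directly, by induction on $n$ and case analysis on the \emph{syntactic shape} of the closed computation $A$, that $A \sqsubseteq^{\bullet} B$ implies $|A|_n \,\Obser(\sqsubseteq^{\bullet})\, |B|$. The $\textbf{let}$ case is handled by the tree-level inequality
\[
  |\textbf{let } M \Rightarrow x \textbf{ in } N|_{n+1} \;\leq\; \mu\bigl(|M|_n[L \mapsto |N[L/x]|_n]\bigr),
\]
after which the inductive hypothesis on $M$ and on each $N[L/x]$ feeds into Corollary~\ref{Rel_prop4}(1) (the Kleisli-lifting property of the relator, which is where decomposability via Lemma~\ref{Rel_prop3}(2) is used). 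Stacks never appear.

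Your stack-strengthened formulation has a genuine gap. You call the case ``returning to a non-empty stack'' routine, but it is not. In that case the configuration step is
\[
  \bigl(S_0 \circ (\textbf{let }(-) \Rightarrow x \textbf{ in } M),\,\textbf{return }V\bigr) \;\rightarrowtail\; \bigl(S_0,\, M[V/x]\bigr),
\]
and your inductive hypothesis requires $S_0\{M[V/x]\} \sqsubseteq^{\bullet} N'$. What you actually have is $S_0\{\textbf{let }(\textbf{return }V) \Rightarrow x \textbf{ in } M\} \sqsubseteq^{\bullet} N'$. These are syntactically distinct terms, and the Howe closure is defined via compatible refinement on syntax: substitution closure (Lemma~\ref{Hprop1}(3)) does not give you closure under head reduction inside a $\textbf{let}$-context. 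The same obstruction arises for a redex under a non-empty stack: from $S\{R\} \sqsubseteq^{\bullet} N'$ and $R \rightsquigarrow R'$ you need $S\{R'\} \sqsubseteq^{\bullet} N'$, which again is not available.

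Relatedly, your placement of the hard work is off. With the stack strengthening, the $\textbf{let}$ case is \emph{trivial}: pushing the frame gives $|S,M'|_{n+1} = |S',M_0|_n$ with $S'\{M_0\} = S\{M'\}$, so the inductive hypothesis applies immediately and no $\mu$-flattening or decomposability is needed there. The difficulty in a stack-based approach would migrate precisely to the return and effect-operation cases under a non-empty stack, which you treat as routine. The paper avoids this entirely by never introducing stacks into the Key Lemma and instead absorbing sequencing into the single tree inequality above.
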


This is proven by checking the simulation properties of Definition \ref{definition:sim} one by one. The most difficulty arises when trying to prove property 2, which is done inductively using the following lemma, and uses most of the relator properties from Subsection \ref{section:relprop}.

\begin{restatable}[Key Lemma]{lemma}{KeyLemma}
	\label{lemma_key}
	For any $n \in \NatNum$, given two closed computation terms $A$ and $B$, if $A \sqsubseteq^{\bullet} B$ then $|A|_n \,\Obser(\sqsubseteq^{\bullet})\, |B|$.
\end{restatable}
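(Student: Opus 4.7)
The plan is to proceed by induction on $n$, with the statement strengthened to cover arbitrary (suitably related) stacks: for $A \sqsubseteq^\bullet B$ and stacks $S \in \textit{Stack}(\TypeOne,\TypeTwo)$ and $S'$ whose bodies are pairwise related by $\sqsubseteq^\bullet$, I want $|S, A|_n \,\Obser(\sqsubseteq^\bullet)\, |S', B|$. Generalising to stacks is essential because the $\textbf{let}$ case pushes a continuation frame and recurses on the subcomputation, and the naked statement is too weak to support that recursion.

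The base case $n = 0$ is immediate: $|S, A|_0 = \bot \leq |S', B|$ in the tree order, so reflexivity of $\sqsubseteq^\bullet$ (Lemma \ref{Hprop1}(1)) combined with Lemma \ref{Rel_prop2}(1) (valid because every $o \in \Obser$ is upward closed) gives $\bot \,\Obser(\sqsubseteq^\bullet)\, |S', B|$. For the inductive step I would unfold $A \sqsubseteq^\bullet B$ to an intermediate $B^\ast$ with $A \,\widehat{\sqsubseteq^\bullet}\, B^\ast$ and $B^\ast \sqsubseteq B$; the simulation property for $\sqsubseteq$ yields $|S^\ast, B^\ast| \,\Obser(\sqsubseteq)\, |S', B|$ for an appropriate stack $S^\ast$, and with $\sqsubseteq \subseteq \sqsubseteq^\bullet$, monotonicity and composition of the relator (Lemma \ref{Rel_prop1}(2,3)) together with the defining property $\widehat{\sqsubseteq^\bullet} \cdot \sqsubseteq^\circ \subseteq \sqsubseteq^\bullet$ reduce the goal to the same-shape situation in which $A$ and $B^\ast$ share their top-level constructor with componentwise $\sqsubseteq^\bullet$-related subterms.

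I would then case-split on the shape of $A$. For $\textbf{return}\,V$, Lemma \ref{Rel_prop3}(1) delivers the result from $V \sqsubseteq^\bullet V^\ast$. For $\lambda$-applications, $\textbf{fix}$, and $\textbf{case}$, I take the single reduction step, use Lemma \ref{Hprop1}(3) to propagate $\sqsubseteq^\bullet$ through the substitution, and invoke the induction hypothesis. For an algebraic operation $\sigma(\vec M)$, Corollary \ref{Rel_prop4}(2) assembles the inductive hypotheses on the children into the required relation on $\sigma$-rooted trees. At the very end of the overall theorem, Scott-openness is used via Lemma \ref{Rel_prop2}(2) to pass from the approximants $|B|_k$ on the right-hand side to the supremum $|B| = \bigsqcup_k |B|_k$.

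The main obstacle is the $\textbf{let}$ case. Here $|S, \textbf{let } M \Rightarrow x \textbf{ in } N|_{n+1} = |S \circ (\textbf{let } (-) \Rightarrow x \textbf{ in } N), M|_n$, so the stack-strengthened induction hypothesis applies directly \emph{provided} the two enlarged stacks are related in the sense demanded by the statement; verifying this reduces to showing that whenever $V \sqsubseteq^\bullet V'$ the trees $|S, N[V/x]|$ and $|S', N'[V'/x]|$ are $\Obser(\sqsubseteq^\bullet)$-related. This is a Kleisli-lifting step: a pointwise bound on continuation bodies together with the monad multiplication $\mu$ that sequences effects produces a relation of the form $\Obser(\Obser(\sqsubseteq^\bullet))$ on double trees, which must then be collapsed to $\Obser(\sqsubseteq^\bullet)$ on the single trees $\mu(\cdots)$. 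This collapse is exactly the content of Corollary \ref{dec_equ2}, and is precisely where the decomposability hypothesis on $\Obser$ is unavoidable: without it the let case, and with it the whole proof, breaks down.
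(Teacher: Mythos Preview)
Your stack strengthening is not needed and, as you have set it up, does not go through. The paper proves the lemma exactly as stated, for the identity stack only. In the $\textbf{let}$ case it uses the tree-level bound
\[
|\textbf{let } M \Rightarrow x \textbf{ in } N|_{n+1} \;\leq\; |M|_n\bigl[L \mapsto |N[L/x]|_n\bigr],
\]
so that the induction hypothesis can be applied separately to $M\sqsubseteq^\bullet M'$ and to each $N[L/x]\sqsubseteq^\bullet N'[L'/x]$ (all with identity stack). Corollary~\ref{Rel_prop4}(1), the Kleisli-lifting property, then assembles these into $|A|_{n+1}\,\Obser(\sqsubseteq^\bullet)\,|C|$; this is precisely where decomposability is consumed. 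The passage from $C$ to $B$ uses only the simulation hypothesis $|C|\,\Obser(\sqsubseteq)\,|B|$ together with Lemma~\ref{Hprop2}(2). No stacks appear anywhere.

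Your version breaks at the $C\sqsubseteq B$ step. You claim that ``the simulation property for $\sqsubseteq$ yields $|S^\ast, B^\ast|\,\Obser(\sqsubseteq)\,|S', B|$'', but the simulation hypothesis speaks only about $|\,{\cdot}\,|$ with the identity stack: from $B^\ast\sqsubseteq B$ you obtain $|B^\ast|\,\Obser(\sqsubseteq)\,|B|$ and nothing more. Lifting this to a nontrivial $S'$ would require either compatibility of $\sqsubseteq$ (which is the theorem you are trying to prove) or an instance of your strengthened statement at \emph{every} approximation level on the left (circular, since the induction only supplies levels $\le n$). The same circularity recurs when you attempt to verify that the enlarged stacks are related: establishing $|S, N[V/x]|\,\Obser(\sqsubseteq^\bullet)\,|S', N'[V'/x]|$ is itself an instance of the strengthened claim at the limit. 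Two smaller slips confirm the mismatch: your $\textbf{return}$ case only treats the empty stack, and the final use of Scott openness (Lemma~\ref{Rel_prop2}(2)) is to pass to the supremum on the \emph{left} side $|A|=\bigsqcup_n|A|_n$, not on the right. The repair is exactly the paper's move: drop stacks and let the Kleisli decomposition of $|\textbf{let}\,{\cdots}|_{n+1}$ do the work.
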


Having established the Key Lemma, we can use Lemma \ref{Rel_prop2}, implying $(\forall n \in \NatNum, \, |A|_n \,{\Obser(\sqsubseteq^{\bullet})}\, |B|) \Rightarrow |A| \,\Obser(\sqsubseteq^{\bullet})\, |B|$, to derive property 2 of $\Obser$-simulations and finalise the proof of Proposition \ref{How_sim}. The complete proofs of the previous lemmas and Proposition \ref{How_sim} can be found in Appendix \ref{section:proofs}.

Using Proposition \ref{How_sim}, we can derive the compatibility of  applicative $\Obser$-similarity and $\Obser$-bisimilarity. 

\begin{theoremm}[A]\label{theorem:sim_com}
	\label{theorem:similarity-compatible}
	If $~\Obser$ is a decomposable set of Scott open modalities, then the open extension of the relation of applicative
	$\,\Obser$-similarity is compatible.
\end{theoremm}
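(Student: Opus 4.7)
The plan is to leverage Proposition \ref{How_sim} together with Lemma \ref{Hprop1} to show that, for $\sqsubseteq$ denoting applicative $\Obser$-similarity, the Howe closure $\sqsubseteq^{\bullet}$ coincides with the open extension $\sqsubseteq^{\circ}$; compatibility of $\sqsubseteq^{\circ}$ will then follow from compatibility of $\sqsubseteq^{\bullet}$.

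First I would verify that $\sqsubseteq$ is a preorder, so that Proposition \ref{How_sim} applies. Reflexivity: identity relations satisfy the three $\Obser$-simulation clauses of Definition \ref{definition:sim}, using that $\Obser$ preserves reflexivity by Lemma \ref{Rel_prop1}(1). Transitivity: the composition of two $\Obser$-simulations satisfies the simulation clauses, using the inclusion $\Obser(\mathcal{R})\,\Obser(\mathcal{S})\subseteq \Obser(\mathcal{R}\mathcal{S})$ from Lemma \ref{Rel_prop1}(2). Thus $\sqsubseteq$ is a preorder $\Obser$-simulation, and Proposition \ref{How_sim} yields that $\sqsubseteq^{\bullet}$ restricted to closed terms is itself an $\Obser$-simulation.

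Next, by the maximality characterisation of $\sqsubseteq$ in Definition \ref{definition:sim}, this forces $\sqsubseteq^{\bullet}\,\subseteq\,\sqsubseteq$ on closed terms. Conversely, Lemma \ref{Hprop1}(2) (together with the observation that on closed terms $\sqsubseteq^{\circ}$ agrees with $\sqsubseteq$) gives the reverse inclusion. Hence $\sqsubseteq^{\bullet}$ and $\sqsubseteq$ coincide on closed terms. To extend this identification to open terms, suppose $\Gamma\vdash M \sqsubseteq^{\bullet} N$ and let $\overrightarrow{V}:\Gamma$ be any closing value substitution. Reflexivity of $\sqsubseteq^{\bullet}$ from Lemma \ref{Hprop1}(1) gives $V_i \sqsubseteq^{\bullet} V_i$ for each $i$, and iterated application of Lemma \ref{Hprop1}(3) yields $M[\overrightarrow{V}]\sqsubseteq^{\bullet} N[\overrightarrow{V}]$. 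Since this is a relation between closed terms, the previous paragraph gives $M[\overrightarrow{V}]\sqsubseteq N[\overrightarrow{V}]$, and as $\overrightarrow{V}$ was arbitrary, $\Gamma\vdash M\sqsubseteq^{\circ} N$. Combined with Lemma \ref{Hprop1}(2), this establishes $\sqsubseteq^{\bullet}\,=\,\sqsubseteq^{\circ}$ globally. Since $\sqsubseteq^{\bullet}$ is compatible by Lemma \ref{Hprop1}(1), so is $\sqsubseteq^{\circ}$.

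The work of Howe's method has already been absorbed into Proposition \ref{How_sim}, so the proof at this level is a short piece of abstract nonsense gluing maximality of similarity against compatibility of the Howe closure. The only mildly subtle point is the verification that $\sqsubseteq$ is transitive (needed to invoke Proposition \ref{How_sim}, which requires a preorder); this in turn hinges on the composition property in Lemma \ref{Rel_prop1}(2), and should present no obstacle.
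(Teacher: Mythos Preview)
Your proposal is correct and follows essentially the same route as the paper: use Proposition~\ref{How_sim} to conclude that $\sqsubseteq^{\bullet}$ on closed terms is an $\Obser$-simulation, invoke maximality of similarity to get $\sqsubseteq^{\bullet}\subseteq\sqsubseteq$ there, then pass to open terms to obtain $\sqsubseteq^{\bullet}=\sqsubseteq^{\circ}$ and conclude compatibility from Lemma~\ref{Hprop1}(1). The only cosmetic differences are that the paper handles the closed-to-open step via a one-line general observation about compatible relations (rather than iterating Lemma~\ref{Hprop1}(3)), and that it does not spell out the preorder verification for $\sqsubseteq$, which you rightly include since Proposition~\ref{How_sim} requires it.
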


\begin{proof}
	Before we start the proof, we observe the following general result. If a compatible relation $\mathcal{R}$ is contained in a closed relation $\mathcal{S}$ when limited to closed terms, then $\mathcal{R} \subseteq \mathcal{S}^{\circ}$ since:
	$$\Gamma \vdash M \,\mathcal{R}\, N \Rightarrow \forall \overrightarrow{V}:\Gamma, M[\overrightarrow{V}/\overrightarrow{x}] \,\mathcal{R}\, N[\overrightarrow{V}/\overrightarrow{x}] \Rightarrow \forall \overrightarrow{V}:\Gamma, M[\overrightarrow{V}/\overrightarrow{x}] \,\mathcal{S}\, N[\overrightarrow{V}/\overrightarrow{x}] \Rightarrow \Gamma \vdash M \,\mathcal{S}^{\circ}\, N \enspace .$$
	
	We write $\simil$ for the relation of $\Obser$-similarity. Since $\simil$ is an $\Obser$-simulation, we know by Proposition \ref{How_sim} that $\simil^{\bullet}$ limited to closed terms is one as well, and hence is contained in the largest $\Obser$-simulation $\simil$. By Lemma \ref{Hprop1} it holds that $\simil^{\bullet}$ is compatible, and by the observation before we know it is contained in the open extension $\simil^{\circ}$.
	Again by Lemma \ref{Hprop1}, we also know that $\simil^{\circ}$ is contained in $\simil^{\bullet}$.
	We can conclude that $\simil^{\circ}$ is equal to the Howe closure $\simil^{\bullet}$, which is compatible.
\end{proof}

To prove that $\Obser$-bisimilarity is compatible, we need another result from Lassen~\cite{Lassen} (a proof is also given in Appendix \ref{section:proofs}).

\begin{restatable}[]{lemma}{LemmaSym}
	\label{lemma:symmetry}
	If \,$\mathcal{R}^{\circ}$ is symmetric and reflexive, then $\mathcal{R}^{\bullet *}$ (the transitive closure of $\mathcal{R}^{\bullet}$) is symmetric.
\end{restatable}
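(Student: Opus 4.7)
The plan is to reduce the symmetry of $\mathcal{R}^{\bullet *}$ to showing the single inclusion $(\mathcal{R}^{\bullet})^{\mathsf{op}} \subseteq \mathcal{R}^{\bullet *}$; taking the transitive closure of both sides then gives $(\mathcal{R}^{\bullet *})^{\mathsf{op}} = ((\mathcal{R}^{\bullet})^{\mathsf{op}})^{*} \subseteq \mathcal{R}^{\bullet *}$, which is exactly symmetry. The main obstacle is that, because the compatibility rules in Fig.~\ref{fig:clos} constrain all immediate subterms simultaneously, one has to work a little to control how compatibility interacts with the transitive closure; once that is in place, the rest of the argument is a structural induction driven entirely by the symmetry of $\mathcal{R}^{\circ}$.

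As a preliminary I would show that $\mathcal{R}^{\bullet *}$ is itself compatible. The underlying lemma is: if $\mathcal{S}$ is both compatible and reflexive, then $\mathcal{S}^{*}$ is compatible, because given subterms related by $\mathcal{S}^{*}$-chains one can assemble a single outer $\mathcal{S}^{*}$-chain by varying one position at a time, using compatibility of $\mathcal{S}$ at each step and reflexivity of $\mathcal{S}$ to keep the remaining positions fixed. Lemma~\ref{Hprop1} supplies both properties for $\mathcal{R}^{\bullet}$, noting that reflexivity of $\mathcal{R}^{\circ}$ forces $\mathcal{R}$ to be reflexive on closed terms, which is the only hypothesis Lemma~\ref{Hprop1} needs.

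The main step is then an induction on the derivation of $M \,\mathcal{R}^{\bullet}\, N$ showing $N \,\mathcal{R}^{\bullet *}\, M$. Every such derivation ends with rule \textbf{(HV)} or \textbf{(HC)}, producing some intermediate $M'$ with $M \,\widehat{\mathcal{R}^{\bullet}}\, M'$ and $M' \,\mathcal{R}^{\circ}\, N$. Symmetry of $\mathcal{R}^{\circ}$ gives $N \,\mathcal{R}^{\circ}\, M'$, and the inclusion $\mathcal{R}^{\circ} \subseteq \mathcal{R}^{\bullet}$ from Lemma~\ref{Hprop1} strengthens this to $N \,\mathcal{R}^{\bullet}\, M'$. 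Meanwhile, the step $M \,\widehat{\mathcal{R}^{\bullet}}\, M'$ arose from some rule in Fig.~\ref{fig:clos}, with strictly smaller subderivations $N_i \,\mathcal{R}^{\bullet}\, N_i'$ on the immediate subterms; the inductive hypothesis gives $N_i' \,\mathcal{R}^{\bullet *}\, N_i$, so re-applying the same compatibility rule yields $M' \,\widehat{\mathcal{R}^{\bullet *}}\, M$, which promotes to $M' \,\mathcal{R}^{\bullet *}\, M$ by the preliminary step. Chaining $N \,\mathcal{R}^{\bullet}\, M'$ with $M' \,\mathcal{R}^{\bullet *}\, M$ then delivers the required $N \,\mathcal{R}^{\bullet *}\, M$, completing the induction and hence the proof.
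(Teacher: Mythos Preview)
Your argument is correct. Both you and the paper reduce the problem to showing $(\mathcal{R}^{\bullet})^{\mathsf{op}} \subseteq \mathcal{R}^{\bullet *}$, and both rely on the compatibility of $\mathcal{R}^{\bullet *}$ (which you spell out explicitly via the ``vary one position at a time'' argument, while the paper simply asserts $\widehat{\mathcal{R}^{\bullet *}} \subseteq \mathcal{R}^{\bullet *}$ as a consequence of Lemma~\ref{Hprop1}).

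The difference lies in how that inclusion is established. You run an explicit induction on Howe-closure derivations, using the inductive hypothesis on the subderivations feeding into the compatible-refinement step. The paper instead exploits the characterisation of $\mathcal{R}^{\bullet}$ as the \emph{least} solution of $\widehat{\mathcal{S}}\,\mathcal{R}^{\circ} \subseteq \mathcal{S}$: it verifies algebraically, via the chain
\[
\widehat{\mathcal{R}^{\bullet * \,\mathsf{op}}}\,\mathcal{R}^{\circ}
= \widehat{\mathcal{R}^{\bullet *}}^{\,\mathsf{op}}\,\mathcal{R}^{\circ\,\mathsf{op}}
\subseteq \mathcal{R}^{\bullet *\,\mathsf{op}}\,\mathcal{R}^{\bullet\,\mathsf{op}}
= \mathcal{R}^{\bullet *\,\mathsf{op}},
\]
that $\mathcal{R}^{\bullet *\,\mathsf{op}}$ is itself such a solution, whence $\mathcal{R}^{\bullet} \subseteq \mathcal{R}^{\bullet *\,\mathsf{op}}$ by minimality. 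The key observation the paper uses that you do not is $\widehat{\mathcal{S}^{\mathsf{op}}} = \widehat{\mathcal{S}}^{\,\mathsf{op}}$, which packages all the per-rule symmetry you handle inductively into a single equation. Your version is more hands-on and makes the role of the derivation structure visible; the paper's version is terser and avoids any case analysis on the compatibility rules.
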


Given these facts, we can derive the following.

\addtocounter{theoremm}{-1}
\begin{theoremm}[B]\label{theorem:bis_com}
	If $\,\Obser$ is a decomposable set of Scott open modalities, then the open extension of the relation of applicative $\,\Obser$-bisimilarity is compatible.
\end{theoremm}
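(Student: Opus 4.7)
The plan is to mirror the proof of Theorem~\ref{theorem:sim_com}(A), but to replace the Howe closure of $\Obser$-similarity with the transitive closure of the Howe closure of $\Obser$-bisimilarity, in order to extract symmetry via Lemma~\ref{lemma:symmetry}. Write $\bisim$ for applicative $\Obser$-bisimilarity. Since $\bisim$ is symmetric by definition and reflexive (the identity relation is an $\Obser$-bisimulation), its open extension $\bisim^{\circ}$ is symmetric and reflexive, so by Lemma~\ref{lemma:symmetry} the relation $\bisim^{\bullet *}$ is symmetric.

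First I would show that $\bisim^{\bullet *}$, restricted to closed terms, is an $\Obser$-simulation. Proposition~\ref{How_sim} already provides this for $\bisim^{\bullet}$ itself, so the remaining work is to check that $\Obser$-simulations are closed under transitive closure. Clauses (1) and (3) of Definition~\ref{definition:sim} follow at once by chasing a finite chain of $\bisim^{\bullet}$-steps. For clause (2), given $M \bisim^{\bullet} M_1 \bisim^{\bullet} \cdots \bisim^{\bullet} N$, successive applications of the simulation property for $\bisim^{\bullet}$ give $|M_i|\,\Obser(\bisim^{\bullet})\,|M_{i+1}|$; iterating the relator composition property $\Obser(\mathcal{R})\,\Obser(\mathcal{S}) \subseteq \Obser(\mathcal{R}\mathcal{S})$ from Lemma~\ref{Rel_prop1}(2), together with monotonicity (Lemma~\ref{Rel_prop1}(3)) to absorb the $k$-fold composition of $\bisim^{\bullet}$ into $\bisim^{\bullet *}$, yields $|M|\,\Obser(\bisim^{\bullet *})\,|N|$. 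Combined with the symmetry already noted, $\bisim^{\bullet *}$ restricted to closed terms is therefore a symmetric $\Obser$-simulation, i.e.\ an $\Obser$-bisimulation, and is hence contained in $\bisim$.

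Next I would verify that $\bisim^{\bullet *}$ is compatible. By Lemma~\ref{Hprop1}, the Howe closure $\bisim^{\bullet}$ is itself compatible and reflexive, and these two properties together are preserved under transitive closure: given a term constructor with compatible-refinement chains for each of its arguments, one can close the constructor one argument at a time, using reflexivity to hold the remaining arguments fixed. Then, reusing the general observation isolated at the beginning of the proof of Theorem~\ref{theorem:sim_com}(A), that a compatible relation contained in a closed relation on closed terms is already contained in that closed relation's open extension, the inclusion $\bisim^{\bullet *} \subseteq \bisim$ on closed terms lifts to $\bisim^{\bullet *} \subseteq \bisim^{\circ}$. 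Combined with $\bisim^{\circ} \subseteq \bisim^{\bullet} \subseteq \bisim^{\bullet *}$ from Lemma~\ref{Hprop1}, I would conclude that $\bisim^{\circ} = \bisim^{\bullet *}$, which is compatible.

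The main obstacle is the closure of $\Obser$-simulations under transitive closure: it is the one step not directly imported from the similarity case, and it relies essentially on the relator composition property from Lemma~\ref{Rel_prop1}(2), which is precisely where the assumption that $\Obser$ is a decomposable set of Scott open modalities feeds into the argument. All other steps are clean recyclings of Lemma~\ref{Hprop1}, Lemma~\ref{lemma:symmetry}, and the generic observation already spelled out in the proof of Theorem~\ref{theorem:sim_com}(A).
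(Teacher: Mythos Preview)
Your proposal is correct and follows essentially the same approach as the paper: take the transitive closure $\bisim^{\bullet *}$, use Lemma~\ref{lemma:symmetry} for symmetry, use Proposition~\ref{How_sim} together with Lemma~\ref{Rel_prop1}(2) to see that $\bisim^{\bullet *}$ on closed terms is an $\Obser$-simulation (hence an $\Obser$-bisimulation), and then conclude $\bisim^{\circ} = \bisim^{\bullet *}$ exactly as in the proof of Theorem~\ref{theorem:sim_com}(A).

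One small correction to your final paragraph: the relator composition property of Lemma~\ref{Rel_prop1}(2) only requires the modalities to be upwards closed, not decomposable or Scott open. The decomposability and Scott-openness hypotheses are consumed earlier, inside Proposition~\ref{How_sim} (via Lemma~\ref{Rel_prop3} and Lemma~\ref{Rel_prop2} in the Key Lemma), not in the transitive-closure step. This does not affect the validity of your argument, only the attribution of where the assumptions are used.
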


\begin{proof}
	We write $\Obser$-bisimilarity as $\bisim$. From Proposition \ref{How_sim} we know that $\bisim^{\bullet}$ on closed terms is an $\Obser$-simulation.
	Using property 2 of Lemma \ref{Rel_prop1}, we can verify that the transitive closure $\bisim^{\bullet *}$ is an $\Obser$-simulation as well. 
	Since $\bisim$ is reflexive and symmetric, we know by the previous lemma that $\bisim^{\bullet *}$ is symmetric. Hence $\bisim^{\bullet *}$ is an $\Obser$-bisimulation, implying $(\bisim^{\bullet *}) \subseteq (\bisim^{\circ})$ by compatibility of $\bisim^{\bullet *}$ (due to Lemma \ref{Hprop1} and the same reasoning from the previous proof). 
	Again by Lemma \ref{Hprop1}, it holds that $(\bisim^{\circ}) \subseteq (\bisim^{\bullet}) \subseteq (\bisim^{\bullet *})$, hence $(\bisim^{\bullet *}) = (\bisim^{\circ})$. We can conclude that $\bisim^{\circ}$ is compatible.
\end{proof}

\noindent
Note that Theorem~\ref{Main_Theorem} is an immediate consequence of Theorems~\ref{log_is_sim} and~\ref{theorem:similarity-compatible}.
As such, we have finished the proof of compatibility of the full behavioural equivalence and positive behavioural preorder.

\section{Comparing equivalences}\label{section:contextual}

Our development thus far has focused on two main relations for comparing programs. The first is the behavioural equivalence
${\equiv_\LogNeg}$ defined via the full logic with negation and characterised as 
$\mathcal{O}$-bisimilarity. The second is the positive behavioural  preorder 
${\sqsubseteq_{\LogPos}}$, defined via the negation-free logic and characterised as
$\mathcal{O}$-similarity. The latter  preorder induces the equivalence relation ${\equiv_{\LogPos}}$ of mutual
$\mathcal{O}$-similarity. 

As discussed after Definition~\ref{definition:preorder}, there is an inclusion
${\equiv_\LogNeg} \subseteq {\equiv_{\LogPos}}$. That is, $\mathcal{O}$-bisimilarity implies mutual $\mathcal{O}$-similarity. It is a standard fact in concurrency theory that ordinary mutual similarity is in general a strictly coarser relation than bisimilarity. A similar fact holds in the context of the present paper: it is possible that
the inclusion ${\equiv_\LogNeg} \subseteq {\equiv_{\LogPos}}$ is strict. We illustrate this with an example, adapted from Lassen's PhD thesis \cite[page 92]{Lassen}, for which the two relations do not coincide. (Ong gives another example of a similar phenomenon~\cite{Ong}.)

\begin{example}[Separating the full and positive behavioural equivalence, cf.\ \cite{Lassen}]
\label{example:separation:A}
We take $\Sigma := \{\textit{or}: \alpha^2 \rightarrow \alpha\}$ and $\Obser := \{\Diamond, \Box\}$ from Example \ref{example:non}.
Let $\Omega : \UnitType$ be some always diverging computation term. We consider the following two computation terms of type $(\UnitType \to \UnitType)$:
\begin{align*}
M ~ :=  ~ & \textit{or}\,(~\textbf{return}(\lambda x: \UnitType.\, (\textbf{return } \ast)), ~~ \textbf{return}(\lambda x: \UnitType. \Omega)) \\
N  ~ := ~  & \textit{or}\,(M, ~ \textbf{return}(\lambda x: \UnitType. \, (\textit{or}\,(~\textbf{return } \ast,~~ \Omega)))
\end{align*}
We make use of our logics $\LogNeg$ and $\LogPos$ to give simple proofs that 
$M \equiv_{\LogPos} N$  but $M \not\equiv_\LogNeg N$.

By definition, $M \equiv_{\LogPos} N$ if the two terms satisfy the same positive computation formulas of type $(\UnitType \to \UnitType)$.
Note that $\Diamond(\bot)$ and $\Box(\bot)$ are not satisfied by any terms, $\Diamond(\phi) \vee \Box(\phi)$ is satisfied precisely if $\Diamond(\phi)$ is, and $\Diamond(\phi) \wedge \Box(\phi)$ is satisfied precisely if $\Box(\phi)$ is.
Moreover, $(V \mapsto \Phi) \wedge (V \mapsto \Psi)$ and $(V \mapsto \Phi) \vee (V \mapsto \Psi)$ are equivalent to $(V \mapsto (\Phi \wedge \Psi))$ and $(V \mapsto (\Phi \vee \Psi))$ respectively.
By these observations and Lemma \ref{comp_clas}, we can reduce the set of formulas we need to check to establish equivalence between two computation terms of type $(\UnitType \to \UnitType)$ to:
$$\{\Diamond(\top),\,\, \Box(\top),\,\, \Diamond(\ast \mapsto \Diamond(\top)),\,\, \Diamond(\ast \mapsto \Box(\top)),\,\, \Box(\ast \mapsto \Diamond(\top)),\,\, \Box(\ast \mapsto \Box(\top))\}$$
The terms $M$ and $N$ both satisfy the same formulas from this set, namely $\Diamond(\top), \Box(\top), \Diamond(\ast \mapsto \Diamond(\top)),$ and $\Diamond(\ast \mapsto \Box(\top))$.
We conclude that $M \equiv_{\LogPos} N$.

The terms $M$ and $N$ are not however equivalent under full behavioural equivalence,
since $N$ satisfies the formula $\Diamond(\ast \mapsto (\Diamond(\top) \wedge \neg \Box(\top)))$, but $M$ does not. (This formula says of a computation  of type $(\UnitType \to \UnitType)$ that it may return a lambda term which, when applied to argument $*$, results in a nondeterministic computation that may diverge and may also terminate.)
Thus $M \not\equiv_\LogNeg N$.
\end{example}

It is also instructive to compare ${\equiv_\LogNeg}$   and $\equiv_{\LogPos}$ to 
\emph{contextual equivalence}, which is 
often taken to be the default equivalence for applicative programming languages. 
For the language of this paper, the  collection $\mathcal{O}$ of modalities used to define our logics serves a further purpose: via the  preorder $\prebas$ on unit-type computation trees from Section~\ref{section:preorder}, it gives rise to a natural definition of 
\emph{contextual preorder} between terms of the same type and aspect (value or computation). The idea is that two terms $M,N$ of the same type $\tau$ and aspect are related by the  {contextual preorder $\sqsubseteq_\text{ctxt}$} if for any context $C(-)$ of unit type, accepting terms of type $\tau$ of the relevant aspect, we have $|C(M)| \prebas |C(N)|$.\footnote{If one expands the use of $\prebas$ in this definition as in statement~\ref{base:three} of Proposition~\ref{prop:base_relation}, one obtains a definition of 
contextual preorder in terms of basic unit-type `observations' 
of the form $o (\{*\})$ and $o (\emptyset)$.}
The same definition can  be given more elegantly using compatibility.

\begin{definition}[Contextual preorder/equivalence]
	The \emph{contextual preorder} relation $\sqsubseteq_\text{ctxt}$ is the largest compatible preorder satisfying: 
 	\begin{equation}
 	\label{equation:contextual}
 	\text{for all $M,N \in \textit{Com}(\UnitType)$, it holds that $M \sqsubseteq_\text{ctxt} N$ implies
 	$|M|  \prebas |N|$.}
 	\end{equation}
 	The relation $\equiv_\text{ctxt}$ of \emph{contextual equivalence} is the equivalence relation determined by
 	$\sqsubseteq_\text{ctxt}$ (i.e., the intersection of $\sqsubseteq_\text{ctxt}$ and its converse).
\end{definition}
 
If $\mathcal{O}$ is a decomposable set of Scott-open modalities then, by Theorem~\ref{Main_Theorem} and Proposition~\ref{prop:preceq}, ${\sqsubseteq_{\LogPos}}$ is a compatible preorder satisfying condition~(\ref{equation:contextual}) above. Hence,  there is an inclusion of relations 
${\sqsubseteq_{\LogPos}} \subseteq {\sqsubseteq_\text{ctxt}}$.  
We thus have a chain of inclusions between equivalence relations.
$${\equiv_\LogNeg}\quad \subseteq \quad {\equiv_{\LogPos}} \quad \subseteq  \quad {\equiv_\text{ctxt}} \enspace .$$
\noindent
Having already shown the left-hand inclusion to be strict in general, we now show that the right-hand inclusion can also be strict. This is again based on an example from Lassen~\cite{Lassen}. 

\begin{example}[Separating contextual equivalence and positive behavioural equivalence, c.f.\ \cite{Lassen}]
\label{example:separation:B}
We take $\Sigma := \{\textit{or}: \alpha^2 \rightarrow \alpha\}$ and $\Obser := \{\Diamond\}$ from Example \ref{example:non}, that is we consider angelic nondeterminism.

The example we give differs marginally  from Lassen~\cite[page 90]{Lassen}, 
in that  one of the terms involved does not exhibit any effectful behaviour.
Let $\textit{id}: \NatType \to \NatType$ be the value term given by the identity function $\lambda x : \NatType.(\textbf{return } x)$. 
We can define approximations of this identity function, using a term $\textit{apid}: \NatType \to (\NatType \to \NatType)$ such that for any two natural numbers $n, m \in \mathbb{N}$:
$$|\textbf{let } (\textit{apid}\,\,\overline{n}) \Rightarrow y \textbf{ in } (y\,\,\overline{m})| \quad = \quad \begin{cases} \overline{m}\quad & \text{if } m < n\\ \bot & \text{otherwise.} \end{cases}$$
So for any natural number $n \in \mathbb{N}$, `$\textit{apid}\,\,\overline{n}$' returns the $n$-th approximation of the identity function.

Given the computation term $?N$ of type $\NatType$ from Example \ref{example:nonA} at the end of Section \ref{section:language}, which nondeterministically returns any natural number, we define the nondeterministic approximation of the identity function:
$$M \quad := \quad \textbf{let } ?N \Rightarrow x \textbf{ in } (\textbf{return } \lambda y. (\textbf{let } (\textit{apid}\,\,x) \Rightarrow z \textbf{ in } (z\,\,y)))$$
This term nondeterministically returns any approximation of the identity function. It holds that $M$ is contextually equivalent to $\textbf{return}(\textit{id})$ under the angelic interpretation of nondeterminism, since if a context holding the function terminates, it will have in the process of computation only fed the function a finite number of arguments. As such, in any context, there is always a large enough approximation of the identity which cannot be distinguished from the identity function. This outline argument that  $M \equiv_\text{ctxt} \textbf{return}(\textit{id})$ 
can be made precise 
along the lines of Lassen~\cite[Example 6.4.4]{Lassen}.

However, $M$  and $\textbf{return}(\textit{id})$ are distinguished by the positive logic, since $M$ cannot return the identity function itself. Specifically, the formula $\Diamond(\bigwedge_{n \in \mathbb{N}} \overline{n} \mapsto \Diamond(\{n\}))$ is satisfied by  $\textbf{return}(\textit{id})$ but not by $M$.
Thus $M \not\equiv_{\LogPos} \textbf{return}(\textit{id})$.
\end{example}

It is an interesting question whether our modal behavioural logic can be used to better understand the relationship between applicative (bi)similarity and contextual equivalence/preorder.  For example, in the case of effects for which the relations differ (such as nondeterminism), can the logic be restricted to 
characterise contextual equivalence/preorder? 
\section{Pure behavioural logic}\label{section:pure}

In this section, we explore an alternative formulation of our logic. 
This has  both conceptual and practical  motivations.
Our  very approach to behavioural logic, fits into the category of \emph{endogenous} logics in the sense of Pnueli~\cite{pnueli77}.
Formulas ($\phi $ and $\Phi$) express properties of individual programs, through satisfaction relations
($V \models \phi$ and  $M \models \Phi$). Programs are thus considered as `models' of the logic, with the satisfaction relation being defined via program behaviour. 

It is conceptually appealing to push the separation between program and logic to its natural conclusion, and ask for the 
syntax of the logic to be independent of the syntax of the programming language. Indeed, it seems natural that 
it should be possible to express properties of program behaviour without knowledge of the syntax of the programming language. Under our formulation of the logic $\LogNeg$, this desideratum is violated by the value formula
$(V \mapsto \Psi)$ at function type, which mentions the programming language value $V$.

This issue can be addressed, by replacing the basic value formula $(V \mapsto \Psi)$ with the alternative
$(\phi \mapsto \Psi)$, already mentioned in Section~\ref{section:logic}. Such a change also has a practical motivation.
The formula $(\phi \mapsto \Psi)$ declares a precondition and postcondition for function application, supporting a useful specification style. It also supports 
the expression of various other properties that are natural from the point of view of program specification. For example, 
using negation, we can define the dual formula $\neg (\phi \mapsto \neg \Phi)$, with the resulting semantics\footnote{Given compatibility of $\sqsubseteq_{\LogPos}$, this  formula is equivalent to $\bigvee_{\{V \mid V \models \phi\}} \chi_V \mapsto \Psi$, where $\chi_V$ is from Lemma \ref{pre_form}. As such, it may also be formulated without the use of negation in the positive behavioural logic $\LogPos$.}:
\[
	W \models \neg (\phi \mapsto \neg \Psi) ~~~ \Leftrightarrow ~~~ \exists {V \!\in\! \textit{Val}(\TypeOne)}.~ V \models \phi ~\text{and}~ {WV} \models  \Psi \enspace  .
\]
Further examples of the usefulness of the pure behavioural logic will be given in Section~\ref{section:reasoning}.

\begin{definition}\label{definition:pure}
The \emph{pure behavioural logic} \ $\PureLogNeg$ is defined by  replacing rule (2) in Fig. \ref{figure:logneg} with the alternative:
\[
	\frac{\phi \in \textit{VF}\,(\TypeTwo) \quad \quad \Psi \in \textit{CF}\,(\TypeOne)}{(\phi \mapsto \Psi) \in \textit{VF}\,(\TypeTwo \rightarrow \TypeOne)}(2^*)
\]
The semantics is modified by defining $W \models (\phi \mapsto \Psi)$ using formula (\ref{equation:derived-semantics}) of Section~\ref{section:logic}. 
\[
	W \models (\phi \mapsto \Psi) ~~~ \Leftrightarrow ~~~ \forall {V \!\in\! \textit{Val}(\TypeOne)}.~ V \models \phi ~\text{implies}~ {WV} \models  \Psi \enspace  .
\]
\end{definition}

With the definition of  $(\phi \mapsto \Psi)$ within $\LogNeg$, given in (\ref{equation:define-formula}) of Section~\ref{section:logic}, we can see $\PureLogNeg$ as a fragment of $\LogNeg$ (and $\PureLogPos$ a fragment of  $\LogPos$), resulting in the following fact.
\begin{lemma}\label{FinV}
	Any formula of \ $\PureLogNeg$ and \ $\PureLogPos$ is equivalent to some formula of \ $\LogNeg$ and \ $\LogPos$ respectively.
\end{lemma}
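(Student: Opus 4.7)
The plan is to establish both claims simultaneously by providing a semantics-preserving translation $(-)^\flat$ from $\PureLogNeg$ into $\LogNeg$, and to verify that the translation restricts to a map from $\PureLogPos$ into $\LogPos$. The proof proceeds by mutual induction on the joint structure of value and computation formulas in $\PureLogNeg$. Since rules (1), (3), (4), (5), (6), (7), (8), (9) are shared between the two logics, the translation is defined to commute with these: for instance, $(\bigvee_I \phi_i)^\flat := \bigvee_I \phi_i^\flat$, $(o\,\phi)^\flat := o\,\phi^\flat$, $(\neg \phi)^\flat := \neg \phi^\flat$, and so on. Semantic equivalence in these cases is immediate from the induction hypothesis together with the fact that the semantic clauses for these connectives are identical in both logics.

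The only nontrivial case is the basic value formula at function type, where $(\phi \mapsto \Psi) \in \textit{VF}_{\PureLogNeg}(\TypeTwo \to \TypeOne)$ with $\phi \in \textit{VF}_{\PureLogNeg}(\TypeTwo)$ and $\Psi \in \textit{CF}_{\PureLogNeg}(\TypeOne)$. By the induction hypothesis, we have semantically equivalent translations $\phi^\flat \in \textit{VF}_{\LogNeg}(\TypeTwo)$ and $\Psi^\flat \in \textit{CF}_{\LogNeg}(\TypeOne)$. Following equation~(\ref{equation:define-formula}) in Section~\ref{section:logic}, I would define
\[
(\phi \mapsto \Psi)^\flat ~ := ~ \bigwedge \{(V \mapsto \Psi^\flat) \mid V \in \textit{Val}(\TypeTwo),\, V \models \phi^\flat\} \enspace .
\]
This is a well-formed formula in $\textit{VF}_{\LogNeg}(\TypeTwo \to \TypeOne)$ built using rules (2) and (5) from Fig.~\ref{figure:logneg}. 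The derived semantics~(\ref{equation:derived-semantics}) from Section~\ref{section:logic} yields $W \models (\phi \mapsto \Psi)^\flat$ iff for all $V \in \textit{Val}(\TypeTwo)$ with $V \models \phi^\flat$ we have $WV \models \Psi^\flat$, which by the induction hypothesis is equivalent to $W \models (\phi \mapsto \Psi)$ under the semantics of Definition~\ref{definition:pure}.

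For the positive case, one observes that the translation defined above uses only rules (1), (3), (4), (5), (7), (8), and the defining clause for $(\phi \mapsto \Psi)^\flat$ uses only rules (2) and (5); none of these introduces negation. Hence if the source formula lies in $\PureLogPos$, its translation lies in $\LogPos$. The main obstacle is really just conceptual rather than technical: one must recognise that quantification over the (semantically determined) set $\{V \mid V \models \phi^\flat\}$ is legitimate inside the logic because conjunctions are indexed by arbitrary sets, and that the induction on formula structure genuinely proceeds on the source syntax rather than on any recursively defined set of values. With this in mind, the verification is routine, and the lemma follows.
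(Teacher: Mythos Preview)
Your proposal is correct and follows essentially the same approach as the paper: the paper simply remarks that, via the definition~(\ref{equation:define-formula}) of $(\phi \mapsto \Psi)$ already given in Section~\ref{section:logic}, one can regard $\PureLogNeg$ (respectively $\PureLogPos$) as a fragment of $\LogNeg$ (respectively $\LogPos$), and treats the lemma as immediate. Your translation $(-)^\flat$ is exactly this embedding made explicit, with the inductive correctness check spelled out.
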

Moreover, once we have established compatibility, we have the following result.
\begin{proposition}
\label{proposition:equi-expressive}
	If the open extension of $\equiv_\LogNeg$ is compatible then
	the logics $\LogNeg$ and $\PureLogNeg$ are equi-expressive. Similarly, if the open extension of $\sqsubseteq_{\LogPos}$ is compatible then 
	the logics
	$\LogPos$ and $\PureLogPos$ are equi-expressive.
\end{proposition}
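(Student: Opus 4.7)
The plan is to prove the converse of Lemma \ref{FinV}: under the stated hypothesis, every formula of $\LogNeg$ (respectively $\LogPos$) is equivalent to some formula of $\PureLogNeg$ (respectively $\PureLogPos$). The only obstruction is the basic value formula $(V \mapsto \Phi) \in \textit{VF}\,(\TypeOne \to \TypeTwo)$, which names a specific programming-language value $V$ that $\PureLogNeg$ cannot mention directly. The strategy is to replace it with $(\chi_V \mapsto \Phi^*)$, where $\chi_V \in \textit{VF}_{\PureLogNeg}(\TypeOne)$ is a characteristic formula pinning down the equivalence class of $V$, and $\Phi^*$ is a $\PureLogNeg$-equivalent of $\Phi$. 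Compatibility is precisely what makes this work: it ensures that $W'V' \equiv_{\LogNeg} W'V$ whenever $V' \equiv_{\LogNeg} V$, so that quantifying over all $V'$ satisfying $\chi_V$ amounts to the same thing as instantiating at $V$ itself.

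I would organise this as a simultaneous induction on the depth of types, establishing at each $\TypeOne$: (1) every $\LogNeg$ formula at $\TypeOne$ is equivalent to some $\PureLogNeg$ formula at $\TypeOne$; and (2) for each $V \in \textit{Val}(\TypeOne)$ there exists $\chi_V \in \textit{VF}_{\PureLogNeg}(\TypeOne)$ with $V' \models \chi_V \Leftrightarrow V' \equiv_{\LogNeg} V$. The base cases are immediate: $\chi_{\overline{n}} := \{n\}$ and $\chi_* := \top$. At a function type $\TypeOne \to \TypeTwo$ with a value $W$, I would define
\[
\chi_W ~:=~ \bigwedge \{\, (\chi_V \mapsto \Psi) \mid V \in \textit{Val}(\TypeOne),\ \Psi \in \textit{CF}_{\PureLogNeg}(\TypeTwo),\ WV \models \Psi \,\},
\]
with each inner $\chi_V$ supplied by the outer induction hypothesis (2) at $\TypeOne$, and the range of $\Psi$ made sufficient by the outer induction hypothesis (1) at $\TypeTwo$. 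Unfolding $W' \models (\chi_V \mapsto \Psi)$ and invoking compatibility to replace any $V' \equiv_{\LogNeg} V$ by $V$ itself collapses the conjunction to the statement ``$WV \models \Psi \Rightarrow W'V \models \Psi$ for every $V$ and every $\Psi \in \textit{CF}_{\PureLogNeg}(\TypeTwo)$''. Closure of $\PureLogNeg$ under negation promotes this implication to an equivalence, and the $\equiv_{\LogNeg}$-variant of Lemma \ref{funct_clas} (noted after Lemma \ref{comp_clas}) then yields $W \equiv_{\LogNeg} W'$, as required.

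Claim (1) at $\TypeOne \to \TypeTwo$ follows by an inner induction on formula structure; the only nontrivial step is replacing a basic formula $(V \mapsto \Phi)$ by $(\chi_V \mapsto \Phi^*)$, whose equivalence is immediate from reflexivity ($V \models \chi_V$) for one direction and from compatibility (giving $WV' \equiv_{\LogNeg} WV$ for every $V' \models \chi_V$) for the other. The positive case is handled by the same construction and the same argument, throughout replacing $\equiv_{\LogNeg}$ by $\sqsubseteq_{\LogPos}$; the characteristic formula then pins down the upward $\sqsubseteq_{\LogPos}$-cone of $W$ rather than an equivalence class, but reflexivity of the preorder still gives $V \models \chi_V$, which is all the translation of $(V \mapsto \Phi)$ requires. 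I expect the main obstacle to be the bookkeeping of the simultaneous induction — confirming that (1) and (2) at type depth $n$ only invoke each other at depths strictly below $n$ — but once that is in order, the essential content is a single appeal to compatibility at the appropriate moment.
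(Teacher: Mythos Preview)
Your proposal is correct and follows essentially the same route as the paper: translate $(V \mapsto \Phi)$ to $(\chi_V \mapsto \widehat{\Phi})$ using a characteristic formula for $V$, and appeal to compatibility to show the translation is sound. The only difference is organisational: the paper obtains $\chi_V$ by citing Lemma~\ref{pre_form} applied to $\PureLogNeg$ (which yields a formula characterising $\equiv_{\PureLogNeg}$, identified with $\equiv_{\LogNeg}$ at the lower type by the induction hypothesis), whereas you fold an explicit construction of $\chi_V$ into the induction as your clause~(2). This makes your argument a bit more self-contained but is not a substantively different idea.
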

\begin{proof}
Lemma \ref{FinV} tells us we can translate any formula from $\PureLogNeg$ into an equivalent formula from $\LogNeg$. For the  reverse translation, whose correctness proof is more interesting, we give more detail. Every value (respectively computation) formula,
$\phi$ (respectively $\Phi$), of $\LogNeg$ is inductively translated to a corresponding formula $\widehat{\phi}$ (respectively $\widehat{\Phi}$) of 
$\PureLogNeg$. We do this by induction on the structure of the formula, where we define the following translation:

\begin{gather*} 
	\widehat{\{n\}} := \{n\}, \quad \widehat{o \, \phi} := o \, \widehat{\phi}, \quad \widehat{\bigvee_I \phi} := \bigvee_I \widehat{\phi}, \quad \widehat{\bigwedge_I \phi} := \bigwedge_I \widehat{\phi}, \quad \widehat{\neg \phi} := \neg \widehat{\phi}
	\\
	\widehat{(V \mapsto \Phi)} ~ := ~ (\, \psi_V  \,  \mapsto\,  \widehat{\Phi} \, ) \, ,
\end{gather*}

\noindent
where, in the last case, $\psi_V$ is the formula $\chi_V^{\PureLogNeg}$ from Lemma \ref{pre_form}. Hence: $V \models_{\PureLogNeg} \psi_V$; and, for any $\psi$, if
$V  \models_{\PureLogNeg} \psi$ then $\psi_V \to \psi$ (meaning that $V' \models_{\PureLogNeg} \psi_V$ implies 
$V' \models_{\PureLogNeg} \psi$, for all $V'$). 

One now proves, by induction on types, followed by an induction on formulas $\phi$ (or $\Phi$) of this type, that the $\PureLogNeg$-semantics of $\widehat{\phi}$ (resp.\ $\widehat{\Phi}$) coincides
with the $\LogNeg$-semantics of ${\phi}$ (resp.\ ${\Phi}$). 
This induction is obvious in all cases except for formula
$\widehat{(V \mapsto \Phi)}$ of a function type, for which we need to show that: $W  \models_{\PureLogNeg} \psi_V  \,  \mapsto\,  \widehat{\Phi}$
if and only if $WV  \models_{\PureLogNeg} \widehat{\Phi}$.
 
For the interesting right-to-left implication, suppose that $WV  \models_{\PureLogNeg} \widehat{\Phi}$, and consider any $V'$ satisfying $V' \models_{\PureLogNeg}  \psi_V$.
By the induction hypothesis, using the defining property of $\psi_V$, we have that  $V' \equiv_{\LogNeg} V$. 
It then follows from the compatibility of $\equiv_{\LogNeg}$ 
that $WV' \equiv_{\LogNeg} WV$, whence $WV' \equiv_{\PureLogNeg} WV$, again by the induction hypothesis.
Thus it follows from $WV \models_{\PureLogNeg} \widehat{\Phi}$ that 
$WV'  \models_{\PureLogNeg} \widehat{\Phi}$, as required. 
 
The proof in the case of the positive logics is similar.
\end{proof}

Combining the above proposition with Theorem \ref{Main_Theorem} we obtain the following.
\begin{corollary}\label{corollary:pure_compat}
	Suppose $\mathcal{O}$ is a decomposable family of Scott-open modalities. Then  $\equiv_{\PureLogNeg}$
	coincides with $\equiv_{\LogNeg}$, and $\sqsubseteq_{\PureLogPos}$ coincides with
	$\sqsubseteq_{\LogPos}$. Hence the open extensions of  $\equiv_{\PureLogNeg}$, $\sqsubseteq_{\PureLogPos}$, and $\equiv_{\PureLogPos}$
	are compatible.
\end{corollary}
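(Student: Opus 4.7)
The proof is a direct combination of Theorem~\ref{Main_Theorem} with Proposition~\ref{proposition:equi-expressive}, so there is no significant new work to do; the task is mainly to assemble the pieces carefully and observe that the compatibility of an equivalence $\equiv_{\mathcal{L}}$ (or preorder $\sqsubseteq_{\mathcal{L}}$) on terms depends only on the expressive power of the fragment $\mathcal{L}$, not on the syntactic form of its formulas.

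First, I would invoke Theorem~\ref{Main_Theorem}: since $\mathcal{O}$ is a decomposable set of Scott-open modalities, the open extensions of $\equiv_{\LogNeg}$ and $\sqsubseteq_{\LogPos}$ are compatible. This supplies precisely the hypotheses needed to apply Proposition~\ref{proposition:equi-expressive}, yielding that $\LogNeg$ and $\PureLogNeg$ are equi-expressive, and similarly $\LogPos$ and $\PureLogPos$.

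Next, I would observe that equi-expressivity of two fragments $\mathcal{L}$ and $\mathcal{L}'$ (in the sense that every formula of one is equivalent to a formula of the other, at matching type and aspect) immediately implies $\sqsubseteq_{\mathcal{L}} \,=\, \sqsubseteq_{\mathcal{L}'}$, directly from Definition~\ref{definition:preorder}: the set of formulas satisfied by a term is the same up to logical equivalence. Hence $\equiv_{\PureLogNeg} \,=\, \equiv_{\LogNeg}$ and $\sqsubseteq_{\PureLogPos} \,=\, \sqsubseteq_{\LogPos}$, and consequently $\equiv_{\PureLogPos} \,=\, \equiv_{\LogPos}$ as well, since passing to the intersection with the converse preorder respects equality of relations.

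Finally, compatibility of the open extensions transfers along these equalities of relations. The open extensions of $\equiv_{\PureLogNeg}$ and $\sqsubseteq_{\PureLogPos}$ are compatible because they literally coincide with the open extensions of $\equiv_{\LogNeg}$ and $\sqsubseteq_{\LogPos}$, which are compatible by Theorem~\ref{Main_Theorem}. The compatibility of the open extension of $\equiv_{\PureLogPos}$ follows from the parenthetical remark in Theorem~\ref{Main_Theorem}, which notes that compatibility of $\sqsubseteq_{\LogPos}$ entails compatibility of $\equiv_{\LogPos}$, since the class of compatible relations is closed under intersection with the converse. There is no real obstacle here: the only subtlety worth flagging is that Proposition~\ref{proposition:equi-expressive} takes compatibility as its hypothesis and produces equi-expressivity as its conclusion, so one must first appeal to Theorem~\ref{Main_Theorem} before the proposition becomes applicable; the rest is routine bookkeeping.
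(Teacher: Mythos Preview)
Your proposal is correct and follows essentially the same approach as the paper, which simply states that the corollary is obtained by combining Proposition~\ref{proposition:equi-expressive} with Theorem~\ref{Main_Theorem}. You have spelled out the logical dependencies more carefully than the paper does, including the correct observation that Theorem~\ref{Main_Theorem} must be invoked first to supply the compatibility hypothesis needed by Proposition~\ref{proposition:equi-expressive}.
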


\noindent
We do not know any proof of the compatibility of the $\equiv_{\PureLogNeg}$ and $\sqsubseteq_{\PureLogPos}$ relations that does not go via the logic $\LogNeg$. In particular, it is the presence of general recursion in the programming language 
(via the $\textbf{fix}$ operator) that has thwarted attempts to give direct proofs of compatibility for the pure-logic-induced preorder and equivalence.

\section{Reasoning with the logic}\label{section:reasoning}

The behavioural logics considered in this paper are designed for the purpose of formalising the notion of 
`behavioural property', and for defining behavioural equivalence. As infinitary propositional logics, they are not directly suited to practical applications such as  specification and verification. 
Nevertheless, they serve as low-level logics into which more practical finitary logics can be translated. 
For this, the closure of the logics  under infinitary propositional logic is important.
For example, there are standard translations of quantifiers and least and greatest fixed points into infinitary propositional logic. Furthermore, practical idioms for expressing effect-related properties can also be translated into logical combinations of modal formulas.

We illustrate this last point, by showing how Hoare triples, the standard formalism for specifying programs with global store, compile into our propositional logic with $(s \rightarrowtail s')$ modalities, for the language with lookup and update operations (Example~\ref{exampleB:gs}).
Take for example the statement $\{l = z\}M\{l' = z!\}$, where $l, l' \in L$ are two locations, and $z$ is an
auxiliary variable ranging over the natural numbers, and $M$ is a unit-type computation term.
This statement asserts that after executing $M$, the location $l'$ will contain the factorial of whatever number was stored at $l$ before starting the computation.
In the case of the \emph{total-correctness} version of Hoare logic,
in which $M$ is required to terminate, the property stated by the Hoare triple is equivalent to:\footnote{For $P(x)$ some property on $X$, we write $\bigvee_{a \in X, P(a)} \phi_a$ for $\bigvee\{\phi_a \mid a \in X, P(a)\}$. If the predicate is always true, we write $\bigvee_{a \in X} \phi_a$. We use a similar notation for $\bigwedge$.}
$$
M ~ \models ~ \bigwedge_{n \in \NatNum} ~ \bigwedge_{s \in \NatNum^L, s(l) = n} ~ \bigvee_{s' \in \NatNum^L, s'(l') = n!} (s \rightarrowtail s')\top \enspace , $$
where the formula on the right  expresses the combined effect of the precondition and postcondition.
Similarly, the \emph{partial correctness} interpretation, under which termination is not guaranteed, is captured by:
$$M ~ \models ~ \bigwedge_{n \in \NatNum} ~ \bigwedge_{s \in \NatNum^L, s(l) = n} \left(
   \,  \neg \left(\bigvee_{s' \in \NatNum^L} (s \rightarrowtail s') \top\right) \, \vee \, 
   \left(\bigvee_{s' \in \NatNum^L, s'(l') = n!} (s \rightarrowtail s')\top\right) \right) \enspace .$$

Such compilations of high-level formulas (for example, Hoare triples) into low-level formulas (the infinitary propositional modal logic) are cumbersome. Nevertheless, they serve two purposes.
The first is that translatability into the infinitary logic serves as a test for any proposed high-level specification syntax, as the existence of  such a translation guarantees that the syntax expresses only behaviourally meaningful properties. The second purpose is that the propositional modal logic can potentially serve as a guideline for the extraction of compositional proof rules for the high-level logic.
This is based on an interesting fact: such compositional proof rules are available for the 
infinitary propositional modal logic itself. Interestingly, this property is intimately related to the 
notion of \emph{strong decomposability} defined in Section~\ref{section:preorder}.
We now expand on this point. 

We consider a format for a style of compositional proof rule for programs 
whose outermost constructor is a sequential composition:
$\textbf{let } M \Rightarrow x \textbf{ in } N\,$. The proof rule we consider is specified by a 
modality $o \in \Obser$ together with a family of pairs of modalities: $\{(o_i,o'_i)\}_{i \in I}\,$.
This data determines the \emph{\textbf{let} proof rule} below, which makes use of the pure behavioural logic $\PureLogNeg$ from Section~\ref{section:pure}.
\begin{equation}
\label{equation:let-rule}
\frac{~\{ \, M \models o_i \,\psi_i  \quad \lambda x \colon \! \rho. \, N \models \psi_i \mapsto o'_i \, \phi \, \}_{i \in I}~}
{\textbf{let } M \Rightarrow x \textbf{ in } N \models o\, \phi}
\end{equation}
In this rule, the modalities $o$, $o_i$ and $o'_i$ and their index set $I$ have been determined by the specifying data. The other components of the rule may be instantiated arbitrarily, thus the rule is parametric with respect to  terms $M$ and $N$ and the formulas $\phi$, $\phi_i$ and $\psi_i$.
The rule is compositional in the sense that its premises require properties to be established of $M$ and $N$ separately.

\begin{definition}
A \textbf{let} proof rule~(\ref{equation:let-rule}) is said to be \emph{sound} if,
for all types $\rho, \, \tau$, computation terms $\vdash M \colon  \rho$ and $x\colon\rho \vdash N \colon  \tau$, and value formulas $\phi \in \textit{VF}(\tau)$ and $\{\psi_i \in   \textit{VF}(\rho)\}_{i \in I}$, the following implication holds:
	$$ \forall i \in I  \, (\, M \models o_i\, \psi_i ~ \wedge ~ \lambda x:\!\rho. \,N \models \psi_i \mapsto o'_i\, \phi\,) \quad \implies \quad \textbf{let } M \!\Rightarrow \!x \textbf{ in } N \models o\, \phi \enspace .$$
\end{definition}

The modalities $o$ and $\{(o_i,o'_i)\}_{i \in I}\,$
required to specify a \textbf{let} proof rule are reminiscent of the definition of strong decomposability~\ref{def:strong}. In fact, there is a precise connection. Whenever $\Obser$
 is a strongly decomposable set of Scott open modalities, the set of sound \textbf{let} proof rules is 
 complete in the sense of the proposition below, which informally states that every
 true modal property of  a \textbf{let} term can be proved from properties of its component terms by means of a
 suitably chosen  sound \textbf{let} proof rule.

\begin{proposition}\label{prop:let_proof}
	Suppose that $\Obser$ is a strongly decomposable set of Scott open modalities. Suppose also that  $\textbf{let } M' \Rightarrow x \textbf{ in } N' \models o\, \phi'$ holds (where 
	$\vdash M' \colon \rho'$ and $x \colon \rho' \vdash N' \colon \tau'$).
	Then there is  a sound \textbf{let} proof rule~(\ref{equation:let-rule}) for $o$, with premise modalities $\{(o_i,o'_i)\}_{i \in I}\,$, such that 
	$$\forall i \in I \, (\, M'  \models o_i\, \psi'_i  ~   \wedge ~ \lambda x\colon\rho'. \, N' \models \psi'_i \mapsto o'_i\, \phi' \, ) \enspace , $$
	for some choice of  value formulas $\{\psi'_i  \in  \textit{VF}(\rho')\}_{i \in I}\,$.
\end{proposition}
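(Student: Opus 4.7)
The plan is to apply strong decomposability (Definition~\ref{def:strong}) directly to the computation tree of the given \textbf{let} expression. The key observation is that, as is standard for monadic interpretations of sequencing, $|\textbf{let } M' \Rightarrow x \textbf{ in } N'| = \mu r$, where $r := |M'|[V \mapsto |N'[V/x]|] \in TT(\textit{Val}(\tau'))$. Setting $K := \{V \in \textit{Val}(\tau') \mid V \models \phi'\}$ and $r^\ast := r[t \mapsto \tis{t}{K}] \in TT\UnitType$, one has $\mu r^\ast = \tis{(\mu r)}{K}$, so the hypothesis $\textbf{let } M' \Rightarrow x \textbf{ in } N' \models o\,\phi'$ translates to $\mu r^\ast \in \denote{o} = o(\{\ast\})$.

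Apply strong decomposability to $\mu r^\ast \in o(\{\ast\})$ to obtain a collection $\{(o_i, o'_i)\}_{i \in I}$ of pairs of modalities such that (i) $r^\ast \in o_i(o'_i(\{\ast\}))$ for every $i$, and (ii) for any $r^{\sharp} \in TT\UnitType$, if $r^{\sharp} \in o_i(o'_i(\{\ast\}))$ for all $i$, then $\mu r^{\sharp} \in o(\{\ast\})$. This family determines the premise modalities of the candidate \textbf{let} rule~(\ref{equation:let-rule}); what remains is to produce matching premise formulas $\psi'_i$ for $M', N', \phi'$ and to verify soundness of the rule.

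For each $i \in I$, set $A_i := \{V \in \textit{Val}(\rho') \mid N'[V/x] \models o'_i\,\phi'\}$. Since $\Obser$ is decomposable and Scott-open, Theorem~\ref{Main_Theorem} gives compatibility of $\equiv_{\LogNeg}$, so whenever $V \equiv_{\LogNeg} W$ we also have $N'[V/x] \equiv_{\LogNeg} N'[W/x]$; hence $A_i$ is closed under logical equivalence, and by Lemma~\ref{pre_form} we may take $\psi'_i := \bigvee_{V \in A_i} \chi_V$, so that $\{W \mid W \models \psi'_i\} = A_i$. One then checks that $M' \models o_i\,\psi'_i$ unwinds, through the definitions, to $r^\ast \in o_i(o'_i(\{\ast\}))$, which is exactly (i); and that $\lambda x\colon\!\rho'.\,N' \models \psi'_i \mapsto o'_i\,\phi'$ unfolds via the reduction $(\lambda x.N')V \rightsquigarrow N'[V/x]$ to the tautological statement ``$V \in A_i$ implies $N'[V/x] \models o'_i\,\phi'$''.

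Soundness of the rule follows by the same monadic bookkeeping. Given arbitrary $M, N, \phi, \{\psi_i\}$ satisfying the premises, let $B_i := \{V \mid V \models \psi_i\}$, $K := \{V \mid V \models \phi\}$, and $r^{\sharp} := |M|[V \mapsto \tis{|N[V/x]|}{K}]$, so that $\mu r^{\sharp} = \tis{|\textbf{let } M \Rightarrow x \textbf{ in } N|}{K}$. The premise $\lambda x.N \models \psi_i \mapsto o'_i\,\phi$ gives $B_i \subseteq \{V \mid N[V/x] \models o'_i\,\phi\}$, whence $\tis{r^{\sharp}}{\denote{o'_i}} \geq \tis{|M|}{B_i}$; since $\tis{|M|}{B_i} \in \denote{o_i}$ by $M \models o_i\,\psi_i$, upwards closure of the Scott-open modality $o_i$ yields $r^{\sharp} \in o_i(o'_i(\{\ast\}))$. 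Applying clause (ii) of strong decomposability then gives $\mu r^{\sharp} \in o(\{\ast\})$, i.e.\ $\textbf{let } M \Rightarrow x \textbf{ in } N \models o\,\phi$, as required. The only genuinely subtle point is the appeal to Theorem~\ref{Main_Theorem} to realise each $A_i$ as the extension of a single value formula; the rest of the argument is routine translation between the operational/monadic and logical presentations.
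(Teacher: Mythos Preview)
Your proposal is correct and follows essentially the same approach as the paper's proof: apply strong decomposability to the double tree $|M'|[V \mapsto \tis{|N'[V/x]|}{K}]$, then verify soundness via upwards closure together with clause~(2) of strong decomposability, and verify completeness by taking $\psi'_i$ to be the disjunction of characteristic formulas $\chi_V$ over those $V$ with $N'[V/x] \models o'_i\,\phi'$, using compatibility of the logical equivalence to ensure this disjunction has exactly $A_i$ as its extension. The only cosmetic difference is that the paper phrases the compatibility appeal via the pure-logic translation of $\chi_V$ (Proposition~\ref{proposition:equi-expressive}) rather than directly via Theorem~\ref{Main_Theorem}, but the underlying content is the same.
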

\begin{proof}
	Assume that $\Obser$ is strongly decomposable and $\textbf{let } M' \Rightarrow x \textbf{ in } N' \models o\, \phi'$.
	By the operational semantics, $|\textbf{let } M' \Rightarrow x \textbf{ in } N'| = \mu(|M'|[V \mapsto |N'[V/x]|])$, so 
	$$\mu(|M'|[V \mapsto |N'[V/x]|[\models \phi']]) ~ = ~ \mu(|M'|[V \mapsto |N'[V/x]|])[\models \phi'] ~  \in~  o(\{\ast\}) \enspace .$$
	By strong decomposability, there is a collection $\{(o_i, o'_i)\}_{i \in I}$ of pairs of modalities such that: (1) $\forall i \in I, |M'|[V \mapsto |N'[V/x]|[\models \phi']] \in o_i(o'_i(\{\ast\}))$, and (2) $\forall r \in TT\UnitType$, if $\forall i \in I, r \in o_i(o'_i(\{\ast\}))$ then $\mu r \in o(\{\ast\})$.
	We prove that this collection $\{(o_i, o'_i)\}_{i \in I}$ specifies the desired \textbf{let} proof rule for $o$.
	
	\emph{Soundness:} To prove that the proof rule is sound,
	consider types  $\rho$ and $\tau$, terms $\vdash M \colon \rho$ and $x \colon \rho \vdash N \colon \tau$, and value formulas $\phi \in \textit{VF}(\tau)$ and $\{\psi_i \in \textit{VF}(\rho)\}_{i \in I}$
	such that: 
	$$ \forall i \in I \, ( M\models o_i\, \psi_i ~ \text{and}~  \lambda x\colon\rho. \,N \models \psi_i \mapsto o'_i\, \phi\,)
	\enspace . $$
	Let $r = |M|[V \mapsto |N[V/x]|[\models \phi]] \in TT\UnitType$. For any $i \in I$ and $V$ such that 
 	 $V \models \psi_i$, we have $|N[V/x]|[\models \phi] \in o'_i(\{\ast\})$. We have by 
	monotonicity and because $|M|[\models \psi_i] \in o_i(\{\ast\})$, that $r \in o_i(o'_i(\{\ast\}))\,$.
	This holds for all $i \in I$, so by property (2) of strong decomposability (given above), $\mu r \in o(\{\ast\})$, hence $|\textbf{let } M \Rightarrow x \textbf{ in } N|[\models \phi] \in o(\{\ast\})$ and we conclude that $\textbf{let } M \Rightarrow x \textbf{ in } N \models o\,\phi$.
	
	\emph{Completeness:} We now prove that  suitable value formulas $\{\psi'_i  \in  \textit{VF}(\rho')\}_{i \in I}\,$ exist, allowing us to use the rule as a method for proving that $\textbf{let } M' \Rightarrow x \textbf{ in } N' \models o\, \phi'$.
	For any $i \in I$, define $$\psi'_i := \bigvee\{\chi_V \mid N'[V/x] \models o'_i\, \phi'\} \enspace , $$ where $\chi_V$ is the characteristic function for $V$, as given by Lemma \ref{pre_form}, but translated into the pure logic $\PureLogNeg$ using Proposition~\ref{proposition:equi-expressive}.
	So if $W \models \chi_V$ then $V \PreGen W$, and since $N'[V/x] \models o'_i\, \phi'$ and $\PreGen$ is compatible, $N'[W/x] \models o'_i\, \phi'$.
	Hence, whenever $W \models \psi'_i$, we have that $N'[W/x] \models o'_i\, \phi'$, so $\lambda x\colon \rho'. \, N \models \psi'_i \mapsto o'_i\, \phi'$.
	
	By the definition of $\psi'_i$ and the observation above, it holds that, for any closed value term $W\colon \rho'$,  $W \models \psi$ if and only if $|N'[W/x]|[\models \phi'] \in o'_i(\{\ast\})$.
	By property (1) above of strong decomposability,  $|M'|[V \mapsto |N'[V/x]|[\models \phi']] \in o_i(o'_i(\{\ast\}))$, so $|M'|[\models \psi'] \in o_i(\{\ast\})$, and hence $M' \models o_i(\psi')$.
\end{proof}

\noindent
We remark that Proposition \ref{prop:let_proof} also works if we restrict to the logic $\PureLogPos$ of positive pure formulas.

We illustrate Proposition \ref{prop:let_proof}, by giving examples of complete collections of sound \textbf{let} proof rules in the case of two of our example effects: global store, and probabilistic choice. For global store, such a collection is given by \textbf{let} proof rules of the form:
$$
\frac{~M \models  (s \rightarrowtail s'')\,\psi  \quad \lambda x \colon \! \rho. \, N \models \psi \mapsto (s'' \rightarrowtail s') \, \phi ~}
{\textbf{let } M \Rightarrow x \textbf{ in } N \models (s \rightarrowtail s')\, \phi}
$$
For probabilistic choice, such a collection of rules is:
$$
\frac{~\{ \, M \models \mathsf{P}_{> a_i} \,\psi_i  \quad \lambda x \colon \! \rho. \, N \models \psi_i \mapsto \mathsf{P}_{> b_i}\, \phi \, \}_{i = 1}^{n}~}
{\textbf{let } M \Rightarrow x \textbf{ in } N \models \mathsf{P}_{> q}\, \phi}
$$
where each rule in the collection is specified by some $n \geq 1$ and rationals $0 < a_0 < \dots < a_n < 1$ and $1 > b_0 > \dots > b_n > 0$ satisfying
$a_0 b_0 + \sum_{i=1}^n (a_i-a_{i-1})b_i  \geq q\,$. Completeness can be derived from
Proposition \ref{prop:let_proof}, using the fact that such collections $\{(\mathsf{P}_{> a_i}, \mathsf{P}_{> b_i})\}_{i = 1}^{n}$ arise in the proof of strong decomposability for the probability modalities in Section~\ref{section:preorder}.

The above discussion concerns compositional proof rules for the \textbf{let} construction of the programming language. Compositional proof rules are available for the other program constructors too. For example, it turns out to be a consequence of Proposition \ref{prop:let_proof} that compositional proof rules are available for the algebraic effect operations. Indeed such rules can be derived from the  \textbf{let} proof rules by recasting the effect operations in terms of 
their \emph{generic effects} in the sense of Plotkin and Power \cite{effect}; for example for a binary operation:
\[
\sigma(M,N) ~~ \equiv ~ ~ \textbf{let } \sigma(\overline{0},\overline{1}) \Rightarrow  z \textbf{ in }  \textbf{case } z \textbf{ of } 
\{Z \Rightarrow M;\, S(x) \Rightarrow N\} \enspace .
\]
Carrying this out in full generality would be excessively technical for this late point in the paper. So instead we illustrate the end result by giving examples of the proof rules one obtains for a selection of the effect operations in our running examples. 

In the case of nondeterminism,  the only effect operation is nondeterministic choice.
For this, one obtains separate rules for the $\Diamond$ and $\Box$ modalities.
\[
\frac{ M \models \Diamond \phi}{\textit{or}(M, N) \models \Diamond \phi} 
\qquad
\frac{ N \models \Diamond \phi}{\textit{or}(M, N) \models \Diamond \phi} 
\qquad
\frac{ M \models \Box \phi  \quad N \models \Box \phi }{\textit{or}(M, N) \models \Box \phi} 
\]
For a second example, in the case of global store, one obtains the following rules for the
$\textit{lookup}_l$ and  $\textit{update}_l$ operations.
\[
\frac{\lambda  x\colon \NatType.\, M \models \{s(l)\} \mapsto (s \rightarrowtail s')\, \phi}{\textit{lookup}_l(\lambda x\colon \NatType.\, M) \models (s \rightarrowtail s')\, \phi}
\qquad
\frac{V \models \{n\} \quad M \models (s[l:=n] \rightarrowtail s')\, \phi}{\textit{update}_l(V; M) \models (s \rightarrowtail s')\, \phi}
\]
In all cases the rules are sound as expected, and also complete in the following sense.  Every valid assertion stating that a program whose outermost constructor is an effect operation satisfies a modal property can be derived by one of the compositional proof rules. 
For example, if it is true that $\textit{update}_l(V; M) \models (s \rightarrowtail s')\, \phi$ then there exists $n$ such that $V \models \{n\}$ and $M \models (s[l:=n] \rightarrowtail s')\, \phi$. (Indeed, let $n$ be such that $V$ is is the  numeral $\overline{n}$.)

There is much that is preliminary in the above account of compositional reasoning principles. As we have already discussed, for reasoning to be practical one needs an expressive finitary program logic, and one needs the compositional principles in the  proof rules to also be expressible using the finitary language. For example, in the case of global store, one would not use low-level modalities of the form $(s \rightarrowtail s')$ where $s,s'$ are individual states, but rather of the form $(P \rightarrowtail Q)$ where $P$ and $Q$ are  formulas in an appropriate logic for expressing general \emph{preconditions} and \emph{postconditions} constraining states. 
Also, one would reformulate the proof rules above to directly deal with such a practical language for preconditions and postconditions.
Nonetheless, the general point remains that such more practical constructions can be compiled down into 
our infinitary propositional logic and its primitive modalities. In the end, in the case of global store, one will end up with a logic looking very much like an extension of Hoare logic to a higher-order language (e.g., there will be some similarities with
Hoare Type Theory~\cite{HoareTT}).

Another issue we have not addressed in the discussion above is that, in general, the verification of interesting programs will involve proving properties of open terms (with free variables) rather than just the closed terms considered above. For this reason, one needs a proof system that deals with sequents of the form
\newcommand{\Colon}{{\,::\,}}
\[
x_1\Colon \phi_1, \, \dots \, , x_n \Colon \phi_n ~ \vdash ~ V \Colon \psi
\quad \text{and} \quad 
x_1\Colon \phi_1, \, \dots\, , \, x_n \Colon \phi_n ~ \vdash ~ M \Colon \Psi \enspace ,
\]
where we write  $V \Colon \phi$ and $M \Colon \Phi$ as syntactic assertions representing the statements
$V \models \phi$ and $M \models \Phi$. The sequents above state that properties $\psi$ and $\Psi$ hold of $V$ and $M$, conditional on properties $\phi_1, \dots, \phi_n$ of the  free variables. 
For example, with such sequents, there is then a natural proof rule for $\lambda$-abstraction:
\[
\frac  
  {x_1\Colon \phi_1, \, \dots \, ,\,  x_n \Colon \phi_n \, , \,  x\Colon \phi ~ \vdash ~ M  \Colon \Psi}
  {x_1\Colon \phi_1, \, \dots \, , \,x_n \Colon \phi_n ~ \vdash ~ \lambda x\colon\!\rho.\, M  \Colon \phi \mapsto \Psi} \enspace .
\]

Summarising the above considerations, while the infinitary modal propositional logic of the present paper is not itself a practical vehicle for specification and verification, it may potentially inform the design of expressive finitary logics suitable for practical specification and reasoning with effects.  In this regard, the compositional rules for the infinitary logic may provide templates for such rules in the practical logics. Moreover, the propositional logic suggests a general sequent-based format for verification, whose mature realisation should be something like an expressive  refinement-type system for algebraic effects, supporting compositional verification. The development of such a system is a goal for future research.

The discussion thus far has concerned reasoning to establish assertions of the form $M \models \Phi$ stating that an individual program satisfies a property of programs. Another important side of reasoning is establishing \emph{equivalences} $M \equiv N$ between programs. Although a main motivation for introducing the behavioural logic was as a means for \emph{defining} equivalence between programs, we do not primarily view it as a tool for establishing or reasoning about equivalences. Indeed, in regard to equivalences, the logic is most easily used as a 
means for establishing \emph{non-equivalences} between programs. In order to show that $M \not\equiv N$, one just needs to exhibit a property $\Phi$ that is satisfied by one of $M,N$ but not the other. 
Examples of such arguments were given in Examples~\ref{example:separation:A} and~\ref{example:separation:B}. Using the logic as a means for proving that two terms are equivalent is trickier since it requires showing that the terms satisfy the same set of logical formulas. Again, one example of such an argument has been given in Example~\ref{example:separation:A}, where two terms were shown to be equivalent with respect to mutual similarity, by showing that they satisfy the same positive formulas. We now give one further example of such an argument, as a means of illustrating a general point.

The general point is that the logical properties of programs, as captured by the modalities and their interaction with the effect operations, suffice to establish the usual equalities between algebraic terms comprising of effect operations, which are central to Plotkin and Power's approach to algebraic effects \cite{effect}.
For the present paper, we illustrate this using one of the equations for global store as an example.
For any value term $\emptyset \vdash \lambda x \colon\! \NatType.\, M  : \NatType \to \tau$, we prove the following equivalence between computation terms of type $\tau$:
\[
  \textit{update}_l(\overline{n}; \textit{lookup}_l(\lambda x \colon\! \NatType.\, M )) ~  \equiv ~  \textit{update}_l(\overline{n}; M\, [\overline{n}/X])\\
\]
By Lemma~\ref{comp_clas}, we need only check that both sides of the equation satisfy the same formulas of the form $(s \rightarrowtail s')\, \phi\,$. To show this, we argue using the proof principles discussed above for the $\textit{lookup}_l$ and  $\textit{update}_l$ operations. 
\begin{align*}
 \textit{update}_l &  (\overline{n}; \textit{lookup}_l(\lambda x \colon\! \NatType.\, M )) ~ \models ~
(s \rightarrowtail s')\, \phi
\\
\text{iff} ~~~  & \textit{lookup}_l(\lambda x \colon\! \NatType.\, M ) ~ \models ~ (s[l:=n] \rightarrowtail s') \, \phi
\\
 ~  \text{iff} ~~~ &   \lambda x \colon\! \NatType.\, M  ~ \models ~ \{n\} \mapsto (s[l:=n] \rightarrowtail s')\, \phi
\\
~  \text{iff} ~~~ &   M[\overline{n}/x]   ~ \models ~ (s[l:=n] \rightarrowtail s')\, \phi
\\
~ \text{iff} ~~~ & \textit{update}_l(\overline{n}; M\, [\overline{n}/X])~ \models ~ (s \rightarrowtail s')\, \phi
\end{align*}

The above derivation and others like it illustrate an alternative to taking equations between algebraic terms as the basic data defining the behaviour of algebraic effects, which is the mainstream approach in the theory of algebraic effects. Instead one can take the modal behavioural properties of effects and their associated proof principles as primitive, and then derive the equations. Indeed the latter is the natural approach if one views the equations as expressing empirical truths about computational behaviour rather than as \emph{a priori} axiomatic facts.

\section{Discussion and related work}
\label{section:conclusion}

In this paper we have introduced an infinitary propositional logic for expressing properties of programs with algebraic effects, with the principal aim of using the logic to \emph{define} behavioural equivalence for effectful programs. 
Although our logics exhibit certain similarities in form with the endogenous logic developed in
Abramsky's \emph{domain theory in logical form}~\cite{Abramsky91}, our motivation and approach are quite different. 
Whereas Abramsky shows the 
usefulness of an axiomatic approach to a finitary logic as a way of characterising denotational equality, the present paper shows that there is a similar utility in considering an infinitary logic from a semantic perspective (based on operational semantics) as a method for defining behavioural equivalence.

In the literature, logics for effects have primarily been considered for the purpose of reasoning about programs with effects. 
We review the main approaches to such logics.

Pitts' \emph{evaluation logic} was an early logic for general computational effects~\cite{Pitts91}. 
Evaluation logic has built-in $\Box$ and $\Diamond$ modalities, which can be given natural interpretations for some effects, but whose interpretation in the case of other effects is rather forced.
In the light of the general theory of modalities in the present paper, it would be natural to reconsider 
evaluation logic with 
the built-in $\Box$ and $\Diamond$ modalities replaced with 
effect-specific modalities, as in Section~\ref{section:logic}. It is left for future work to assess the extent to which this provides a useful basis for a program logic for effects.

The \emph{logic for algebraic effects}, of Plotkin and Pretnar~\cite{PlotkinPretnar08}, axiomatises effectful behaviour by means of an equational theory over a signature of effect operations, following the algebraic approach to effects advocated by Plotkin and Power~\cite{PlotkinPower02}. 
Although we also have effect operations in the present paper,  we do not take equational axioms as primitive. Instead, it is the choice of the set $\mathcal{O}$ of modalities that determines the equational properties that the effect operations satisfy, as discussed in Section \ref{section:reasoning}.
A different kind of modality is also present in the logic of Plotkin and Pretnar~\cite{PlotkinPretnar08}. 
Their
\emph{operation modalities} are generated by the signature $\Sigma$, with
each modality being associated with a single effect operation.
It is our view that the {behavioural} modalities of the present paper 
provide a more natural 
 vocabulary for
 specifying and reasoning about program behaviour, as discussed in Section \ref{section:reasoning}.
It is possible that a  reformulation of Plotkin and Pretnar's  logic for algebraic effects, using behavioural modalities instead of operation modalities, might provide a more usable framework for specification and verification of programs with algebraic effects.

A different approach to logics for effects has been proposed by 
Goncharov, Mossakowski and Schr{\"o}der~\cite{MossakowskiEA09,GoncharovSchroder13b}.
They assume 
a semantic setting in which the programming language is   rich enough to contain a \emph{pure fragment} that itself acts as a program logic. This approach is  very powerful for certain effects. For example, Hoare logic can be derived in the case of global store. However, it appears not to be as adaptable across as wide a range of effects as the approach of the present paper.

The $F^*$ project \cite{Swamy:2016} takes a complementary approach to reasoning about effects, under which an effectful programming language is embedded within a powerful general purpose theorem proving environment based on dependent type theory. The framework supports a generalisation of Hoare-logic-style reasoning, based on preconditions and postconditions, which is applicable to those effects that can be given associated \emph{Dijkstra monads} \cite{Swamy:2013, Swamy:2016, Dijkstra_Free, Malecha:2011}.
Indeed, an underlying philosophy of the $F^*$ project is to identify, for each eligible effect, the relevant form of precondition, postcondition and composition principle related to them. In contrast,
in the logic of the present paper, the precondition-postcondition style arises naturally for global store, as it
is built into the format of the store modality $(s  \rightarrowtail s')$; whereas other modalities provide different specification formats that nonetheless support compositional  proof methods, as
discussed in Section~\ref{section:reasoning}. In principle, this latter view is more general, as 
there is no \emph{a priori} reason to force specifications into a precondition-postcondition
format. In particular, for effects, such as input/output, in which there is observable mid-execution behaviour, it is not clear that such a format is the most useful (although it can be achieved; see, for example  Penninckx \textit{et al.} \cite{Penninckx:2015} and Maillard \textit{et al.} \cite{Dijkstra_All}).

Examples in Sections~\ref{section:contextual} and~\ref{section:reasoning} show how our logic can be used directly  for reasoning about program equivalence. Its main practical use in this regard is as a method of establishing \emph{non-equivalence}: the non-equivalence of $M$ and $N$ can be shown by exhibiting a single logical property that distinguishes between them. In this regard, there is a useful complementarity underpinning the coincidence of logical equivalence and bisimilarity. For, in contrast, a principal virtue of bisimilarity is that it provides practical reasoning techniques
for establishing equivalences between programs that do hold. (Such techniques include exhibiting bisimilarities and `up-to' methods \cite{Sangiorgi_book, Sangiorgi_book_2}.) 
The relationship with other techniques for reasoning about equivalences between applicative and effectful programs, such as \emph{logical relations} \cite{PittsAM:typor, Pitts00, Benton:2014, op_meta, Katsumata:2011, Hofmann2015}, which are often directed towards contextual equivalence, is a question for future investigation. 

In addition to the question of reasoning about equivalences and non-equivalences, 
the formulation of our logic as a language of formulas expressing behavioural properties,  raises the question 
of its potential usefulness as a vehicle for reasoning about \emph{properties of programs}. 
As discussed in Section~\ref{section:reasoning}, we view the  infinitary propositional logic of this paper as 
providing a low-level language, into which practical high-level finitary logics for expressing program {properties} can potentially be compiled. In order to be sufficiently expressive, such a logic may well incorporate full first-order classical logic (including negation), as well as logical constructs such as the definition of relations as least or greatest fixed points, all of which are translatable into the logic of the present paper. For such an expressive  logic, bisimilarity is likely to be the only program equivalence that supports the interchangeability property 
($M \models \phi$ and $M \equiv N$ imply $N \models \phi$) discussed in Section~\ref{section:introduction}.
We view the development of such high-level  logics and their compositional reasoning principles, aimed at practical specification and verification, as a 
particularly promising topic for future research. 

We end the paper with a brief discussion of some of the choices made in the present paper,  and of their extensions and limitations.
The work in this paper has been carried out for fine-grained call-by-value~\cite{CBV}, which is equivalent to call-by-value.
The definitions can, however,  be adapted to work for call-by-name, and even call-by-push-value~\cite{CBPV}. 
Similarly, they can be adapted to other type constructions such as sums, products, recursive types, and polymorphism, all of which will be addressed in the second authors forthcoming PhD thesis.  A more challenging direction for development, is to generalise the theory to permit
algebraic operations with more general arities than the possibilities allowed in the present paper.
Such a generalisation is needed to include effects such as local and/or higher-order store.
This is nontrivial at several levels. For example, the notion of effect tree requires nontrivial modification. Also,
the relevant notion of bisimilarity (so-called \emph{environmental bisimilarity}  \cite{Sangiorgi:2011, KOUTAVAS2011}) is much more subtle and probably
requires a logic whose formulas  express relations rather than properties.

The central 
notion of modality, in the present paper, was 
adapted from the notion of \emph{observation} in Johann \emph{et al.}~\cite{op_meta}. The crucial change is 
from sets of trees of type $\NatType$
(observations) to a set  of unit-type trees (modalities). This change allows value formulas to be lifted to computation formulas, 
analogously to  \emph{predicate lifting} in coalgebra~(cf., e.g., Jacobs~\cite{Jacobs2016}),
which is a key characteristic of our modalities. 
The properties of \emph{Scott-openness} and \emph{decomposability} play a similar role in 
the present paper to the role they play in Johann \emph{et al.}~\cite{op_meta}. However, the notion of decomposability for modalities (Definition~\ref{definition:decomposable})
is substantially more subtle than the corresponding notion for observations in Johann \emph{et al.}~\cite{op_meta}.

There are certain limitations to the theory of modalities in the present paper. For example, for the combination of probability and nondeterminism, one might naturally consider modalities $\Diamond \mathsf{P}_{>r}$ 
and $\Box \mathsf{P}_{>r}$  asserting the possibility and necessity of the termination probability exceeding $r$.
However, the decomposability property fails. (A related observation appears in Lopez and Simpson~\cite{Lopez}.)
In recent work, the second author has shown that this situation can be rescued by changing to a
 quantitative logic, with a corresponding
notion of quantitative modality \cite{Quantitative}. 

In Hasuo~\cite{HasuoGeneric}, a general framework for quantitative effect-specific modalities is proposed, where 
modalities are given by structure maps $TA \to A$ for algebras for a monad $T$, with $A$ being an object of (potentially quantitative) truth values. From a semantic perspective, the Scott-open modalities of the present paper 
can be  equivalently described as continuous functions from
$T\mathbb{S} \to \mathbb{S}$, where $\mathbb{S}$ is the two element Sierpinski-space domain $\{\bot,\top\}$.
In general, they are not, however, algebras for the tree monad $T$. We leave a thorough comparison of the
approach of the present paper (and its quantitative generalisation)  
with the proposed framework of Hasuo~\cite{HasuoGeneric} for future work.

One topic we have not touched upon in the paper is extending the programming language with 
\emph{handlers}, which are powerful control operators associated with algebraic effects \cite{Handlers2}. If added to the language in unrestricted form, handlers violate the usual equations associated with algebraic effects. Nevertheless, it is interesting to consider 
if our modalities can be used, on the one hand, to reason about handlers, and, on the other, to help constrain handlers to `safe' usages that respect the expected program equivalences. 
For example, various operators from concurrency can be defined using handlers \cite{Abadi_Plotkin, Glabbeek:2010}, and it would be interesting to see if this allows the usual Hennessy-Milner logic to
be subsumed in our setting.

\paragraph{Acknowledgements.~}
We thank Francesco Gavazzo, Aliaume Lopez and the anonymous conference and journal reviewers for helpful discussions and comments.

\bibliographystyle{plain}
{\footnotesize
\bibliography{biblio}
}

\appendix
\section{Howe's method proof}\label{section:proofs}

In this appendix, we give the details that were left out from the Howe's method proof in Section \ref{section:Howe}. 
The results in this appendix are indexed according to whether they were previously stated in Section \ref{section:Howe} or are only stated here.
Remember the definitions of the Howe closure given in Definition \ref{def:How}.
We first look at some preliminary results, mostly from Lassen \cite{Lassen}.

\LemmaHowe*

\begin{proof} This proof is adapted from \cite[Lemma 3.8.2]{Lassen}. We prove the properties separately.
	\begin{enumerate}
		\item Since $\mathcal{R}$ is reflexive, so is $\mathcal{R}^{\circ}$. Hence $\widehat{\mathcal{R}}^{\bullet} = \widehat{\mathcal{R}}^{\bullet} \textit{id} \subseteq \widehat{\mathcal{R}}^{\bullet} \mathcal{R}^{\circ} = \mathcal{R}^{\bullet}$.
		
		\item Note that the compatible refinement of a reflexive relation is reflexive (since the compatible refinement rules line up with the typing rules). Hence $\widehat{\mathcal{R}^{\bullet}}$ is reflexive, since $\mathcal{R}^{\bullet}$ is. So $\mathcal{R}^{\circ} = \textit{id} ~ \mathcal{R}^{\circ} \subseteq \widehat{\mathcal{R}^{\bullet}} \mathcal{R}^{\circ} = \mathcal{R}^{\bullet}$.
		
		\item This requires an induction on the shape of $A$ (which may be a value or a computation). If $\{x\} \vdash A \,\mathcal{R}^{\bullet}\, B$ then by \textbf{HC} or \textbf{HV} we have $\{x\} \vdash A \,\widehat{\mathcal{R}^{\bullet}}\, C$ and $\{x\} \vdash C \,\mathcal{R}^{\circ}\, B$. So we know $C[W] \,\mathcal{R}^{\circ}\, B[W]$. We need to prove, $A[V] \,\widehat{\mathcal{R}^{\bullet}}\, C[W]$. In each of the cases of $A$, $\{x\} \vdash A \,\widehat{\mathcal{R}^{\bullet}}\, C$ is derived from rule \textbf{Cn} for some number $n$. This rule has as its premise some sequence of relations $A_i \,\mathcal{R}^{\bullet}\, C_i$. By induction hypothesis we have $A_i[V] \,\mathcal{R}^{\bullet}\, C_i[W]$, this is also trivially true in the base cases $n \in \{1,2\}$ since then the sequence is empty. Using \textbf{Cn} we can then derive that $A[V] \,\widehat{\mathcal{R}^{\bullet}}\, C[W]$. Hence by \textbf{HV} or \textbf{HC} we get $A[V] \,\mathcal{R}^{\bullet}\, B[W]$. One can verify that this argument works for each of the cases of \textbf{Cn}.
	\end{enumerate}
\end{proof}

We can also say something about composing the Howe closure with the original relation, which by Lemma \ref{Rel_prop1} works well with our relator.
\begin{lemma}\label{Hprop2}
	If \,$\mathcal{R}$ is a preorder relation on closed terms, then we have:
	\begin{enumerate}
		\item If $A \,\mathcal{R}^{\bullet}\, B$ and $B \,\mathcal{R}^{\circ}\, C$, then $A \,\mathcal{R}^{\bullet}\, C$.
		\item For closed terms $M,N,K$ such that $|M| \,\mathcal{O}(\mathcal{R}^{\bullet})\, |N|$ and $|N| \,\mathcal{O}(\mathcal{R})\, |K|$ we have $|M| \,\mathcal{O}(\mathcal{R}^{\bullet})\, |K|$.
	\end{enumerate}
\end{lemma}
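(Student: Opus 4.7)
The plan is to prove the two parts in sequence, with part 1 feeding directly into part 2 via the relator properties of Lemma~\ref{Rel_prop1}. First I would observe that, since $\mathcal{R}$ is a preorder on closed terms, its open extension $\mathcal{R}^{\circ}$ inherits transitivity: given $\Gamma \vdash A \,\mathcal{R}^{\circ}\, B$ and $\Gamma \vdash B \,\mathcal{R}^{\circ}\, C$, for every closing substitution $\overrightarrow{V}:\Gamma$ we can compose $A[\overrightarrow{V}] \,\mathcal{R}\, B[\overrightarrow{V}]$ with $B[\overrightarrow{V}] \,\mathcal{R}\, C[\overrightarrow{V}]$, so $A \,\mathcal{R}^{\circ}\, C$.

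For part 1, I would unfold the defining equation $\mathcal{R}^{\bullet} = \widehat{\mathcal{R}^{\bullet}} \mathcal{R}^{\circ}$ given by rules (HV)/(HC). From $A \,\mathcal{R}^{\bullet}\, B$ we obtain an intermediate term $A'$ with $A \,\widehat{\mathcal{R}^{\bullet}}\, A'$ and $A' \,\mathcal{R}^{\circ}\, B$. Transitivity of $\mathcal{R}^{\circ}$, combined with the hypothesis $B \,\mathcal{R}^{\circ}\, C$, gives $A' \,\mathcal{R}^{\circ}\, C$, and a further application of (HV)/(HC) yields $A \,\mathcal{R}^{\bullet}\, C$.

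For part 2, restricted to closed terms $\mathcal{R}^{\circ}$ coincides with $\mathcal{R}$, so part 1 specialises to the inclusion $\mathcal{R}^{\bullet} \mathcal{R} \subseteq \mathcal{R}^{\bullet}$ on closed terms. I would then invoke the relator properties of Lemma~\ref{Rel_prop1} (which apply because the ambient modalities are assumed upwards closed in the Howe's method setting): property~2 gives $\mathcal{O}(\mathcal{R}^{\bullet})\,\mathcal{O}(\mathcal{R}) \subseteq \mathcal{O}(\mathcal{R}^{\bullet} \mathcal{R})$, and property~3 (monotonicity in the argument relation) then gives $\mathcal{O}(\mathcal{R}^{\bullet} \mathcal{R}) \subseteq \mathcal{O}(\mathcal{R}^{\bullet})$. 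Chaining these inclusions, from $|M| \,\mathcal{O}(\mathcal{R}^{\bullet})\, |N|$ and $|N| \,\mathcal{O}(\mathcal{R})\, |K|$ we conclude $|M| \,\mathcal{O}(\mathcal{R}^{\bullet})\, |K|$.

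There is no real obstacle here; the argument is essentially bookkeeping around the definition of the Howe closure and an appeal to the packaged relator lemma. The only small subtlety worth flagging explicitly is the need to rely on transitivity of $\mathcal{R}^{\circ}$, which depends on the hypothesis that $\mathcal{R}$ is a preorder, and on the upwards-closed hypothesis implicit in the use of Lemma~\ref{Rel_prop1}.
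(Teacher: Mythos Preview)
Your proposal is correct and follows essentially the same approach as the paper: for part~1 the paper argues via the relation-algebraic identity $\mathcal{R}^{\bullet}\mathcal{R}^{\circ} = (\widehat{\mathcal{R}^{\bullet}}\mathcal{R}^{\circ})\mathcal{R}^{\circ} \subseteq \widehat{\mathcal{R}^{\bullet}}\mathcal{R}^{\circ} = \mathcal{R}^{\bullet}$ using transitivity of $\mathcal{R}^{\circ}$, which is exactly your element-wise argument rephrased; for part~2 the paper simply says it follows from applying Lemma~\ref{Rel_prop1} to part~1, and you have correctly spelled out which relator properties (composition and monotonicity) are needed.
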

\begin{proof} We proof the properties individually. 
	\begin{enumerate}
		\item We use that $\mathcal{R}$ is transitive, hence $\mathcal{R}^{\circ}$ is transitive meaning $\mathcal{R}^{\circ} \mathcal{R}^{\circ} \subseteq \mathcal{R}^{\circ}$. Hence with $\mathcal{R}^{\bullet} = \widehat{\mathcal{R}}^{\bullet} \mathcal{R}^{\circ}$ we have $\mathcal{R}^{\bullet} \mathcal{R}^{\circ} = (\widehat{\mathcal{R}}^{\bullet} \mathcal{R}^{\circ}) \mathcal{R}^{\circ} \subseteq \widehat{\mathcal{R}}^{\bullet} \mathcal{R}^{\circ} = \mathcal{R}^{\bullet}$.
		
		\item This follows from applying Lemma \ref{Rel_prop1} to the previous statement.
	\end{enumerate}
\end{proof}

Lastly, we prove the result needed to deal with the $\mathcal{O}$-bisimilarity.

\LemmaSym*

\begin{proof}
	This proof is taken from \cite[Lemma 3.8.2(4)]{Lassen}.
	Looking at the compatible refinement rules, it is not difficult to see $\widehat{\mathcal{S}^{op}} = \widehat{\mathcal{S}}^{op}$ for any relation $\mathcal{S}$. From Lemma \ref{Hprop1} we know that $\mathcal{R}^{\circ} \subseteq \mathcal{R}^{\bullet}$, and $\mathcal{R}^{\bullet}$ is compatible hence $\widehat{\mathcal{R}^{\bullet *}} \subseteq \mathcal{R}^{\bullet *}$. So we get that:
	$$\widehat{\mathcal{R}^{\bullet * op}}\mathcal{R}^{\circ} = \widehat{\mathcal{R}^{\bullet * op}}\mathcal{R}^{\circ op} = \widehat{\mathcal{R}^{\bullet *}}^{op}\mathcal{R}^{\circ op} \subseteq$$
	$$\mathcal{R}^{\bullet * op}\mathcal{R}^{\circ op} \subseteq \mathcal{R}^{\bullet * op}\mathcal{R}^{\bullet op} = \mathcal{R}^{\bullet op *}\mathcal{R}^{\bullet op} =  \mathcal{R}^{\bullet op *} =  \mathcal{R}^{\bullet * op}$$
	Hence $\mathcal{R}^{\bullet * op}$ is a solution for $\mathcal{S}$ to the inclusion $\widehat{\mathcal{S}}\mathcal{R}^{\circ} \subseteq \mathcal{S}$. Since $\mathcal{R}^{\bullet}$ is the least solution, we have $\mathcal{R}^{\bullet} \subseteq \mathcal{R}^{\bullet * op}$. So if $A \,\mathcal{R}^{\bullet *}\, B$, then $A = C_0 \,\mathcal{R}^{\bullet}\, C_1  \,\mathcal{R}^{\bullet} \dots  \mathcal{R}^{\bullet}\, C_{n-1} = B$ for some choice of sequence $\{C_i\}_{i \in I}$, so we can derive that $A=C_0 \,\mathcal{R}^{\bullet * op}\, C_1  \,\mathcal{R}^{\bullet * op}\, \dots  \,\mathcal{R}^{\bullet * op}\, C_{n-1} = B$ meaning $A \,\mathcal{R}^{\bullet * op}\, B$. We conclude that $B \,\mathcal{R}^{\bullet *}\, A$, so $\mathcal{R}^{\bullet *}$ is symmetric.
\end{proof}

Now we specifically take $\mathcal{R}$ to be a pre-order $\mathcal{O}$-simulation $\sqsubseteq$. We assume $\mathcal{O}$ to be a decomposable set of Scott open modalities. The lemmas stated before are satisfied, hence we know that $(\sqsubseteq^{\circ}) \subseteq (\sqsubseteq^{\bullet})$ by Lemma \ref{Hprop1}. We prove that $\sqsubseteq^{\bullet}$ is an $\mathcal{O}$-simulation, starting with condition 1 of Definition \ref{definition:sim}:

\begin{lemma}\label{Nequ}
	If for $V, W:\NatType$ we have $V \sqsubseteq^{\bullet} W$, then $V = W$.
\end{lemma}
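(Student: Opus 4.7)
The plan is to proceed by induction on the structure of $V$. Since $V : \NatType$ is a closed value, it must be a numeral, i.e., either $Z$ or $S(V')$ for some closed $V' : \NatType$. In either case, unpacking $V \sqsubseteq^{\bullet} W$ via the defining equation $\mathcal{S} = \widehat{\mathcal{S}}\,\sqsubseteq^{\circ}$ of the Howe closure gives an intermediate value $U : \NatType$ with $V \,\widehat{\sqsubseteq^{\bullet}}\, U$ and $U \sqsubseteq^{\circ} W$. Since all the terms are closed, $U \sqsubseteq^{\circ} W$ is equivalent to $U \sqsubseteq W$.

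For the base case $V = Z$, the only compatible refinement rule capable of deriving $Z \,\widehat{\sqsubseteq^{\bullet}}\, U$ is rule \textbf{(C2)}, which forces $U = Z$. Hence $Z \sqsubseteq W$, and property 1 of Definition~\ref{definition:sim} (the simulation condition at $\NatType$) immediately gives $W = Z = V$.

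For the inductive step $V = S(V')$, the only applicable compatible refinement rule is \textbf{(C3)}, which forces $U$ to have the form $S(U')$ with $V' \sqsubseteq^{\bullet} U'$. The induction hypothesis then yields $V' = U'$, so $U = V$. From $V \sqsubseteq W$, the simulation condition at $\NatType$ again gives $W = V$, completing the induction.

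There is no real obstacle here: the argument is essentially a case analysis forced by the very restrictive shape of the compatible refinement rules at type $\NatType$ (only \textbf{(C2)} and \textbf{(C3)} apply), combined with the simulation clause 1 of Definition~\ref{definition:sim} which already pins down equality of numerals under $\sqsubseteq$.
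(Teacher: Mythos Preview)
Your proposal is correct and follows essentially the same approach as the paper's proof: unpack the Howe-closure rule \textbf{(HV)} to get an intermediate $U$ with $V \,\widehat{\sqsubseteq^{\bullet}}\, U$ and $U \sqsubseteq W$, then combine a structural induction on the numeral (using that only \textbf{(C2)} and \textbf{(C3)} are applicable) with simulation clause~1. The only cosmetic difference is that the paper applies the simulation clause first (collapsing $U = W$) and then does the induction on $V \,\widehat{\sqsubseteq^{\bullet}}\, W$, whereas you first use the induction to show $U = V$ and then apply the simulation clause to conclude $V = W$; both orderings are equally valid.
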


\begin{proof}
	For a value term $V$ of type $\NatType$, we have that $V = \overline{n}$ for some natural number $n$. Assume $\overline{n} ~ \sqsubseteq^{\bullet} W$, then for some $L$ we have $\overline{n} ~ \widehat{\sqsubseteq^{\bullet}} L$ and $L \sqsubseteq W$. The latter means $L = W$ by property 1 of $\mathcal{O}$-simulations, so $\overline{n} ~ \sqsubseteq^{\bullet} W \Rightarrow \overline{n} ~ \widehat{\sqsubseteq^{\bullet}} W$. If $n=0$ and hence $\overline{n} = Z$, then $\overline{n} ~ \widehat{\sqsubseteq^{\bullet}} W$ must have been derived from \textbf{C2}, so $W = Z = \overline{n}$. As an induction step, we assume $\overline{n} ~ \sqsubseteq^{\bullet} W \Rightarrow \overline{n} = W$ and $\overline{n+1} ~ \sqsubseteq^{\bullet} W'$. We derive $S(\overline{n}) ~ \widehat{\sqsubseteq^{\bullet}} W'$ which must be from \textbf{C3} hence $W'= S(K)$ where $\overline{n} ~ \sqsubseteq^{\bullet} K$. So by assumption $\overline{n} = K$ which means $\overline{n+1} = W'$. We can conclude by induction that $V = \overline{n} ~ \sqsubseteq^{\bullet} W \Rightarrow V = W$.
\end{proof}

So condition 1 of Definition \ref{definition:sim} holds. Condition 3 is satisfied by compatibility of $\sqsubseteq^{\bullet}$ (shown in Lemma \ref{Hprop1}), specifically by rule \textbf{C6} of compatibility. Condition 2 is the most difficult to prove and requires an induction on the reduction relation of terms. In the order of trees, $|M|_n \leq |M|_{n+1}$ and $|M| = \bigsqcup_n |M|_n$. So by the properties in Lemma \ref{Rel_prop2} we have:
$$|M| \,\mathcal{O}(\sqsubseteq)\, |N| \Leftrightarrow \forall n, |M|_n \,\mathcal{O}(\sqsubseteq)\, |N|$$

Now for the important result, following the techniques used in \cite{Relational}.

\KeyLemma*

\begin{proof}
	We do an induction on $n$. 
	
	\textbf{Base case}, where $n = 0$, which means $|A|_n = \bot$. So we have $|A|_0 = \bot \leq |B|$. By Lemma \ref{Hprop1} we know $\sqsubseteq^{\bullet}$ is reflexive, so using Lemma \ref{Rel_prop2} we get $|A|_0 \,\mathcal{O}(\sqsubseteq^{\bullet})\, |B|$.
	
	\textbf{Induction step} $(n+1)$. We assume as the \emph{Induction Hypothesis} that for any two terms $A'$ and $B'$, if $A' \sqsubseteq^{\bullet} B'$ and $k \leq n$ we have $|A'|_k \,\mathcal{O}(\sqsubseteq^{\bullet})\, |B'|$. We want to prove $|A|_{n+1} \,\mathcal{O}(\sqsubseteq^{\bullet})\, |B|$. 
	
	\noindent
	By \textbf{HC} in Definition \ref{def:How}, there is a $C$ such that $A \,\widehat{\sqsubseteq^{\bullet}}\, C$ and $C \sqsubseteq B$. Since $\sqsubseteq$ is an $\mathcal{O}$-simulation, we have $|C| \,\mathcal{O}(\sqsubseteq)\, |B|$. So we aim to prove that $|A|_{n+1} \,\mathcal{O}(\sqsubseteq^{\bullet})\, |C|$ such that by Lemma \ref{Hprop2} we get $|A|_{n+1} \,\mathcal{O}(\sqsubseteq^{\bullet})\, |B|$. To prove this, we do a case distinction on the shape of $A$ to derive: $\forall A, A \,\widehat{\sqsubseteq^{\bullet}}\, C \Rightarrow |A|_{n+1} \,\mathcal{O}(\sqsubseteq^{\bullet})\, |C|$.
	
	\begin{enumerate}
		\item $A = \textbf{return}(V) : \TypeOne$, we have $|A|_{n+1} = \eta(V)$ and $\textbf{return}(V) \,\widehat{\sqsubseteq^{\bullet}}\, C$. This is only possible from rule \textbf{C4}, meaning $C= \textbf{return}(W)$ and $V \sqsubseteq^{\bullet} W$. By Lemma \ref{Rel_prop3} we have $\eta(V) \,\mathcal{O}(\sqsubseteq^{\bullet})\, \eta(W)$, hence $|A|_{n+1} = |\textbf{return}(V)|_{n+1} = \eta(V) \,\mathcal{O}(\sqsubseteq^{\bullet})\, \eta(W) = |\textbf{return}(W)| = |C| $. That concludes this case.
		
		\item $A = (\lambda x : \TypeOne.M)\,V : \TypeTwo$, hence $|A|_{n+1} = |M[V/x]|_n$. We have $(\lambda x : \TypeOne.M)\,V \,\widehat{\sqsubseteq^{\bullet}}\, C$ which can only be concluded from \textbf{C6}. Hence $C = WU$ for some value terms $W$ and $U$, with $(\lambda x : \TypeOne.M) \sqsubseteq^{\bullet} W$ and $V \sqsubseteq^{\bullet} U$. Now $(\lambda x : \TypeOne.M) \sqsubseteq^{\bullet} W$ can only be from a combination of rule \textbf{HV} with \textbf{C5}, hence we have $\{x\} \vdash M \sqsubseteq^{\bullet} L$ and $\lambda x : \TypeOne.L \sqsubseteq W$. From $\{x\} \vdash M \sqsubseteq^{\bullet} L$, $V \sqsubseteq^{\bullet} U$ and Lemma \ref{Hprop1} we get $M[V/x] \sqsubseteq^{\bullet} L[U/x]$, hence by the induction hypothesis, $|M[V/x]|_n  \,\mathcal{O}(\sqsubseteq^{\bullet})\, |L[U/x]|$. Since $\lambda x : \TypeOne.L \sqsubseteq W$, we have $(\lambda x : \TypeOne.L)\,U \sqsubseteq WU = C$, hence with $|(\lambda x : \TypeOne.L)\,U| = |L[U/x]|$ it holds that $|L[U/x]|   \,\mathcal{O}(\sqsubseteq)\, |WU| = |C|$, so by Lemma \ref{Hprop2}, $|(\lambda x : \TypeOne.M)\,V]|_{n+1} = |M[V/x]|_n \,\mathcal{O}(\sqsubseteq^{\bullet})\, |C|$.
		
		\item $A = \textbf{fix}(F) : \TypeTwo \rightarrow \TypeOne$, we have $\textbf{fix}(F) \,\widehat{\sqsubseteq^{\bullet}}\, C$ which can only be concluded from \textbf{C7}, hence $C = \textbf{fix}(F')$ and $F \sqsubseteq^{\bullet} F'$. Take $f:  (\TypeTwo \rightarrow \TypeOne) \rightarrow ( \TypeTwo \rightarrow \TypeOne) \vdash N : \TypeTwo \rightarrow \TypeOne$ to be the following term:
		$$
		N := \textbf{return } \lambda x : \TypeTwo. \textbf{let } f(\lambda y : \TypeTwo.\textbf{let fix } f \Rightarrow z \textbf{ in } zy) \Rightarrow w \textbf{ in } wx
		$$
		This value term has been specifically chosen such that $\textbf{fix}(G)$ reduces in one step to $N[G/f]$ for any term $G$, so $|\textbf{fix}(G)|_{n+1} = |N[G/f]|_n$. We have $N \sqsubseteq^{\bullet} N$ by reflexivity, hence with $F \sqsubseteq^{\bullet} F'$ and Lemma \ref{Hprop1} we get $N[F]  \sqsubseteq^{\bullet} N[F'/f]$.  Hence $|\textbf{fix}(F)|_{n+1} = |N[F/f]|_n \,{\mathcal{O}(\sqsubseteq^{\bullet})}\, |N[F'/f]| = |\textbf{fix}(F')| = |C|$ by the induction hypothesis.
		
		\item $A = \textbf{case } V \textbf{ of } \{Z \Rightarrow M ; S(x) \Rightarrow N\} : \TypeOne$, then $A \,\widehat{\sqsubseteq^{\bullet}}\, C$ can only be concluded from \textbf{C8}, hence $C = \textbf{case } V' \textbf{ of } \{Z \Rightarrow M' ; S(x) \Rightarrow N'\}$ for some terms $V'$, $M'$, and $N'$ where $V \sqsubseteq^{\bullet} V'$, $M \sqsubseteq^{\bullet} M'$, and $\{x\} \vdash N \sqsubseteq^{\bullet} N'$. Since $V \sqsubseteq^{\bullet} V'$ we have $V = V'$ by Lemma \ref{Nequ}. We do a case distinction on $V$. 
		\begin{enumerate}
			\item If $V = Z = V'$, then $|A|_{n+1} = |M|_n$ by the reduction relation. Using the induction hypothesis on $M \sqsubseteq^{\bullet} M'$ we have $|M|_n \,\mathcal{O}(\sqsubseteq^{\bullet})\, |M'| = |\textbf{case } Z \textbf{ of } \{Z \Rightarrow M' ; S(x) \Rightarrow N'\}| = |C|$, so we are finished. 
			
			\item If $V = S(W) = V'$, then $|A|_{n+1} = |N[W/x]|_n$. By  $\{x\} \vdash N \sqsubseteq^{\bullet} N'$ we have $N[W/x] \sqsubseteq^{\bullet} N'[W/x]$ (Lemma \ref{Hprop1}), hence by the induction hypothesis, 
			
			$|A|_{n+1} = |N[W/x]|_n \,\mathcal{O}(\sqsubseteq^{\bullet})\, |N'[W/x]| = |C|$.
		\end{enumerate}
		\item $A = \textbf{let } M \Rightarrow x \textbf{ in } N : \TypeTwo$. Looking at the $\mathbf{let}$-term, one can observe that we have the following upper bound
		$$|\textbf{let } M \Rightarrow x \textbf{ in } N|_{n+1} \leq |M|_n[L \mapsto |N[L/x]|_n].$$
		The only derivation of $A \,\widehat{\sqsubseteq^{\bullet}}\, C$ is by \textbf{C9}, from which we know that $C = \textbf{let } M' \Rightarrow x \textbf{ in } N'$ for some terms $M'$ and $N'$, where $M \sqsubseteq^{\bullet} M'$ and $\{x\} \vdash N \sqsubseteq^{\bullet} N'$. Using the above observation we know that $|A|_{n+1} \leq |M|_n[L \mapsto |N[L/x]|_n]$ and $|C| = |M'|[L \mapsto |N'[L/x]|]$. Using property 1 of Corollary \ref{Rel_prop4}, we want to prove that $|M|_n[L \mapsto |N[L/x]|_n] \; \mathcal{O}(\sqsubseteq^{\bullet}) \; |M'|[L \mapsto |N'[L/x]|]$.
		
		Note that for all $L \sqsubseteq^{\bullet} L'$ it holds by $\{x\} \vdash N \sqsubseteq^{\bullet} N'$ and Lemma \ref{Hprop1} that $N[L/x] \sqsubseteq^{\bullet} N'[L'/x]$. Hence by the induction hypothesis: $|N[L/x]|_n \; \mathcal{O}(\sqsubseteq^{\bullet})\; |N'[L'/x]|$. Define function $f, g: \mathit{Val}(\TypeOne) \to T\mathit{Com}(\TypeTwo)$ where $f(L) := |N[L/x]|_n$ and $g(L') := |N'[L'/x]|$. We also have by the induction hypothesis on $M \sqsubseteq^{\bullet} M'$ that $|M|_n \,\mathcal{O}(\sqsubseteq^{\bullet})\, |M'|$. Take $t = |M|_n$ and $r = |M'|$. So we can use property 1 of Corollary \ref{Rel_prop4} on $f$, $g$, $t$, and $r$ to conclude that $|A|_{n+1} \leq$
		$$ |M|_n[L \mapsto |N[L/x]|_n] = t[L \mapsto f(L)] \; \mathcal{O}(\sqsubseteq^{\bullet}) \; r[L' \mapsto g(L')] = |M'|[L \mapsto |N'[L/x]|] = |C| \enspace .$$ 
		Using a combination of property 1 of Lemma \ref{Rel_prop2} to prove $|A|_{n+1} \,\mathcal{O}(\mathit{id})\, t[L \mapsto f(L)]$, and property 2 of Lemma \ref{Rel_prop1} we can derive $|A|_{n+1} \,\mathcal{O}(\sqsubseteq^{\bullet})\, |C|$.
		
		\item $A = \sigma(M_0,M_1,\dots,M_{m-1}) : \TypeOne$ where $\sigma: \alpha^{m} \rightarrow \alpha$. The statement $A \,\widehat{\sqsubseteq^{\bullet}}\, C$ can only follow from \textbf{CA}, hence $C = \sigma(M_0',M_1',\dots,M'_{m-1})$ where for all $i$, $M_i \sqsubseteq^{\bullet} M_i'$. By the induction hypothesis, $|M_i|_n \,\mathcal{O}(\sqsubseteq^{\bullet})\, |M_i'|$, hence $|A|_{n+1} = \sigma \{i \mapsto |M_i|_n\} \,\mathcal{O}(\sqsubseteq^{\bullet})\, \sigma \{i \mapsto |M_i'|\} = |C|$ by property 2 of Corollary \ref{Rel_prop4}.
		
		\item $A = \sigma(V) : \TypeOne$ where $\sigma: \alpha^{\NatType} \rightarrow \alpha$. The statement $A \,\widehat{\sqsubseteq^{\bullet}}\, C$ follows only from \textbf{CB}, hence $C = \sigma(V')$ where $V \sqsubseteq^{\bullet} V'$ meaning for any $\overline{m} : \NatType$ we have $V \overline{m} \sqsubseteq^{\bullet} V' \overline{m}$. By the induction hypothesis, $|V \overline{m}|_n \,\mathcal{O}(\sqsubseteq^{\bullet})\, |V' \overline{m}|$, hence $|A|_{n+1} = \sigma \{m \mapsto |V \overline{m}|_n\} \,\mathcal{O}(\sqsubseteq^{\bullet})\, \sigma \{m \mapsto |V' \overline{m}|\} = |C|$ by property 2 of Corollary \ref{Rel_prop4}.
		
		\item $A = \sigma(V,t) : \TypeOne$ where either $t = (M_0,\dots,M_{n-1})$ or $t = W$. Very similar to the previous cases, from either \textbf{CC} or \textbf{CD} we have $C = \sigma(V',t')$ where $t \sqsubseteq^{\bullet} t'$ and $V \sqsubseteq^{\bullet} V'$ hence $V = V' : \NatType$ by Lemma \ref{Nequ}. The rest follows by repeating the proof in the previous cases depending on what shape $t$ is.
	\end{enumerate}
\end{proof}

We can conclude that $M \sqsubseteq^{\bullet} N \Rightarrow |M|_n \,\mathcal{O}(\sqsubseteq^{\bullet})\, |N|$ for any $n$. Hence with $|M| = \sqcup_n |M|_n$ and Lemma \ref{Rel_prop2} we can finally conclude $|M| \,\mathcal{O}(\sqsubseteq^{\bullet})\, |N|$. Hence the relation $\sqsubseteq^{\bullet}$ satisfies all conditions of Definition \ref{definition:sim}, resulting in the following proposition.

\MainProp*

\begin{proof}
	By Lemma \ref{Nequ}, $\sqsubseteq^{\bullet}$ satisfies condition (1) of being an $\mathcal{O}$-simulation. 
	By Lemma \ref{lemma_key} and $|M| = \sqcup_n |M|_n$ for any term $M$, it satisfies condition (2). 
	By compatibility it satisfies condition (3).
	We conclude that $\sqsubseteq^{\bullet}$ is an $\mathcal{O}$-simulation.
\end{proof}

The Howe's method proof is finalised in Section \ref{section:Howe} with Theorem \ref{theorem:sim_com} A and B and their proofs.

\end{document}